\setlist[1]{labelindent=\parindent}
\setlist[enumerate]{label=(\arabic*)}
\setlist[1]{labelindent=\parindent}
\setlist[enumerate]{label=(\arabic*)}
\titleformat*{\subparagraph}{\itshape}
\titlespacing*{\subparagraph}{\parindent}{\parskip + 0.3em}{1em}
\definecolor{CWIBlue}{rgb}{0.38, 0.478, 0.96}
\definecolor{CWIContrGreen}{rgb}{0.654, 0.704, 0.462}
\definecolor{CWIDGreen}{rgb}{0.372, 0.494, 0.419}
\definecolor{CWILGreen}{rgb}{0.603, 0.717, 0.486}
\definecolor{CWIRed}{rgb}{0.886, 0.188, 0.262}
\newif\ifhidetodos
\newcommand{\cbox}[2][yellow]{%
  \fcolorbox{#1}{white}{\parbox{\dimexpr\linewidth-2\fboxsep}{\strut #2\strut}}%
}
\newcommand{\customtodo}[3]{\textcolor{#2}{\(\blacktriangledown\)}\marginnote{\raggedright \textcolor{#2}{\textbf{#1:} #3}}}
\newcommand{\customtododisplay}[3]{\noindent\cbox[#2]{\textcolor{#2}{\textbf{#1:} #3}}}
\newcommand{\customtodoinline}[3]{\textcolor{#2}{\textcolor{#2}{\textbf{#1:} #3}}}
\newcommand{\INDSTATE}[1][1]{\State\hspace{\algorithmicindent}}
    \renewcommand{\customtodo}[3]{}
    \renewcommand{\customtododisplay}[3]{}
    \renewcommand{\customtodoinline}[3]{}
\newtheorem{theorem}{Theorem}
\newtheorem{definition}{Definition}
\newtheorem{lemma}{Lemma}
\newtheorem{corollary}{Corollary}
\newtheorem{example}{Example}
\newtheorem{observation}{Observation}
\newtheorem{proposition}{Proposition}
\def\dd{\mathinner{.\,.}}
\newcommand{\cO}{\mathcal{O}}
\newcommand{\Output}{\mathrm{output}}
\newcommand{\ST}{\textsf{ST}}
\newcommand{\sd}{\textsf{sd}}
\newcommand{\str}{\textsf{str}}
\newcommand{\occ}{\textsf{occ}}
\newcommand{\LCA}{\textsf{LCA}\xspace}
\newcommand{\SM}{\textsf{SM}\xspace}
\newcommand{\NCSMQ}{\textsf{SM}\texttt{+}\xspace}
\newcommand{\WebKB}{\textsc{WebKB}\xspace}
\newcommand{\News}{\textsc{News}\xspace}
\newcommand{\Genes}{\textsc{Genes}\xspace}
\newcommand{\VIR}{\textsc{Vir}\xspace}
\newcommand{\DRQ}{$1$-\textsf{DR}\xspace}
\newcommand{\kDRQ}{$k$-\textsf{DR}\xspace}
\newcommand{\WebKBsample}{\textsc{WebKB-sam}\xspace}
\newcommand{\Genessample}{\textsc{Genes-sam}\xspace}
\newcommand{\Movielens}{\textsc{Movielens}\xspace} \newcommand{\BookCrossing}{\textsc{Book-Crossing}\xspace}
\newcommand{\Alibaba}{\textsc{Alibaba}\xspace}
\renewcommand{\epsilon}{\varepsilon}
\newcommand{\CSQ}{\textsf{CQS}\xspace}
\newcommand{\BAI}{\textsf{BA1}\xspace}
\newcommand{\BAII}{\textsf{BA2}\xspace}
\newcommand{\BAIII}{\textsf{BA3}\xspace}
\newcommand{\SCM}{\textsf{SCM}\xspace}
\newcommand{\CPM}{\textsf{UPM}\xspace} 
\newcommand{\RM}{\textsf{RM}\xspace} 
\newcommand{\DM}{\textsf{DM}\xspace}
\newcommand{\INFL}{\textsc{INFL}\xspace}
\newcommand{\HUM}{\textsc{HUM}\xspace}
\newcommand{\SARS}{\textsc{SARS}\xspace}
\newcommand{\rulesep}{\unskip\ \vrule\ }
\newcommand{\defproblem}[3]{
\vspace{2mm}
\noindent\fbox{
   \begin{minipage}{0.96\columnwidth}
   \textsc{#1}\\
   {\bf{Input:}} #2  \\
   {\bf{Output:}} #3
   \end{minipage}
   }
   \vspace{-2mm}
}
\newcommand{\defDSproblem}[3]{
\vspace{2mm}
\noindent\fbox{
   \begin{minipage}{0.96\columnwidth}
   \textsc{#1}\\
   {\bf{Preprocess:}} #2  \\
   {\bf{Query:}} #3
   \end{minipage}
   }
   \vspace{2mm}
}
\newcommand{\seq}[1]{\texttt{\shortstack[l]{#1}}}
\renewcommand{\thefootnote}{\fnsymbol{footnote}}
\title{Subtree Mode and Applications}
\author[1]{Jialong Zhou\thanks{The first two authors contributed equally to this work.}} 
\author[2,3]{Ben Bals\protect\footnotemark[1]} 
\author[3]{Matei Tinca}
\author[1]{Ai Guan}
\author[1]{\\Panagiotis Charalampopoulos}
\author[1]{Grigorios Loukides}
\author[2,3]{Solon P.\ Pissis}
\affil[1]{King's College London, London, UK \quad \textsuperscript{2}CWI, Amsterdam, The Netherlands \quad 
\textsuperscript{3}Vrije Universiteit, Amsterdam, The Netherlands}
\date{\vspace{-.5cm}}
\begin{document}
\maketitle

\renewcommand{\thefootnote}{\arabic{footnote}}

\begin{abstract}
The \emph{mode} of a collection of values (i.e., the most frequent
value in the collection) is a key summary statistic.
Finding the mode in a given \emph{range} of an array of values is thus of great importance, and constructing a data structure to solve this problem is in fact the well-known \emph{Range Mode} problem.
In this work, we introduce the \emph{Subtree Mode} (\SM) problem, the analogous problem in a \emph{leaf-colored tree}, where the task is to compute the most frequent color in the leaves of the subtree of a given node.
\SM is motivated by several applications in domains such as text analytics and  biology, where the data are hierarchical and can thus be represented as a (leaf-colored) tree. 
Our central contribution is a time-optimal algorithm for \SM that computes the answer for every node of an input $N$-node tree in $\cO(N)$ time.
We further show how our solution
can be adapted for \emph{node-colored}  trees, or for computing the $k$ most frequent colors, for any given $k =\cO(1)$, in the optimal $\cO(N)$ time.
Moreover, we prove that a similarly fast solution for 
when the input is a sink-colored directed acyclic graph instead of a leaf-colored tree is highly unlikely. Our experiments on real datasets with trees of up to $7.3$ billion nodes demonstrate that our algorithm is faster than baselines 
by at least one order of magnitude and much more space efficient.
They also show that it is effective in pattern mining, sequence-to-database search, and biology applications.
\end{abstract}

\setlength{\skip\footins}{5pt}  %

\section{Introduction}\label{sec:intro}

A key summary statistic of a collection of values is its  \emph{mode} (i.e.,\,the most frequent value in the collection)~\cite{DBLP:books/mk/HanKP2011}.
Finding the mode in a given \emph{range} of values of an array (e.g., a window of a sequence) is thus of great importance. In fact,  constructing a data structure to solve this problem is the well-known \emph{Range Mode}  (\RM) problem~\cite{DBLP:journals/njc/KrizancMS05,DBLP:journals/corr/abs-1101-4068,DBLP:conf/icalp/GreveJLT10,DBLP:conf/soda/WilliamsX20}. For instance, \RM allows finding the most frequently purchased item over a certain time period~\cite{sea23}, the most frequent $q$-gram (i.e., length-$q$ substring) occurring in a genomic region~\cite{DBLP:journals/bioinformatics/AyadPP18}, or the most frequent value of an attribute in a range of database tuples~\cite{DBLP:books/mk/HanKP2011}.

\paragraph{The \SM Problem.}
We introduce a natural variant of \RM that asks for the 
most frequent color in the leaves of subtrees of a leaf-colored tree. For example, suppose that we want to tag the folders of a  filesystem with the most frequent file type (e.g., image, document, etc.) contained in them to provide quick visual clues about the prevalent folder contents.
The folder structure can be modeled as a tree $\mathcal{T}$ with each non-empty   
folder being an internal node, its subfolders being its children, 
and each file being a leaf (attached to the  containing folder's node).   
Furthermore, each leaf of $\mathcal{T}$ is colored based on the type of the file it models. The mode for a node $v$ in $\mathcal{T}$ gives us the most frequent file type in the folder corresponding to $v$. We call this problem \emph{Subtree Mode} (\SM) and define it below. 

\defDSproblem{Subtree Mode (\SM)}
{A rooted tree $\mathcal{T}$ on $N$ nodes with every leaf colored from a set $\{0, \ldots, \Delta-1\}$ of colors (integers).}
{Given a node $v$ of $\mathcal{T}$, output the most frequent color~$c^{\max}_v$ in the leaves of the subtree rooted at $v$ (breaking ties arbitrarily).}

For simplicity, we refer to $c^{\max}_v$ as the \emph{mode} of node $v$.

\paragraph{Motivation.} We start by motivating \SM from a theory standpoint.
\SM can be easily reduced to \RM and by using the state-of-the-art algorithm for \RM~\cite{DBLP:journals/corr/abs-1101-4068}, we obtain an $\cO(N\sqrt{N})$-time baseline for \SM (see Section~\ref{sec:baselines} for the  details). 
Unfortunately, there is strong evidence suggesting that a significantly faster algorithm for \RM is rather unlikely~\cite{DBLP:journals/corr/abs-1101-4068}. This gives rise to two fundamental questions: (Q1) \emph{Can we solve \SM directly and significantly faster?} (Q2) \emph{Can we solve problems that generalize \SM to more complex graph types efficiently}? 

We next motivate \SM from a practical standpoint.
\SM is motivated by several  
applications in domains such as text analytics and biology. We sketch some of these below. 

\emph{SNP-based Phylogenetic Tree Annotation.}~Single-nucleotide polymorphisms (SNPs) are genetic variations at specific nucleotide sites within a species' genome. SNPs are linked to diseases such as cancer, Alzheimer’s disease, and other inherited disorders~\cite{Kostem2011,pnas04}. Owing to their physical proximity to disease-associated loci, SNP alleles are frequently co-inherited with pathogenic variants across generations, reflecting the principle of genetic linkage~\cite{Syvanen2001AccessingGV}.
Phylogenetic trees show evolutionary relationships between organisms; a leaf represents an organism from which a DNA sequence has been obtained, and an internal node represents a common ancestor of all the organisms that correspond to its leaf descendants~\cite{phylobook}. The leaves are often annotated manually with categorical values related to SNPs (e.g., types of diseases~\cite{10.1371/journal.pone.0005022}), and each internal node with a value summarizing the values of its subtree~\cite{10193,plospathogens,pnas2025,10.1093/nar/gkae268}. 
We can model such a phylogenetic tree as $\mathcal{T}$ and color each leaf of $\mathcal{T}$ according to the SNP-related value of its corresponding leaf in the phylogenetic tree. 
Then, the mode of $v$ identifies the most prevalent SNP-related value among the group of organisms corresponding to~$v$, supporting interpretation and hypothesis generation about evolutionary processes~\cite{mbe25}. Thus, \SM can reveal significant associations, such as 
 clade‑specific markers~\cite{Coll2014} and genotype–phenotype links~\cite{10.1093/bioinformatics/btad215,Arora2017OriginSyphilis},  between subtrees and SNP-related values. This is valuable in studies of pathogen spread~\cite{plospathogens} and antibiotic resistance evolution~\cite{Arora2017OriginSyphilis}. For example, coloring the phylogenetic tree in~\cite[Fig. 3b]{10193} by  ``continent of origin'' and solving \SM identifies subtrees where Ethiopian samples are prevalent, which is central in the study of~\cite{plospathogens}. Similarly, coloring the phylogenetic tree in~\cite[Fig. S1]{10193} by  SNP state at selected genomic positions (respectively, by antibiotic resistance status) identifies subtrees where particular SNP states (respectively,  resistant samples) predominate in the data of the study of~\cite{Arora2017OriginSyphilis}, which is crucial for detecting genotype-phenotype links. There are widely-used systems for annotating phylogenetic tree nodes with  values (e.g.,~\cite{itol6,10193,icytree}) but one must manually compute the most frequent colors (i.e., solve \SM) and perform the annotation in these systems,  whereas our approach is automatic and efficient.

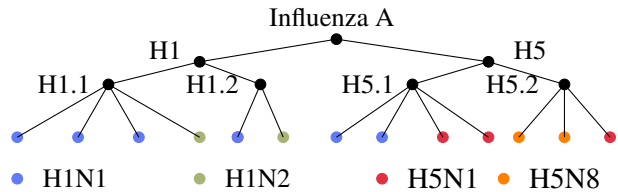
\begin{figure}[!ht]\centering
\scalebox{1.0}{
\begin{tikzpicture}[level distance=0.1cm, sibling distance=0.1cm]
\filldraw[black] (0,0) circle (2pt);\node[left] at (0.9,0.3) {\small{Influenza A}};
\filldraw[black] (-1.8,-0.3) circle (2pt); \node[left] at (-1.9,-0.15) {\small{H1}};
\filldraw[black] (2,-0.3) circle (2pt); \node[right] at (2.2,-0.15) {\small{H5}};
\filldraw[black] (-3,-0.6) circle (2pt);\node[left] at (-3.1,-0.6) {\small{H1.1}};
\filldraw[black] (-1,-0.6) circle (2pt);\node[left] at (-1.15,-0.6) {\small{H1.2}};
\filldraw[black] (1,-0.6) circle (2pt);\node[left] at (0.9,-0.6) {\small{H5.1}};
\filldraw[black] (3,-0.6) circle (2pt);\node[left] at (2.8,-0.6) {\small{H5.2}};

\filldraw[CWIBlue] (-4.2,-1.3) circle (2pt);
\filldraw[CWIBlue] (-3.4,-1.3) circle (2pt);
\filldraw[CWIBlue]   (-2.6,-1.3) circle (2pt);

\filldraw[CWIContrGreen] (-1.8,-1.3) circle (2pt);
\filldraw[CWIRed] (-1.3,-1.3) circle (2pt);

\filldraw[CWIBlue] (-1.3,-1.3) circle (2pt);
\filldraw[CWIContrGreen] (-0.7,-1.3) circle (2pt);

\filldraw[CWIBlue] (0,-1.3) circle (2pt);
\filldraw[CWIBlue] (0.6,-1.3) circle (2pt);
\filldraw[CWIRed]  (1.4,-1.3) circle (2pt);
\filldraw[CWIRed]  (2,-1.3) circle (2pt);

\filldraw[orange] (2.4,-1.3) circle (2pt);
\filldraw[orange] (3,-1.3) circle (2pt);
\filldraw[CWIRed]   (3.6,-1.3) circle (2pt);

\draw (0,0) -- (-1.8,-0.3);
\draw (0,0) -- (2,-0.3);
\draw (-1.8,-0.3) -- (-3,-0.6);
\draw (-1.8,-0.3) -- (-1,-0.6);
\draw (2,-0.3) -- (1,-0.6);
\draw (2,-0.3) -- (3,-0.6);

\draw (-3,-0.6) -- (-4.2,-1.3);
\draw (-3,-0.6) -- (-3.4,-1.3);
\draw (-3,-0.6) -- (-2.6,-1.3);
\draw (-3,-0.6) -- (-1.8,-1.3);

\draw (-1,-0.6) -- (-1.3,-1.3);
\draw (-1,-0.6) -- (-0.7,-1.3);

\draw (1,-0.6) -- (0,-1.3);
\draw (1,-0.6) -- (0.6,-1.3);
\draw (1,-0.6) -- (1.4,-1.3);
\draw (1,-0.6) -- (2,-1.3);

\draw (3,-0.6) -- (2.4,-1.3);
\draw (3,-0.6) -- (3,-1.3);
\draw (3,-0.6) -- (3.6,-1.3);

\begin{scope}[shift={(0,-0.55)}]
\filldraw[CWIBlue] (-4.2,-1.3) circle (2pt); \node[right] at (-4,-1.3) {\small{H1N1}};
\filldraw[CWIContrGreen] (-1.8,-1.3) circle (2pt);  \node[right] at (-1.6,-1.3) {\small{H1N2}};
\filldraw[CWIRed] (0.6,-1.3) circle (2pt); \node[right] at (0.8,-1.3) {H5N1};
\filldraw[orange] (2.2,-1.3) circle (2pt); \node[right] at (2.4,-1.3) {H5N8};
\end{scope}
\end{tikzpicture}}
\caption{SNP-based phylogenetic tree.}\label{fig:biological_example}
\end{figure}

\begin{example}
Fig.~\ref{fig:biological_example} shows an SNP-based phylogenetic tree alike the ones used to study the Influenza A virus in~\cite{10.1371/journal.pone.0005022}. 
The leaves are annotated with Influenza A subtypes (H1N1, H1N2, H5N1, and H5N8), and the internal nodes represent \emph{lineages} (subtrees sharing specific patterns of SNPs). An internal node with .1 and .2 corresponds to a subtype in the Eastern and Western Hemisphere, respectively. We model the tree in Fig.~\ref{fig:biological_example} as $\mathcal{T}$ in our \SM problem and assign colors $0$ (\textcolor{CWIBlue}{blue}), $1$ (\textcolor{CWIContrGreen}{green}), $2$  (\textcolor{CWIRed}{red}), and $3$ (\textcolor{orange}{orange}) to the leaves of subtype H1N1, H1N2, H5N1, and H5N8, respectively. Thus, the number of nodes $N$ of $\mathcal{T}$ is $20$, and the number of colors $\Delta$ is $4$. Given a query consisting of the node H1, \SM outputs $0$, as the subtree rooted at this node has more \textcolor{CWIBlue}{blue} leaves than \textcolor{CWIContrGreen}{green}.   
\end{example}

\emph{Top-$1$ Document Retrieval (\DRQ).}~In the \DRQ problem~\cite{DBLP:conf/soda/NavarroN12,DBLP:journals/jacm/HonSTV14, DBLP:journals/siamcomp/NavarroN17,DBLP:conf/soda/0001N25}, 
a collection $\mathcal{S}$ of $\Delta$ documents (strings), $S_0, \ldots, S_{\Delta-1}$, is given for preprocessing, and we are asked to answer queries of the following type: given a query pattern $P$, output the string in~$\mathcal{S}$ in which $P$ occurs most frequently as a substring.  
\DRQ can be reduced to \SM. We preprocess $\mathcal{S}$ by first constructing its \emph{suffix tree} (i.e., the compacted trie of the suffixes of the string $S_0\$_0\ldots S_{\Delta-1}\$_{\Delta-1}$, where $\$_i$, for each $i\in[0,\Delta)$, is a unique delimiter)~\cite{DBLP:conf/focs/Weiner73} and then coloring the leaves corresponding to the suffixes starting in $S_i\$_i$ with color $i$.
This is the leaf-colored tree in the constructed instance of \SM.
For a query pattern $P$, we spell $P$ on the suffix tree, arriving at a node $v$, compute the mode of $v$, say $i$, 
and output~$S_i$ as the answer. The example below illustrates how we solve \DRQ via \SM. 

 \begin{figure}[htbp]
\centering
\scalebox{1.0}{
 \begin{tikzpicture}[- , level distance=0.7cm]
\filldraw[black] (0,0) circle (2pt); 
\filldraw[black] (-0.01,-1.75) circle (2pt);
\node[left, font=\Large, draw, circle, inner sep=1pt] at (-0.05, -1.6) {$v$};

\filldraw[black] (2.6,-1.8) circle (2pt);
\node[right, font=\Large] at (2.6, -1.8) {$w$};

\filldraw[CWIBlue] (2.2,-3.6) circle (2pt); 
\filldraw[CWIRed] (3.5,-3.6) circle (2pt); 

\filldraw[black] (-1,-2.4) circle (2pt);  

\filldraw[black] (1,-2.4) circle (2pt);
\node[right, font=\Large] at (1.05, -2.4) {$u$};

\filldraw[CWIBlue] (-1.98,-4.21) circle (2pt); 
\filldraw[CWIBlue] (-0.3,-4.21) circle (2pt); 

\filldraw[CWIBlue] (-4.05,-1) circle (2pt); 

\filldraw[CWIRed] (-1,-1) circle (2pt); 
\filldraw[black] (1,-1) circle (2pt);  
\filldraw[black] (4.5,-1) circle (2pt);
\node[right, font=\Large] at (4.55, -1.01) {$x$};

\filldraw[CWIBlue] (0.5,-4.21) circle (2pt); 
\filldraw[CWIRed] (1.5,-4.21) circle (2pt);  

\filldraw[CWIBlue] (4,-3) circle (2pt); 
\filldraw[CWIRed] (5.5,-3) circle (2pt); 

  \draw (0,0)   -- (-4.0,-1) node[midway,sloped,above] {$\texttt{\$}_1$\texttt{baa}$\texttt{\$}_0$}; 
  \draw (0,0)   -- (-1,-1) node[midway,sloped,above] {$\texttt{\$}_1$};
  \draw (0,0)   -- (1,-1) node[midway,sloped,above] {$\texttt{a}$}; 
  \draw (0,0)   -- (4.5,-1) node[midway,sloped,above] {$\texttt{b}$}; 
  
  \draw (4.5,-1)   -- (4,-3.05) node[midway,sloped,above] {$\texttt{\$}_1\texttt{baa}\texttt{\$}_0$}; 
  \draw (4.5,-1)   -- (5.5,-3.05) node[midway,sloped,above] {
  $\texttt{\$}_1$}; 

    \draw (1,-1) -- (0,-1.7) node[midway,sloped,above] {$\texttt{a}$}; 
    \draw (1,-1)   -- (2.6,-1.8) node[midway,sloped,above] {$\texttt{b}$}; 

\draw (-0.01,-1.7)   -- (-1,-2.4) node[midway,sloped,above] {$\texttt{a}$}; 
\draw (-0.01,-1.7)   -- (1,-2.4) node[midway,sloped,above] {$\texttt{b}$};

\draw (1,-2.4)   -- (0.5,-4.26) node[midway,sloped,above] {$\texttt{\$}_1\texttt{baa}\texttt{\$}_0$}; 
\draw (1,-2.4)   -- (1.5,-4.26) node[midway,sloped,above] {$\texttt{\$}_1$};
         
        \draw (2.6,-1.8)   -- (2.2,-3.6) node[midway,sloped,above] {$\texttt{\$}_1\texttt{baa}\texttt{\$}_0$}; 
        \draw (2.6,-1.8)   -- (3.5,-3.6) node[midway,sloped,above] {$\texttt{\$}_1$}; 

              \draw (-1,-2.4) -- (-2,-4.26) node[midway,sloped,above] {
              $\texttt{\$}_1\texttt{baa}\texttt{\$}_0\texttt{ba}$}; 
              \draw (-1,-2.4) -- (-0.3,-4.26)node[midway,sloped,above] {$\texttt{b}\texttt{\$}_0\texttt{aab}\texttt{\$}_1$};  

\node[draw, rectangle, fill=white, inner sep=1pt] at (-4.1,-1.32) {5};
\node[draw, rectangle, fill=white, inner sep=1pt] at (-1,-1.3) {9};
\node[draw, rectangle, fill=white, inner sep=1pt] at (-2,-4.5) {0};
\node[draw, rectangle, fill=white, inner sep=1pt] at (-0.3,-4.5) {1};
\node[draw, rectangle, fill=white, inner sep=1pt] at (0.5,-4.5) {2};
\node[draw, rectangle, fill=white, inner sep=1pt] at (1.5,-4.5) {6};
\node[draw, rectangle, fill=white, inner sep=1pt] at (2.2,-3.9) {3};
\node[draw, rectangle, fill=white, inner sep=1pt] at (3.5,-3.9) {7};
\node[draw, rectangle, fill=white, inner sep=1pt] at (4,-3.3) {4};
\node[draw, rectangle, fill=white, inner sep=1pt] at (5.5,-3.3) {8};
\end{tikzpicture}}

\caption{The suffix tree of $S_0\$_0S_1\$_1=\texttt{aaaab}\texttt{\$}_0\texttt{aab}\texttt{\$}_1$.}\label{fig:suffixtree:intro}
\end{figure}

\begin{example}\label{example:drq}
Consider a collection of strings
$\mathcal{S}$ comprised of $S_0=\texttt{aaaab}$ and $S_1=\texttt{aab}$.
Fig.~\ref{fig:suffixtree:intro} shows the suffix tree of $S_0\$_0S_1\$_1$ with its leaves colored as follows: the leaves corresponding to the suffixes starting in $S_0\$_0$ 
are colored with $0$ (\textcolor{CWIBlue}{blue}) and the remaining ones with $1$ (\textcolor{CWIRed}{red}).  
For instance, the second leaf from the right  is colored $0$ (\textcolor{CWIBlue}{blue}) as its suffix $\texttt{b}\texttt{\$}_0\texttt{aab}\texttt{\$}_1$ starts in $S_0\$_0$.   
Consider the query pattern $P=\texttt{aa}$. By spelling~$P$ on the suffix tree, we arrive at node~$v$.
Assuming that we have a data structure for \SM, we obtain $0$, as $v$ has three leaves colored $0$ (\textcolor{CWIBlue}{blue}) and one leaf colored $1$ (\textcolor{CWIRed}{red}).
Then, we output~$S_0$ as the answer to the \DRQ query.
Indeed, $P$ occurs more often as a substring in $S_0$ compared to $S_1$.
\end{example}

\emph{Uniform Pattern Mining (\CPM)}.~
Consider two strings, one comprised of male-targeted ads and another comprised of female-targeted ads, and that we mine a pattern ``strong leader'' which occurs much more frequently in the former string. 
Ads with this pattern may perpetuate gender stereotypes and influence how opportunities or messages are presented to different ad viewer groups, causing discrimination and bias in
decision-making or perception. To prevent this,  
in the spirit of \emph{statistical parity~\cite{parity}}\footnote{This fairness measure requires the probability distributions of
outcomes to be similar across all subpopulations of a population.}, we can mine patterns with ``similar'' frequencies in all strings of an input collection by formulating and solving 
the following problem, called \emph{Uniform Pattern Mining} (\CPM):  
Given a collection~$\mathcal{S}$ of $\Delta$ strings, $S_0, \ldots, S_{\Delta-1}$, and an integer $\epsilon \geq 0$ specified based on domain knowledge, \CPM asks for all strings (patterns) whose frequencies in any pair of strings in~$\mathcal{S}$ differ by \emph{at most}~$\epsilon$.
When the strings in $\mathcal{S}$ represent different subpopulations of a user population and $\epsilon$ is ``small'', such  patterns prevent the discrimination of these subpopulations. On the other hand, when $\epsilon$ is ``large'', those patterns with large differences in their frequencies reveal behavioral preferences that prevail in user subpopulations (e.g., they may represent movie genres viewed by much more men than women).    
The \CPM problem is solved via a reduction to \SM.

\begin{example}\label{ex:UPM}
Consider a collection
$\mathcal{S}$ comprised of $S_0=\texttt{aaaab}$ and $S_1=\texttt{aab}$, and that $\epsilon=1$. 
The output of \CPM is $\{\texttt{aaaa},\texttt{aaaab},\texttt{aaab},\texttt{aab},
\texttt{ab},\texttt{b}\}$, as the difference between the frequency of each of these patterns in $S_0$ and in $S_1$ is at most $1$. For instance, the difference for $\texttt{ab}$ is $1-1\leq \epsilon$.  
\end{example}

\emph{Consistent Query String (\CSQ).}~In the \CSQ problem, a collection $\mathcal{S}$ of $\Delta$ strings, $S_0, \ldots, S_{\Delta-1}$, is given for preprocessing, and we are asked to
quantify how similar a query string $P$ is to the strings in $\mathcal{S}$.  
This can be achieved by counting the number of distinct $q$-grams $Q$ of $P$  whose frequency in $P$ is in the interval ${[\min_{S\in \mathcal{S}}|\occ_S(Q)|-\epsilon,} \max_{S\in \mathcal{S}}|\occ_S(Q)|+\epsilon]$,   where
$|\occ_S(Q)|$ is the frequency of $Q$ in string $S$ and $\epsilon$ is a user-specified parameter capturing approximate consistency. We call each such $q$-gram $\epsilon$-\emph{consistent} with $\mathcal{S}$.  
Clearly, $P$ is similar to the strings in $\mathcal{S}$ if most of its $q$-grams are $\epsilon$-consistent with $\mathcal{S}$.
The \CSQ problem is particularly
relevant for 
databases of highly-similar strings, which are common in genomics, as they are constructed over collections with a shared evolutionary history or a common function~\cite{shuy2017gisaid,DBLP:journals/nar/MistryCWQSSTPRR21}.
For example, the \CSQ problem can be used for identifying viral species, a core challenge in virology~\cite{vir1,vir2,vir3,vir4,vir5}. State-of-the-art methods for this task~\cite{vir3,vir4,vir5}  perform $q$-gram based sequence classification but need separate training for each desired $q$ and decide class membership for a pattern (e.g., SARS-Cov-2 vs. Influenza). On the contrary, a data structure for \CSQ is built once and queried for any $q$, and it provides numerical scores, which is more informative when sequences have errors, as is often the case in practice.

\begin{example}
Consider two DNA sequences,
one that is a genetic variant of SARS-Cov-2~\cite{ncbi_sars_cov_2_2024} and
another that is a genetic variant of the Influenza A virus~\cite{ncbi_influenza_a_2025}. 
Suppose that a biologist does not know whether each 
of these sequences is a genetic variant of SARS-CoV-2 and wants to check this.
They can use the first sequence as query $P_1$ in a database $\mathcal{S}$ comprised of $2,000$ different genetic variants of the SARS~CoV-2 virus~\cite{ncbi_sars_cov_2_2024}, and solve $\CSQ$ for $\epsilon=0$ and $q=4$.
They will find that $99.9\%$ of the $q$-grams of $P_1$ are $0$-consistent with $\mathcal{S}$, and conclude that~$P_1$ is very likely a genetic variant of SARS-Cov-2.
Then, if they repeat the same process with the second sequence as query~$P_2$, they will find that only $3.12\%$ of the $q$-grams of~$P_2$ are $0$-consistent with $\mathcal{S}$. 
Thus, they will conclude that $P_2$ is unlikely to be a genetic variant of SARS-Cov-2.  
\end{example}

\paragraph{Contributions.} In addition to introducing the \SM problem, our work makes the following specific contributions:

\begin{enumerate}

\item Our central contribution is the following theorem. 

\begin{theorem}\label{the:linear-time}
Given a tree $\mathcal{T}$ on $N$ nodes, we can construct, in $\cO(N)$ time, a data structure that can answer any \SM query in $\cO(1)$ time.
In particular, for a given node $v$, 
the query algorithm returns both the mode $c^{\max}_v$ and
its frequency~$f^{\max}_v$.
\end{theorem}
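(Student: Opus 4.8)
The plan is to bypass Range Mode bookkeeping entirely and to reduce \SM to a \emph{subtree-maximum} computation over a collection of weighted nodes of total linear size. Fix a color $c$ and let $\mathcal{L}_c$ be the set of leaves of $\mathcal{T}$ colored $c$, with $m_c=|\mathcal{L}_c|$; note that $\sum_c m_c$ equals the number of leaves of $\mathcal{T}$, hence $\sum_c m_c\le N$. Let $T_c$ be the \LCA-closure of $\mathcal{L}_c$ in $\mathcal{T}$, i.e.\ the smallest set of nodes containing $\mathcal{L}_c$ and closed under lowest common ancestors; it is classical that $|T_c|\le 2m_c-1$ and that $T_c$ is $\mathcal{L}_c$ together with the lowest common ancestors of pairs of leaves consecutive in the left-to-right order of $\mathcal{T}$. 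For $u\in T_c$, write $\mathrm{val}(c,u)$ for the number of color-$c$ leaves in the subtree rooted at $u$. The key claim I would prove is the identity
\[
 f^{\max}_v \;=\; \max\bigl\{\,\mathrm{val}(c,u)\;:\;c\in\{0,\dots,\Delta-1\},\ u\in T_c,\ u\text{ is in the subtree rooted at }v\,\bigr\},
\]
with $c^{\max}_v$ being a color attaining the maximum. The ``$\ge$'' direction is immediate: any such $u$ witnesses that color $c$ occurs at least $\mathrm{val}(c,u)$ times among the leaves below $v$. For ``$\le$'', put $c^\star=c^{\max}_v$ and let $u^\star$ be the lowest common ancestor of the color-$c^\star$ leaves lying in the subtree rooted at $v$; then $u^\star\in T_{c^\star}$ (it is a lowest common ancestor of two leaves of $\mathcal{L}_{c^\star}$, or a single such leaf), it lies in the subtree rooted at $v$, and since $u^\star$ is their lowest common ancestor the color-$c^\star$ leaves in the subtree rooted at $u^\star$ are exactly those below $v$, so $\mathrm{val}(c^\star,u^\star)=f^{\max}_v$.

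Granting the identity, I would build the certificate as follows. First run one DFS to fix the left-to-right order and to build an $\cO(N)$-space structure answering lowest-common-ancestor queries in $\cO(1)$ time. Bucketing the leaves by color in one pass yields, for each $c$, the list of leaves of $\mathcal{L}_c$ already sorted in left-to-right order; from this list the auxiliary tree on node set $T_c$ (ancestry induced from $\mathcal{T}$) is obtained by the standard stack-based construction using $\cO(m_c)$ lowest-common-ancestor queries and stack operations, and a postorder pass over that auxiliary tree computes all values $\mathrm{val}(c,u)$ in $\cO(m_c)$ time (value $1$ at an original leaf, and the sum of the children's values at an internal node of $T_c$). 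For every $u\in T_c$ I append the triple $(u,\mathrm{val}(c,u),c)$ to a list stored at node $u$. Summed over all colors this takes $\cO\bigl(N+\sum_c m_c\bigr)=\cO(N)$ time and produces $\sum_c|T_c|=\cO(N)$ triples in total.

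It then remains to aggregate. A single postorder traversal of $\mathcal{T}$ computes, for every node $v$, the pair $(f^{\max}_v,c^{\max}_v)$ as the maximum-weight triple among those stored at $v$ and the pairs already computed at the children of $v$; by the identity this is correct, and since every triple and every tree edge is handled exactly once it runs in $\cO(N)$ time. (Every node of $\mathcal{T}$ has at least one leaf descendant, so $f^{\max}_v\ge 1$ is always well defined.) Storing these $N$ pairs in an array indexed by node, a query on $v$ returns $c^{\max}_v$ and $f^{\max}_v$ by a single array access, in $\cO(1)$ time. In particular this answers question (Q1) affirmatively, improving on the $\cO(N\sqrt{N})$-time baseline obtained by reduction to Range Mode.

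The main obstacle is really twofold. The first is proving the displayed identity rigorously --- especially the ``$\le$'' direction, which hinges on recognizing the lowest common ancestor of the winning color class inside $v$'s subtree as an \LCA-closure node carrying precisely the value $f^{\max}_v$, and which has to cover the degenerate cases (the class having a single leaf below $v$, or $u^\star$ coinciding with $v$). The second is ensuring that the \LCA-closures of all $\Delta$ color classes are constructed within a single $\cO(N)$ budget: this needs the one global left-to-right order, so that each $\mathcal{L}_c$ comes out already sorted and no per-color sorting is incurred, together with an $\cO(N)$-construction, $\cO(1)$-query lowest-common-ancestor structure. Everything past these two points is a routine postorder aggregation.
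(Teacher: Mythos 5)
Your proposal is correct and follows essentially the same route as the paper: your \LCA-closures $T_c$ are exactly the paper's single-color trees $\mathcal{T}_0,\ldots,\mathcal{T}_{\Delta-1}$ (built the same way, from the color-bucketed global leaf order plus an $\cO(1)$-time \LCA structure), your $\mathrm{val}(c,u)$ computation is its counting step, and your postorder aggregation over $\mathcal{T}$ is its merging step. Your displayed subtree-maximum identity is just a global restatement of the paper's per-node correctness argument (the \LCA of the winning color class below $v$ is a closure node carrying exactly $f^{\max}_v$), so there is nothing substantively new or missing.
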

\Cref{the:linear-time} answers Q1 affirmatively.
The algorithm to construct the data structure in  \Cref{the:linear-time} computes the answer $(c_v^{\max}, f_v^{\max})$, for every node $v$ of the input tree $\mathcal{T}$. It works by first splitting $\mathcal{T}$ into a forest of $\Delta$ trees $\mathcal{T}_0,\ldots,\mathcal{T}_{\Delta-1}$, such that each tree~$\mathcal{T}_i$, for $i\in[0,\Delta)$, has all its leaves colored $i$. Every node of each tree~$\mathcal{T}_i$ is associated with \emph{one} node of $\mathcal{T}$. Then, the algorithm makes a bottom-up traversal of $\mathcal{T}$, and, for each node~$v$ of~$\mathcal{T}$, it combines the color frequency information of the children of $v$ in~$\mathcal{T}$ with the information coming from the nodes associated with $v$ in  the trees $\mathcal{T}_0,\ldots,\mathcal{T}_{\Delta-1}$. The efficiency of this algorithm is based on the fact that the total size of all trees
is $\cO(N)$, and on that it employs: (I) an efficient data structure for answering Lowest Common Ancestor (\LCA) queries~\cite{DBLP:conf/latin/BenderF00}; and (II) an efficient algorithm for tree traversal that exploits \LCA information~\cite{DBLP:conf/cpm/KasaiLAAP01}. See \cref{sec:linear}. 

\item We show how the above-mentioned string-processing problems, namely \DRQ, \CPM, and \CSQ, can be solved in optimal time using \Cref{the:linear-time} via linear-time reductions to \SM. 
We remark that via the reduction from \DRQ to \SM, we solve \DRQ by constructing in linear time a data structure that supports queries in optimal time.
The existing data structures~\cite{DBLP:conf/soda/NavarroN12,DBLP:journals/jacm/HonSTV14, DBLP:journals/siamcomp/NavarroN17,DBLP:conf/soda/0001N25} for the more general \kDRQ problem clearly work (with $k=1$) for our problem but, to the best of our knowledge, they do not admit 
an linear-time construction.
Their focus is on obtaining theoretically good space-query time trade-offs.
We also remark that the query and preprocessing time we achieve for the \CSQ problem are optimal. A baseline alternative approach is to construct $\Delta$ suffix trees, one for each string $S$ in the input collection $\mathcal{S}$, and to find if each $q$-gram $Q$ of the query is $\epsilon$-consistent with $\mathcal{S}$ after matching it to each suffix tree to compute $|\occ_S(Q)|$. This approach is prohibitively expensive, as it may take $\Omega(\Delta q|Q|)$ time for a query of length $|Q|$ and $\Delta$ is typically in the order of thousands. 
See \cref{sec:applications}.

\item We show two generalizations of \SM that can be solved efficiently using \Cref{the:linear-time}: (I) having a \emph{node-colored} tree  instead of a leaf-colored tree as input; and (II) finding the $k\geq 1$ most frequent colors instead of the most frequent color. Our result for generalization II implies a linear-space data structure for any $k=\cO(1)$ for \kDRQ, which answers queries in optimal time \emph{and} can be constructed in linear time. 
See \cref{sec:generalizations}.

\item We show that an analogous problem to \SM, where the input is a directed acyclic graph (DAG) instead of a tree and the sinks (nodes in the DAG with no outgoing edges) are colored instead of the tree leaves, is  
unlikely to be solved as fast as \SM. We do this by providing conditional lower bounds  answering Q2 negatively for this problem. 
See \cref{sec:lb}. 

\item We present experiments on $4$ real datasets from different domains showing that our algorithm is \emph{at least one order of magnitude faster} and uses \emph{significantly less memory} compared to three natural baselines. For example, it processes a dataset whose tree has over $7$ billion nodes in less than $20$ minutes, while the most time- and space-efficient baseline needs about $6.5$ hours and uses $28\%$ more memory. We also show the efficiency of our approach in \kDRQ and its
usefulness in \CPM, \CSQ, and a phylogenetic tree annotation application. 
 In \CPM, our approach discovers patterns that reveal behavioral preferences about movies, books, or products, which prevail in different user subpopulations and are reflected in the literature, in \CSQ it efficiently distinguishes between DNA sequences that belong to different entities, and in the last application it helps genotype-phenotype links detection. 
See \cref{sec:experiments}.  
\end{enumerate}

\cref{sec:back} provides the background,~\cref{sec:baselines} baselines, and~\cref{sec:related} the related work.
We conclude in Section~\ref{sec:conclusion}.

\section{Background}\label{sec:back}

\paragraph{From \RM to \SM}  Range Mode is by now a classic problem in data structures theory~\cite{DBLP:journals/njc/KrizancMS05,DBLP:conf/icalp/GreveJLT10,DBLP:journals/corr/abs-1101-4068,DBLP:journals/tcs/DurocherEMT15,DBLP:journals/algorithmica/DurocherSST16,DBLP:conf/esa/El-Zein0MS18,DBLP:conf/soda/WilliamsX20,DBLP:conf/icalp/SandlundX20,DBLP:conf/icalp/Gu0WX21}.
It is defined as follows.

\defDSproblem{Range Mode (\RM)}{An array $\mathcal{A}$ of $N$ elements colored from a set
$\{0, \ldots, \Delta-1 \}$ of colors (integers).}{Given an interval $[i,j]$, output the most frequent color among the colors in $\mathcal{A}[i\dd j]$ (breaking ties arbitrarily).}

\SM can be seen as a specialization of \RM on leaf-colored trees. 
The fundamental difference is that in \RM we have $\Theta(N^2)$ distinct queries (one per interval $[i,j]$), while in \SM we have $\cO(N)$ possible query intervals that form a laminar family, that is, for every two intervals,
either the intervals are disjoint or one is contained in the other.
Although our $\cO(N)$-time construction algorithm (\Cref{the:linear-time}) 
precomputes the answer of each possible query,
we opted to define \SM as a data structure (instead of an algorithmic) problem to be consistent with \RM. 
Moreover, it might be possible to have a data structure of $o(N)$ size supporting (near-)optimal \SM queries or tree updates. Natural variations of \SM (similar to those of \RM) output both the mode $c_v^{\max}$ 
and its frequency, 
or the least frequent color, $c_v^{\min}$, 
known as \emph{anti-mode}, and its frequency $f_v^{\min}$; note that for a node $v$, the anti-mode $c_v^{\min}$ may have $f_v^{\min}=0$. 

For simplicity, our algorithm is presented for the \SM variation that returns the mode and its frequency but, as we show, it can be modified to compute instead the anti-mode and its frequency.

\paragraph{Strings.}~An \emph{alphabet} $\Sigma$ is a finite set of elements called \emph{letters}.
We consider throughout an \emph{integer} alphabet $\Sigma=[0,\sigma)$.   
For a string $S=S[0\dd n-1]$ over alphabet $\Sigma$, we denote its length~$n$ by $|S|$ and its $i$-th letter by $S[i]$.
By $\Sigma^q$, for some integer $q>0$, we denote the set of length-$q$ strings over $\Sigma$; a \emph{$q$-gram} is a string from $\Sigma^q$.
A \emph{substring} of $S$ starting at position $i$ and ending at position $j$ of $S$ is denoted by $S[i\dd j]$.  
A \emph{prefix} (respectively, \emph{suffix}) of $S$ is a substring of the form $S[0 \dd j]$ ($S[i \dd n-1])$.
A substring $P$ of $S$ may have multiple occurrences in $S$. 
We thus characterize an \emph{occurrence} of $P$ in $S$ by its \emph{starting position} $i\in[0,n-1]$; i.e., $P=S[i\dd i+|P|-1]$. The set of occurrences of $P$ in~$S$ is denoted by $\occ_S(P)$ and its size $|\occ_S(P)|$ is called the \emph{frequency} of $P$ in $S$. For convenience, we assume that $S$ always ends with a terminating letter $\$$ that occurs only at the last position of $S$ and is the lexicographically smallest letter. 

\paragraph{Compacted Tries and Suffix Trees.}~A \emph{compacted trie} is a trie in which each maximal branchless path is replaced with a single edge whose label is a string equal to the concatenation of that path's edge labels. The dissolved nodes are called \emph{implicit} while the preserved nodes are called \emph{explicit}. A node with at least two children is called \emph{branching}. For a node $v$ in a compacted trie, $\str(v)$ is the concatenation of edge labels on the root-to-$v$ path.
We define the \emph{string depth} of a node $v$ as $\sd(v) = |\str(v)|$.
The \emph{locus} of a pattern $P$ is the node $v$ with the smallest string depth such that $P$ is a prefix of $\str(v)$.
The \emph{suffix tree} of a string $S$, 
denoted by $\ST(S)$, is the compacted trie of all suffixes of $S$~\cite{DBLP:conf/focs/Weiner73}. 

 \begin{example}
Let $S=\texttt{banana\$}$. The suffix tree $\ST(S)$ is in Fig.~\ref{fig:st}. The edge labeled \texttt{na} replaces two edges labeled \texttt{n} and~\texttt{a} in the (uncompacted) trie. For node $v$, $\str(v)=\texttt{ana}$, which is the concatenation of the edge labels $\texttt{a}$ and $\texttt{na}$ on the path from the root to $v$. The string depth of $v$ is $\sd(v)=|\texttt{ana}|=3$. The locus of pattern $P=\texttt{an}$ is $v$, since $v$ is the node with the smallest string depth such that $P$ is a prefix of $\str(v)=\texttt{ana}$.
\end{example}

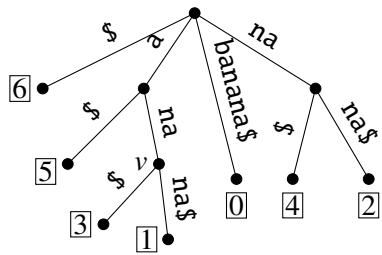
\begin{figure}[tbhp]
  \centering
\hspace{-0.1mm}
\scalebox{1.0}{
 \begin{tikzpicture}[- , level distance=1.5cm]
\filldraw[black] (0,0) circle (2pt); 
\filldraw[black] (-2,-1) circle (2pt);
\filldraw[black] (1.59,-1) circle (2pt);
\filldraw[black] (-1.67,-2) circle (2pt);
\filldraw[black] (-0.67,-1) circle (2pt);
\filldraw[black] (-0.47,-2) circle (2pt);
\filldraw[black] (-1.2,-2.8) circle (2pt);
\filldraw[black] (0.55,-2.2) circle (2pt);
\filldraw[black] (-0.35,-3) circle (2pt);
\filldraw[black] (1.29,-2.2) circle (2pt);
\filldraw[black] (2.29,-2.2) circle (2pt);
  \draw (0,0)   -- (-2,-1) node[midway,sloped,above] {$\texttt{\$}$};
  \draw (0,0)   -- (-0.67,-1) node[midway,sloped,above] {$\texttt{a}$};
  \draw (0,0)   -- (0.55,-2.2) node[midway,sloped,above] {$\texttt{banana\$}$};
  \draw (0,0) -- (1.59,-1)   node[midway,sloped,above] {$\texttt{na}$};
    \draw (-1.67,-2)   -- (-0.67,-1) node[midway,sloped,above] {$\texttt{\$}$};
        \draw (-0.47,-2)   -- (-0.67,-1) node[midway,sloped,above] {$\texttt{na}$};
           \draw (-0.47,-2)   -- (-1.2,-2.8) node[midway,sloped,above] {$\texttt{\$}$};
              \draw (-0.47,-2)   -- (-0.35,-3) node[midway,sloped,above] {$\texttt{na\$}$};
    \draw (1.59,-1) -- (1.29,-2.2)   node[midway,sloped,above] {$\texttt{\$}$};
    \draw (1.59,-1) -- (2.29,-2.2)   node[midway,sloped,above] {$\texttt{na\$}$};
    
\node[draw, rectangle, fill=white, inner sep=1pt] at (-1.96, -2.1) {5}; 
\node[draw, rectangle, fill=white, inner sep=1pt] at (-2.3, -1) {6};
\node[draw, rectangle, fill=white, inner sep=1pt] at (-1.5, -2.8) {3};
\node[draw, rectangle, fill=white, inner sep=1pt] at (-0.63, -3) {1};branching
\node[draw, rectangle, fill=white, inner sep=1pt] at (0.55, -2.55) {0};
\node[draw, rectangle, fill=white, inner sep=1pt] at (1.29, -2.55) {4};\node[draw, rectangle, fill=white, inner sep=1pt] at (2.29, -2.55) {2};

\node[] at (-0.7,-2) {$v$};

\end{tikzpicture}}
\caption{Suffix tree $\ST(S)$ for $S=\texttt{banana\$}$; the squares denote  starting positions in $S$.  
}\label{fig:st}
\end{figure}

\begin{lemma}[\cite{DBLP:conf/focs/Farach97}]\label{lem:ST}
Given a string $S$ of length $n$ over an integer alphabet of size $n^{\cO(1)}$, 
the suffix tree $\ST(S)$ of $S$ can be constructed in $\cO(n)$ time.
\end{lemma}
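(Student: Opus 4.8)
The plan is to prove the statement by the recursive odd/even decomposition of Farach~\cite{DBLP:conf/focs/Farach97}. First, pad $S$ so that $n$ is even. The size-reduction step is: radix-sort the $n/2$ consecutive letter pairs $S[2i]\,S[2i+1]$, which is possible in $\cO(n)$ time since the alphabet is polynomially bounded, and replace each pair by its rank to obtain a string $S'$ of length $n/2$ over an alphabet of size at most $n/2$; then recursively construct $\ST(S')$. The suffixes of $S'$ are in bijection with the suffixes of $S$ that start at even positions, so $\ST(S')$ already determines, up to a cheap local correction, the compacted trie $T_e$ of the even-position suffixes of $S$: the tree shape is inherited and string depths are doubled, while each branching node is locally adjusted --- possibly by inserting one extra branching node one letter deeper to regroup children that happen to share a letter of $S$ --- which is decided by inspecting the relevant letters of $S$ and costs $\cO(n)$ in total.

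Next I would build the compacted trie $T_o$ of the odd-position suffixes of $S$. Each odd suffix $S[2i+1\dd]$ is one letter followed by the even suffix $S[2i+2\dd]$, so sorting the odd suffixes reduces to a radix sort on the pairs $(S[2i+1], r_i)$, where $r_i$ is the rank of $S[2i+2\dd]$ read off a left-to-right traversal of $T_e$; moreover, the longest common prefix of two odd suffixes that are consecutive in this sorted order is $0$ if their first letters differ and is otherwise $1$ plus the string depth of the \LCA in $T_e$ of the two corresponding even suffixes. Augmenting $T_e$ with an \LCA data structure that is built in $\cO(n)$ time and answers queries in $\cO(1)$ time~\cite{DBLP:conf/latin/BenderF00} makes all of this run in $\cO(n)$, and from the sorted order together with the array of consecutive longest common prefixes one assembles $T_o$ by the standard incremental left-to-right construction in $\cO(n)$.

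The heart of the proof is the merge of $T_e$ and $T_o$ into $\ST(S)$, and this is the step I expect to be the main obstacle. I would first form an \emph{over-merged} tree by a synchronized depth-first traversal of $T_e$ and $T_o$ that, whenever two corresponding edges begin with the same letter, optimistically assumes their labels coincide over the length of the shorter one (splitting the longer edge accordingly); this takes $\cO(n)$ and produces a tree of size $\cO(n)$, but it is wrong exactly where an odd suffix and an even suffix were glued together past their true common prefix. The refinement step then locates the bad nodes from the adjacent odd/even leaf pairs and, for each, needs $\textsf{lcp}(S[i\dd],S[j\dd])$ with $i$ odd and $j$ even; the crux of~\cite{DBLP:conf/focs/Farach97} is that these $\cO(n)$ mixed-parity values can be computed collectively in $\cO(n)$ time by combining the one-step identity --- $\textsf{lcp}(S[i\dd],S[j\dd])$ is $0$ if $S[i]\neq S[j]$ and $1+\textsf{lcp}(S[i+1\dd],S[j+1\dd])$ otherwise --- with longest-common-prefix queries internal to $T_e$ and $T_o$ and a global amortization argument. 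Proving that the refined tree is indeed $\ST(S)$ and that the refinement runs in $\cO(n)$ time is the technical core; the rest is bookkeeping, and the recurrence $t(n)=t(n/2)+\cO(n)=\cO(n)$ finishes the proof.

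As an alternative that sidesteps Farach's merge, I would note that one can instead construct the suffix array of $S$ in $\cO(n)$ time (for instance by the skew algorithm of K\"arkk\"ainen and Sanders, which is itself a recursive sampling construction), then the LCP array in $\cO(n)$ time by the algorithm of Kasai et al.~\cite{DBLP:conf/cpm/KasaiLAAP01}, and finally assemble $\ST(S)$ from the suffix array and the LCP array in $\cO(n)$ time by inserting the suffixes in lexicographic order while maintaining the current rightmost root-to-leaf path. This proves the same statement, with the remaining difficulty shifted into the $\cO(n)$-time suffix-array construction.
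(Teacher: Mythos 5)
This lemma is stated as a citation to Farach's FOCS~1997 construction rather than proved in the paper, and your proposal is a faithful sketch of exactly that cited algorithm: the pair-and-recurse size reduction, the derivation of the odd-position trie from the even one via radix sort and \LCA queries, the over-merge/unmerge step (which you correctly flag as the technical core), and the recurrence $t(n)=t(n/2)+\cO(n)$. Your alternative route via a linear-time suffix-array construction, the LCP array of Kasai et al., and the standard left-to-right assembly of the tree is an equally valid and commonly used way to obtain the same bound, so nothing is missing.
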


At each node of $\ST(S)$, we can store a hash table to access an edge based on the first letter of its label.
The hash tables can be constructed in $\cO(n)$ total time with high probability and support $\cO(1)$-time queries~\cite{DBLP:conf/sosa/BenderFKK24}. 
The total size of the hash tables is also bounded by $\cO(n)$~\cite{DBLP:conf/sosa/BenderFKK24}.
Spelling a pattern $P$ in a suffix tree $\ST(S)$ then takes $\cO(|P|)$ time~\cite{DBLP:conf/focs/Weiner73}: we start from the root, traverse down the tree edge by edge, matching as many letters as possible, until either the pattern ends, a mismatch occurs, or we land on a leaf. 

\section{Baselines for \SM}\label{sec:baselines}

We can address \SM by traversing $\mathcal{T}$ bottom-up and annotating each node with the accumulated color frequencies of its children. Then, for every node $v$ of $\mathcal{T}$, we find the 
mode $c^{\max}_v$ and its frequency $f^{\max}_v$.
We formalize this approach below.  

\begin{proposition}\label{thm:bl1}
Given a tree $\mathcal{T}$ on $N$ nodes, we can construct, in $\cO(N\Delta)$ time, a data structure that can answer any \SM query in $\cO(1)$ time.
In particular, for a given node $v$, 
the query algorithm returns both the mode $c^{\max}_v$ and its frequency~$f^{\max}_v$.
\end{proposition}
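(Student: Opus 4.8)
The plan is to precompute, for every node $v$ of $\mathcal{T}$, the full color-frequency profile of the leaves in the subtree rooted at $v$, and to read off $(c^{\max}_v, f^{\max}_v)$ from it; the query then becomes a single array lookup. Concretely, the data structure will simply be a table indexed by node identifier storing the pair $(c^{\max}_v, f^{\max}_v)$ for each $v$, which is $\cO(N)$ space and answers a query in $\cO(1)$ time. All the work is in the construction.

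First I would compute a post-order (bottom-up) traversal of $\mathcal{T}$ by a single DFS in $\cO(N)$ time. For each node $v$ I will maintain an array $F_v[0\dd\Delta-1]$ where $F_v[c]$ is the number of leaves of color $c$ in the subtree rooted at $v$. For a leaf $v$ of color $c$, $F_v$ is the indicator array with a $1$ in position $c$ and $0$ elsewhere; this uses that the colors are integers in $[0,\Delta)$, so the array can be indexed directly. For an internal node $v$ with children $u_1,\dots,u_d$, I allocate a zeroed array of length $\Delta$ and, for each child $u_j$ in turn, add $F_{u_j}$ into it component-wise, so that on completion $F_v[c]=\sum_{j=1}^d F_{u_j}[c]$. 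Processing nodes in post-order guarantees that all children's arrays are available when $v$ is processed. As soon as $F_v$ is computed I scan it once to obtain $f^{\max}_v=\max_c F_v[c]$ and some color $c^{\max}_v$ attaining it, store this pair in the output table, and may free the children's arrays.

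For the running time: initializing each leaf's array costs $\cO(\Delta)$, hence $\cO(N\Delta)$ over all leaves; zeroing each internal node's array costs $\cO(\Delta)$, again $\cO(N\Delta)$ over all internal nodes; adding a child's array into its parent's array costs $\cO(\Delta)$ and happens exactly once per tree edge, and $\mathcal{T}$ has $N-1$ edges, so this contributes $\cO(N\Delta)$; finally the per-node max-scan is $\cO(\Delta)$, i.e.\ $\cO(N\Delta)$ in total. Summing, construction takes $\cO(N\Delta)$ time, as claimed, and the query is the stated $\cO(1)$ lookup.

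There is essentially no hard step here — this is a straightforward baseline — but the one point worth getting right is the time bound on the merge step. Merging children one at a time is fine precisely because each tree edge is charged $\cO(\Delta)$ \emph{exactly once}, so the total merge cost is governed by $\sum_v(\text{number of children of }v)=N-1$ rather than by $N$ times the maximum degree; likewise, because the colors are integers in $[0,\Delta)$, the frequency arrays are plain arrays with no dictionary overhead, which is what keeps every elementary operation $\cO(1)$.
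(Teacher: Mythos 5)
Your proposal is correct and matches the paper's own proof essentially verbatim: both perform a bottom-up traversal maintaining a size-$\Delta$ color-frequency array per node, sum the children's arrays at each internal node, and extract $(c^{\max}_v, f^{\max}_v)$ by a final linear scan, for $\cO(N\Delta)$ total construction time and $\cO(1)$ lookup queries. Your per-edge accounting of the merge cost is a slightly more explicit version of the paper's observation that each value is written once and read once.
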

\begin{proof}
We initialize an integer array $A_v$ of size $\Delta$ for each node~$v$ of $\mathcal{T}$. 
In the base case ($v$ is a leaf of color $i$), we set 
$A_v[i]=1$ and $A_v[j]=0$, for each $j\neq i$.  
Using a bottom-up traversal of $\mathcal{T}$, we record the count $A_v[i]$ of each color~$i$: for an internal node $v$, with children $u_1,\ldots,u_{\ell}$, $A_v[i]:=\sum^{\ell}_{j=1}A_{u_j}[i]$.
Each value $A_v[i]$ is written once using the children of $v$ and read once by the parent of $v$.
Then, from each $A_v$, we derive $c^{\max}_v$ and $f^{\max}_v$ by reading each value once more. 
We have $\cO(N\Delta)$ values in total; the result follows.   
\end{proof}

We refer to the construction algorithm underlying \Cref{thm:bl1} as \textsf{Baseline 1}. 
The space used by \textsf{Baseline 1} is bounded by the construction time, which is $\cO(N\Delta)$.

Another way to solve \SM is by observing that it can be reduced to \RM by defining a single array that stores the color of each leaf of $\mathcal{T}$ from left to right, and then solving \RM on this array (i.e., preprocessing this array by constructing a data structure for \RM on it and then querying this data structure). This can be done because each subtree of $\mathcal{T}$ in \SM corresponds to a range of the array; and each range can be processed by a single query in the data structure for \RM. The drawback of this approach, however, is that constructing the data structure and answering each of the $N$ queries takes $\cO(N\sqrt{N})$ time, as shown below,  whereas \SM can be solved in the optimal $\cO(N)$ time by using our algorithm.  
We first recall a well-known result on \RM, and
then formalize this approach as follows. 

\begin{lemma}[\cite{DBLP:journals/corr/abs-1101-4068}]\label{lem:range-mode}
Given an array $\mathcal{A}$ on $N$ elements, for any $s\in [1,N]$, we can construct, 
in $\cO(sN)$ time, a data structure that can answer any \RM query in $\cO(N/s)$ time.
In particular, for a given range, 
the query algorithm returns both the most frequent color in the range and its frequency.
\end{lemma}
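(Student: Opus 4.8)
The plan is the standard block decomposition for Range Mode. First I would partition $\mathcal{A}$ into $\Theta(s)$ contiguous blocks, each of length $\Theta(N/s)$, and precompute in one left-to-right scan two linear-space auxiliary structures: for every color $c$, the increasing array $Q_c$ of positions at which $c$ occurs; and an array $\mathrm{rnk}[\cdot]$ where $\mathrm{rnk}[p]$ is the index of $p$ within $Q_{\mathcal{A}[p]}$. Together these answer, in $\cO(1)$ time, the \emph{forward} test ``does color $c=\mathcal{A}[p]$ occur at least $t$ times in $[p,j]$?'', namely $Q_c[\mathrm{rnk}[p]+t-1]\le j$, and the symmetric \emph{backward} test ``does $c$ occur at least $t$ times in $[i,p]$?'', namely $Q_c[\mathrm{rnk}[p]-t+1]\ge i$ (an out-of-range index is read as a negative answer). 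All of this costs $\cO(N)$ time and space.

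Next I would build a table $T$ indexed by pairs of block numbers $a\le b$, where $T[a][b]$ stores the mode of the union of blocks $a,\dots,b$ together with its frequency. To stay within $\cO(sN)$ time, I would, for each fixed left endpoint $a$, reset a global size-$\Delta$ counter array and a running best, then sweep blocks $a,a+1,\dots$ from left to right, doing $\cO(1)$ work per element (bump one counter, possibly update the running best), and write the current best into $T[a][b]$ once block $b$ is finished. Each left endpoint costs $\cO(N)$ for the sweep plus $\cO(\Delta)$ for the reset, and since $\Delta\le N$ and $s\le N$ the total is $\cO(sN)$; the table occupies $\cO(s^2)=\cO(sN)$ space.

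Given a query $[i,j]$, let $b_i,b_j$ be the blocks containing $i,j$. If $b_j\le b_i+1$ (no complete block strictly between them), I simply scan the $\cO(N/s)$ positions of $[i,j]$; otherwise I read $(c^\ast,f^\ast)=T[b_i+1][b_j-1]$ and initialise the running answer $(\hat c,\hat f)\leftarrow(c^\ast,f^\ast)$. I then scan the \emph{left fringe} (the part of block $b_i$ inside $[i,j]$) from left to right and the \emph{right fringe} (the part of block $b_j$ inside $[i,j]$) from right to left; in the special cases above I scan $[i,j]$ just like a left fringe with $\hat f$ starting at $0$. At a left-fringe position $p$ of color $c$, as long as the forward test certifies $\hat f+1$ occurrences of $c$ in $[p,j]$ I increment $\hat f$ and set $\hat c\leftarrow c$; at a right-fringe position I do the same using the backward test on $[i,p]$. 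Finally I return $(\hat c,\hat f)$.

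The main obstacle is the correctness and running time of this query, and it is where the two scan directions matter. For correctness I would do a case analysis on the true mode $m$ of $[i,j]$: if $m$ occurs only inside the spanned blocks then its frequency is at most $f^\ast$, so the initial value already dominates it; otherwise $m$ occurs in a fringe, hence its leftmost occurrence in $[i,j]$ lies in the left fringe or its rightmost occurrence lies in the right fringe. When the scan reaches that extreme occurrence $p$, the positions of $m$ in $[p,j]$ (resp.\ $[i,p]$) are exactly the positions of $m$ in $[i,j]$, so the forward (resp.\ backward) test for $m$ holds for precisely the first $\mathrm{freq}(m,[i,j])$ increments and $\hat f$ rises to $\mathrm{freq}(m,[i,j])$; no color is ever overcounted because every accepted test exhibits that many occurrences inside $[i,j]$. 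For the time bound, $\hat f$ is nondecreasing, so the total number of increments performed during the whole query is $\mathrm{freq}(m,[i,j])-f^\ast$, which is at most the number of positions of $m$ in the two fringes, i.e.\ $\cO(N/s)$; together with the $\cO(1)$ work at each of the $\cO(N/s)$ fringe positions and the $\cO(1)$ table lookup, the query runs in $\cO(N/s)$ time. Plugging $s=1$ and $s=N$ recovers the two trivial endpoints of the trade-off, which is a useful sanity check.
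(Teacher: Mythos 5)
Your proof is correct and is essentially the standard block-decomposition construction of Chan et al.\ that the paper simply cites for this lemma (the paper gives no proof of its own): the $\cO(s^2)$ span table built by $s$ left-to-right sweeps, the per-color occurrence lists $Q_c$ with the rank array for the $\cO(1)$ forward/backward frequency tests, and the amortized bound on increments via the monotonicity of $\hat f$. All the key points --- no overcounting because every accepted test witnesses that many occurrences inside $[i,j]$, and the extreme-occurrence argument showing the true mode is caught in one of the two fringe scans --- are handled correctly.
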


\begin{proposition}\label{thm:b2}
Given a tree $\mathcal{T}$ on $N$ nodes, we can construct, in $\cO(N\sqrt{N})$ time, a data structure that can answer any \SM query in $\cO(1)$ time.
In particular, for a given node $v$, 
the query algorithm returns both the mode $c^{\max}_v$ and its frequency $f^{\max}_v$.
\end{proposition}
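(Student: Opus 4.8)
The plan is to reduce \SM to \RM, exactly as sketched in the paragraph preceding the statement, and then to invoke \Cref{lem:range-mode} with the balancing choice $s=\sqrt{N}$.

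First I would linearize the leaves of $\mathcal{T}$: run a DFS from the root, visiting the children of each node in a fixed order, and list the leaves in the order in which they are first encountered. This yields an array $\mathcal{A}$ of length $n\le N$, where $\mathcal{A}[p]$ is the color of the $p$-th leaf in this order. During the same DFS I would record, for every node $v$, the interval $[\ell_v,r_v]$ of indices spanned by the leaves of the subtree rooted at $v$ (i.e., $\ell_v$ is the index of the first and $r_v$ the index of the last leaf of that subtree in the DFS order); this takes $\cO(N)$ time. The key structural fact is that the leaves of the subtree of $v$ are exactly the entries $\mathcal{A}[\ell_v\dd r_v]$, so the multiset of leaf colors of that subtree equals the multiset of colors in $\mathcal{A}[\ell_v\dd r_v]$; hence $c^{\max}_v$ together with its frequency $f^{\max}_v$ is precisely the answer to the \RM query $[\ell_v,r_v]$ on $\mathcal{A}$.

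Next I would build the \RM data structure of \Cref{lem:range-mode} on $\mathcal{A}$ with $s=\sqrt{n}$, which costs $\cO(sn)=\cO(N\sqrt{N})$ time and answers any range query, returning both the most frequent color and its frequency, in $\cO(n/s)=\cO(\sqrt{N})$ time. Then I would iterate over all $\cO(N)$ nodes $v$ of $\mathcal{T}$, issue the query $[\ell_v,r_v]$, and store the returned pair $(c^{\max}_v,f^{\max}_v)$ in an array indexed by $v$. This loop performs $\cO(N)$ queries at $\cO(\sqrt{N})$ time each, i.e. $\cO(N\sqrt{N})$ in total, so the whole construction runs in $\cO(N\sqrt{N})$ time. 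The resulting data structure is just this per-node lookup table: an \SM query for a node $v$ is answered by a single array access in $\cO(1)$ time, returning both the mode and its frequency as required.

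There is no genuine obstacle here; the only two points to get right are (i) the claim that the leaves of a subtree form a contiguous block of $\mathcal{A}$, which is immediate from the standard left-to-right DFS leaf ordering, and (ii) the arithmetic of the time budget, where the two dominant terms — the $\cO(sn)$ construction of the \RM structure and the $\cO(N)\cdot\cO(n/s)$ cost of the $\cO(N)$ precomputed queries — are simultaneously balanced (up to constant factors) by taking $s=\Theta(\sqrt{N})$, both becoming $\cO(N\sqrt{N})$.
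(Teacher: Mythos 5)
Your proposal is correct and follows essentially the same route as the paper's proof: linearize the leaves via a depth-first (in-order) traversal so that each subtree corresponds to a contiguous range, apply \Cref{lem:range-mode} with $s=\sqrt{N}$, and precompute all $\cO(N)$ query answers at $\cO(\sqrt{N})$ time each. Your write-up merely spells out the interval bookkeeping and the per-node lookup table in more detail than the paper does.
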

\begin{proof}
The leaf nodes of $\mathcal{T}$ read in an in-order traversal induce an array of colors.
Similarly, every subtree of $\mathcal{T}$ induces a range on this array, as in the in-order traversal the colors of its leaves are stored consecutively in the array.   
The array and the ranges can be precomputed and stored via an in-order traversal on $\mathcal{T}$.
We apply \cref{lem:range-mode} with $s=\sqrt{N}$ and ask $N$ queries; a query corresponds to a range and takes $\cO(\sqrt{N})$ time. 
\end{proof}

We call \textsf{Baseline 2} the construction algorithm underlying 
\Cref{thm:b2}. 
Its benefit is that it does not depend on $\Delta$ and thus it is faster than \textsf{Baseline 1} for $\Delta = \omega( \sqrt{N})$. The space used by \textsf{Baseline 2} is bounded by the construction time, which is $\cO(N\sqrt{N})$.

\section{A Linear-Time Construction Algorithm for \SM}\label{sec:linear}

Problems like \SM, where statistics need to be computed for every subtree, 
are usually solved using the folklore
\emph{smaller-to-larger} technique (cf.~\cite{DBLP:books/daglib/0020103}) -- also known as \emph{disjoint set union} -- on trees.
Such an approach typically requires $\cO(N\log N)$ time (or slightly more depending on the used data structures).

We present our $\cO(N)$-time construction algorithm underlying \Cref{the:linear-time}
and show how it
can be modified to return the
anti-mode of a given node of $\mathcal{T}$ within the same complexities. 
Note that, when referring to an \emph{ancestor} (resp.,~\emph{descendant}) of a node $v$, we mean $v$ included 
unless stated otherwise, in which case this is referred to as \emph{strict} ancestor (resp.,~descendant).

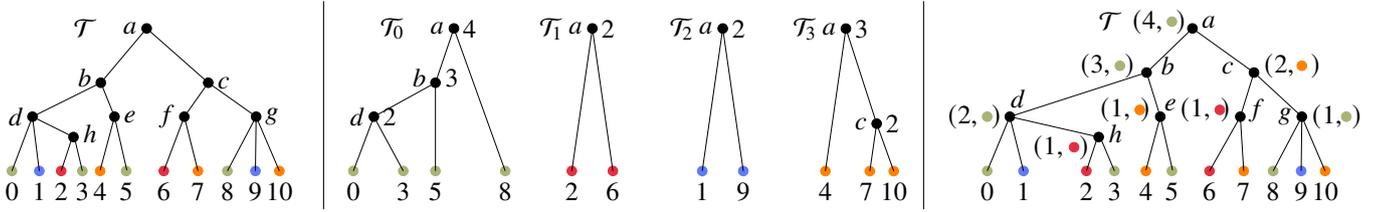
\begin{figure*}[t]
\begin{subfigure}[b]{0.23 \linewidth}
\scalebox{0.9}{
\begin{tikzpicture}[- , level distance=1.5cm]
\filldraw[black] (0,0) circle (2pt); 
\filldraw[black] (0.9,-0.8) circle (2pt);
\filldraw[black] (-1.67,-1.3) circle (2pt);  
\filldraw[black] (-0.67,-0.8) circle (2pt);
\filldraw[black] (-0.47,-1.3) circle (2pt);
\filldraw[black]  (0.55,-1.3) circle (2pt);   
\filldraw[black] (1.6,-1.3) circle (2pt);   
\filldraw[black] (-1.07,-1.6) circle (2pt); 

\filldraw[CWIContrGreen] (-1.97,-2.1) circle (2pt);  
\filldraw[CWIBlue] (-1.57,-2.1) circle (2pt); 
\filldraw[CWIRed] (-1.25,-2.1) circle (2pt);  
\filldraw[CWIContrGreen] (-0.94,-2.1) circle (2pt); 
\filldraw[orange] (-0.68,-2.1) circle (2pt);
\filldraw[CWIContrGreen] (-0.3,-2.1) circle (2pt);
\filldraw[CWIRed]  (0.25,-2.1) circle (2pt);   
\filldraw[orange]  (0.75,-2.1) circle (2pt);   
\filldraw[CWIContrGreen]  (1.18,-2.1) circle (2pt);   
\filldraw[CWIBlue]  (1.59,-2.1) circle (2pt);   
\filldraw[orange]  (1.94,-2.1) circle (2pt);   

\node at (-1.97,-2.4) {$0$};
\node at (-1.57,-2.4) {$1$};
\node at (-1.25,-2.4) {$2$};
\node at (-0.94,-2.4) {$3$};
\node at (-0.68,-2.4) {$4$};
\node at (-0.3,-2.4) {$5$};
\node at (0.25,-2.4) {$6$};
\node at (0.75,-2.4) {$7$};
\node at (1.18,-2.4) {$8$};
\node at (1.59,-2.4) {$9$};
\node at (1.94,-2.4) {$10$};

\node[left] at (-0.66,-0.73) {$b$};
\node[left] at (0,0) {$a$};
\node[right] at (0.9,-0.8) {$c$};
\node[left] at (-1.67,-1.3) {$d$};
\node[right] at (-0.47,-1.3) {$e$};
\node[right] at (-1.07,-1.55) {$h$};
\node[left] at (0.55,-1.3) {$f$};
\node[right] at (1.6,-1.3) {$g$};
\node at (-0.9,0) {$\mathcal{T}$};

\draw (0,0) -- (-0.67,-0.8);
\draw (0,0) -- (0.9,-0.8);
\draw (-1.67,-1.3) -- (-0.67,-0.8);
\draw (-0.47,-1.3) -- (-0.67,-0.8);
\draw (0.9,-0.8) -- (0.55,-1.3);
\draw (0.55,-1.3) -- (0.25,-2.1);
\draw (0.55,-1.3) -- (0.75,-2.1);
\draw (0.9,-0.8) -- (1.6,-1.3);
\draw (-0.47,-1.3) -- (-0.29,-2.09);
\draw (-0.47,-1.3) -- (-0.68,-2.09);
\draw (-1.67,-1.3) -- (-1.57,-2.1);
\draw (-1.67,-1.3) -- (-1.97,-2.1);
\draw (-1.67,-1.3) -- (-1.07,-1.6);
\draw (-1.07,-1.55) -- (-0.94, -2.1);
\draw (-1.07,-1.55) -- (-1.25, -2.1);
\draw (1.6, -1.3) -- (1.18,-2.1);
\draw (1.6, -1.3) -- (1.59,-2.1);
\draw (1.6, -1.3) -- (1.94,-2.1);
\end{tikzpicture}}
\end{subfigure}
\hspace{+0.5mm}
\rulesep 
\begin{subfigure}[b]{0.13\columnwidth}
\scalebox{0.9}{
\begin{tikzpicture}[- , level distance=1.5cm]
\filldraw[black] (0,0) circle (2pt); 
\filldraw[black] (-1.17,-1.3) circle (2pt);  
\filldraw[black] (-0.27,-0.8) circle (2pt);

\filldraw[CWIContrGreen] (-1.47,-2.1) circle (2pt);  
\filldraw[CWIContrGreen] (-0.74,-2.1) circle (2pt); 
\filldraw[CWIContrGreen] (-0.27,-2.1) circle (2pt);
\filldraw[CWIContrGreen]  (0.75,-2.1) circle (2pt);   

\node[left] at (-0.26,-0.73) {$b$};
\node[right] at (-0.26,-0.73) {$3$};
\node[left] at (0,0) {$a$};
\node[right] at (0,0) {$4$};
\node[left] at (-1.17,-1.3) {$d$};
\node[right] at (-1.17,-1.3) {$2$};
\draw (0,0) -- (-0.27,-0.8);
\draw (0,0) -- (0.75,-2.1);
\draw (-1.17,-1.3) -- (-0.27,-0.8);
\draw (-1.17,-1.3) -- (-0.74,-2.1);

\draw (-0.27,-0.8) -- (-0.27,-2.1);
\draw (-1.17,-1.3) -- (-1.47,-2.1);

\node at (-1.47,-2.4) {$0$};
\node at (-0.74, -2.4) {$3$};
\node at (-0.27,-2.4) {$5$};
\node at (0.75,-2.4) {$8$};
\node at (-0.9,0) {$\mathcal{T}_0$};

\end{tikzpicture}}
\end{subfigure}\hspace{+1mm}
\begin{subfigure}[b]{0.09\linewidth}
\scalebox{0.9}{
\begin{tikzpicture}[- , level distance=1.5cm]
\filldraw[black] (-0.3,0) circle (2pt); 
\filldraw[CWIRed] (-0.6,-2.1) circle (2pt);
\filldraw[CWIRed]  (0,-2.1) circle (2pt);   
\node[left] at (-0.3,0) {$a$};
\node[right] at (-0.3,0) {$2$};
\draw (-0.3,0) -- (-0.6,-2.1);
\draw (-0.3,0) -- (0,-2.1);
\node at (-0.6,-2.4) {$2$};
\node at (0,-2.4) {$6$};
\node at (-0.9,0) {$\mathcal{T}_1$};
\end{tikzpicture}}
\end{subfigure}
\begin{subfigure}[b]{0.09\linewidth}
\scalebox{0.9}{
\begin{tikzpicture}[- , level distance=1.5cm]
\filldraw[black] (-0.3,0) circle (2pt); 
\filldraw[CWIBlue] (-0.6,-2.1) circle (2pt);
\filldraw[CWIBlue]  (0,-2.1) circle (2pt);   

\node[left] at (-0.3,0) {$a$};
\node[right] at (-0.3,0) {$2$};
\draw (-0.3,0) -- (-0.6,-2.1);
\draw (-0.3,0) -- (0,-2.1);

\node at (-0.6,-2.4) {$1$};
\node at (0,-2.4) {$9$};
\node at (-0.9,0) {$\mathcal{T}_2$};
\end{tikzpicture}}
\end{subfigure}%
\begin{subfigure}[b]{0.09\linewidth}
\scalebox{0.9}{
\begin{tikzpicture}[- , level distance=1.5cm]
\filldraw[black] (-0.3,0) circle (2pt); 
\filldraw[black] (0.15,-1.4) circle (2pt); 
\filldraw[orange] (-0.6,-2.1) circle (2pt);
\filldraw[orange]  (0.4,-2.1) circle (2pt);   
\filldraw[orange]  (0.05,-2.1) circle (2pt);   

\node[left] at (-0.3,0) {$a$};
\node[right] at (-0.3,0) {$3$};
\node[left] at (0.15,-1.4) {$c$};
\node[right] at (0.15,-1.4) {$2$};

\draw (-0.3,0) -- (-0.6,-2.1);
\draw (-0.3,0) -- (0.15,-1.4);
\draw (0.15,-1.4) -- (0.4,-2.1);
\draw (0.15,-1.4) -- (0.05,-2.1);

\node at (-0.6,-2.4) {$4$};
\node at (0,-2.4) {$7$};
\node at (0.4,-2.4) {$10$};
\node at (-0.9,0) {$\mathcal{T}_3$};
\end{tikzpicture}}
\end{subfigure}
\hspace{+1mm}
\rulesep 
\begin{subfigure}[b]{0.24\linewidth}
\scalebox{0.9}{
\begin{tikzpicture}[- , level distance=1.5cm]
\filldraw[black] (0,0) circle (2pt); 
\filldraw[black] (0.9,-0.65) circle (2pt);
\filldraw[black] (-2.67,-1.3) circle (2pt);  
\filldraw[black] (-0.67,-0.65) circle (2pt);
\filldraw[black] (-0.47,-1.3) circle (2pt);
\filldraw[black]  (0.7,-1.3) circle (2pt);   
\filldraw[black] (1.6,-1.3) circle (2pt);   
\filldraw[black] (-1.37,-1.6) circle (2pt); 

\filldraw[CWIContrGreen] (-3,-2.1) circle (2pt);  
\filldraw[CWIBlue] (-2.47,-2.1) circle (2pt); 
\filldraw[CWIRed] (-1.55,-2.1) circle (2pt);  
\filldraw[CWIContrGreen] (-1.14,-2.1) circle (2pt); 
\filldraw[orange] (-0.68,-2.1) circle (2pt);
\filldraw[CWIContrGreen] (-0.3,-2.1) circle (2pt);
\filldraw[CWIRed]  (0.25,-2.1) circle (2pt);   
\filldraw[orange]  (0.75,-2.1) circle (2pt);   
\filldraw[CWIContrGreen]  (1.18,-2.1) circle (2pt);   
\filldraw[CWIBlue]  (1.59,-2.1) circle (2pt);   
\filldraw[orange]  (1.94,-2.1) circle (2pt);   

\node at (-3,-2.4) {$0$};
\node at (-2.47,-2.4) {$1$};
\node at (-1.55,-2.4) {$2$};
\node at (-1.14,-2.4) {$3$};
\node at (-0.68,-2.4) {$4$};
\node at (-0.3,-2.4) {$5$};
\node at (0.25,-2.4) {$6$};
\node at (0.75,-2.4) {$7$};
\node at (1.18,-2.4) {$8$};
\node at (1.59,-2.4) {$9$};
\node at (1.94,-2.4) {$10$};

\node[left] at (-1.06,-0.55) {$(3,$};
\node[right] at (-0.6,-0.55) {$b$};
\filldraw[CWIContrGreen,left]  (-1.06,-0.55) circle (2pt);
\node[left] at (-0.7,-0.55) {$)$};

\node[left] at (-0.3,0.1) {$(4,$};
\node[right] at (0,0.1) {$a$};
\filldraw[CWIContrGreen,left]  (-0.3,0.1) circle (2pt);
\node[left] at (0.05,0.1) {$)$};

\node[right] at (0.9,-0.55) {$(2,$};
\node[left] at (0.75,-0.58) {$c$};
\filldraw[orange,right]  (1.6,-0.55) circle (2pt);
\node[right] at (1.6,-0.55) {$)$};

\node[left] at (-3,-1.3) {$(2,$};
\node[right] at (-2.8,-1.03) {$d$};
\filldraw[CWIContrGreen,left]  (-3,-1.3) circle (2pt);
\node[left] at (-2.65,-1.3) {$)$};

\node[left] at (-0.77,-1.2) {$(1,$};
\node[right] at (-0.55,-1.12) {$e$};
\filldraw[orange,left]  (-0.77,-1.2) circle (2pt);
\node[left] at (-0.47,-1.2) {$)$};

\node[left] at (-1.75,-1.75) {$(1,$};
\node[right] at (-1.37,-1.55) {$h$};
\filldraw[CWIRed,left]  (-1.73,-1.75) circle (2pt);
\node[left] at (-1.37,-1.75) {$)$};

\node[left] at (0.4,-1.2) {$(1,$};
\node[right] at (0.68,-1.2) {$f$};
\filldraw[CWIRed,left]  (0.4,-1.2) circle (2pt);
\node[left] at (0.7,-1.2) {$)$};

\node[right] at (1.6,-1.3) {$(1,$};
\node[left] at (1.6,-1.3) {$g$};
\filldraw[CWIContrGreen,right]  (2.25,-1.3) circle (2pt);
\node[right] at (2.2,-1.3) {$)$};

\node at (-1.2,0.1) {$\mathcal{T}$};

\draw (0,0) -- (-0.67,-0.65);
\draw (0,0) -- (0.9,-0.65);
\draw (-2.67,-1.3) -- (-0.67,-0.65);
\draw (-0.47,-1.3) -- (-0.67,-0.65);
\draw (0.9,-0.65) -- (0.7,-1.3);
\draw (0.7,-1.3) -- (0.25,-2.1);
\draw (0.7,-1.3) -- (0.75,-2.1);
\draw (0.9,-0.65) -- (1.6,-1.3);
\draw (-0.47,-1.3) -- (-0.29,-2.09);
\draw (-0.47,-1.3) -- (-0.68,-2.09);
\draw (-2.67,-1.3) -- (-2.47,-2.1);
\draw (-2.67,-1.3) -- (-3,-2.1);
\draw (-2.67,-1.3) -- (-1.37,-1.6);
\draw (-1.37,-1.55) -- (-1.14, -2.1);
\draw (-1.37,-1.55) -- (-1.55, -2.1);
\draw (1.6, -1.3) -- (1.18,-2.1);
\draw (1.6, -1.3) -- (1.59,-2.1);
\draw (1.6, -1.3) -- (1.94,-2.1);
\end{tikzpicture}}
\end{subfigure}
\caption{In Step $1$ of the algorithm, the single-color trees $\mathcal{T}_0, \ldots, \mathcal{T}_3$ are created from $\mathcal{T}$. Note that each (internal) node $v$ of~$\mathcal{T}_i$, for all $i\in[0,4)$, is annotated with one node $\phi_i(v)$ of $\mathcal{T}$; e.g., $\phi_i(v)=a$ for all root nodes $v$ in $\mathcal{T}_i$. In Step~$2$, every internal node in 
$\mathcal{T}_0, \ldots, \mathcal{T}_3$ 
stores the count of its leaf descendants. In Step 3, the internal nodes of $\mathcal{T}$ store (frequency, color) pairs.}\label{fig:split}
\end{figure*}

\paragraph{High-level Overview.}~The construction algorithm underlying \Cref{the:linear-time} consists of
three main steps (see also Fig.~\ref{fig:split}):
\begin{enumerate}
    \item {\it Splitting the Tree.}~The tree $\mathcal{T}$ is split  into a forest of $\Delta$ single-color trees, denoted by $\mathcal{T}_0,\ldots,\mathcal{T}_{\Delta-1}$: all leaves of~$\mathcal{T}_i$, for each $i\in[0,\Delta)$, are colored $i$.
    For each $i\in[0,\Delta)$, each internal node $v$ of  $\mathcal{T}_i$ is associated with one node $\phi_i(v)$ of $\mathcal{T}$ such that a node $u$ is an ancestor of a node $u'$ in $\mathcal{T}_i$ if and only if $\phi_i(u)$ is an ancestor of $\phi_i(u')$. 
    The definition of the mapping $\phi_i$ will be provided later. 
    \item {\it Counting Colors.}~The count of leaf descendants of every internal node $v$ in $\mathcal{T}_i$, for each $i\in [0, \Delta)$, is computed in a bottom-up manner in a traversal of $\mathcal{T}_i$ and stored at $v$. 
    \item {\it Merging Counts.}~In a traversal of $\mathcal{T}$ in a bottom-up manner, the algorithm computes, for every internal node~$v$,
    a pair comprised of:
    (I)
    the maximum of
    the counts stored at the children of $v$
    and the counts stored at the internal nodes~$u$ in $\mathcal{T}_i$ with $\phi_i(u)=v$, for all $i\in[0,\Delta)$;
    and (II) a
    color corresponding to this maximum count.
\end{enumerate}

This algorithm, henceforth referred to as \SCM (for Splitting-Counting-Merging), outputs, for every node $v$ of $\mathcal{T}$, the mode of $v$ and its
frequency; see \Cref{alg:scm} for the  pseudocode. 

\begin{algorithm}
\caption{$\SCM(\mathcal{T}, \Delta)$}\label{alg:scm}
\begin{algorithmic}[1]\footnotesize 
  \ForAll{$i \in [0,\Delta)$}\Comment{Step 1}
    \State Construct single-color tree $\mathcal{T}_i$     
  \EndFor
  \ForAll{$i \in [0, \Delta)$}\Comment{Step 2}
  \ForAll{internal nodes $u$ in $\mathcal{T}_i$}
    \State $\textit{count}[u] \gets \text{number of leaf descendants of } u$
    \EndFor
\EndFor
  \State Initialize array $\textit{mode}$ with $N$ $\text{-1}$'s and array $ \textit{freq}$ with $N$  $0$'s \Comment{Step 3}
  \ForAll{leaf nodes $v$ of color $c$}
    \State $( \textit{freq}[v], \textit{mode}[v]) \gets (1, c)$ 
  \EndFor
  \ForAll{internal nodes $v$ in a bottom-up traversal of $\mathcal{T}$}
  \State $\mathcal{C}_v\gets $ set of counts comprised of:
  \INDSTATE $\textit{freq}[u]$ for each child $u$ of $v$ in $\mathcal{T}$ 
  \INDSTATE $\textit{count}[u]$ for each $\mathcal{T}_i$ and each  internal node $u$ in $\mathcal{T}_i$ s.t. $\phi_i(u) = v$
  \State $f^{\max}_v \gets \max(\mathcal{C}_v)$
  and $c^{\max}_v\gets$ a color with frequency $f^{\max}_v$ 
  \State $\textit{freq}[v] \gets f^{\max}_v$ and $\textit{mode}[v] \gets c^{\max}_{v}$
\EndFor
   \State \Return the arrays $\textit{mode}$ and $\textit{freq}$ 
\end{algorithmic}
\end{algorithm}

\paragraph{Preprocessing the Tree.}~An \emph{upward} path in a rooted tree $\mathcal{T}$ 
is a path from some node of $\mathcal{T}$ to one of its ancestors. 
We make the following simple observation:

\begin{observation}\label{obs:unary}
    Consider a rooted tree $\mathcal{T}$ and a node $v_0$ in~$\mathcal{T}$.
    For the maximal upward path $v_0,\ldots,v_\ell$, such that $v_0$ has more than one children (or it is a leaf node) and $v_1,\ldots,v_\ell$ have exactly one child, we have
    \[(f^{\min}_{v_0}, f^{\max}_{v_0})=(f^{\min}_{v_1}, f^{\max}_{v_1})=\ldots=(f^{\min}_{v_\ell}, f^{\max}_{v_\ell}),\]
    where $f_{v_i}^{\max}$ and $f_{v_i}^{\min}$ denote, respectively, the frequencies of the mode and the anti-mode of node $v_i$ for each $i\in[0,\ell]$. 
\end{observation}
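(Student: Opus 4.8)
The plan is to reduce everything to a single structural fact: a unary internal node has precisely the same leaf descendants as its unique child, so walking up the path $v_0,\ldots,v_\ell$ never changes the multiset of leaf colors, and hence never changes any color's frequency — in particular neither the mode frequency nor the anti-mode frequency.

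First I would set up notation: for a node $v$ of $\mathcal{T}$, let $L(v)$ be the multiset of colors of the leaves in the subtree rooted at $v$. By definition $f^{\max}_v$ is the largest multiplicity occurring in $L(v)$ and $f^{\min}_v$ is the smallest one (among the colors that occur; the claim is insensitive to whether one ranges over the occurring colors or over all of $\{0,\ldots,\Delta-1\}$, as long as the convention is fixed). Thus it suffices to prove the multiset identity $L(v_0)=L(v_1)=\cdots=L(v_\ell)$.

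Next I would establish the one-step claim, namely $L(v_i)=L(v_{i-1})$ for each $i\in[1,\ell]$. By hypothesis $v_i$ has exactly one child, and since $v_{i-1}$ lies on the upward path and is a child of $v_i$, that unique child is $v_{i-1}$. Hence the subtree rooted at $v_i$ is obtained from the subtree rooted at $v_{i-1}$ by attaching the single node $v_i$ as a new root; as $v_i$ has a child it is internal, hence not a leaf and contributes no color, so the two subtrees have exactly the same leaves and therefore the same color multiset. Chaining these equalities over $i=1,\ldots,\ell$ gives $L(v_0)=\cdots=L(v_\ell)$, and reading off the maximum and minimum multiplicities yields $(f^{\min}_{v_0},f^{\max}_{v_0})=\cdots=(f^{\min}_{v_\ell},f^{\max}_{v_\ell})$, as claimed.

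I do not anticipate any real obstacle here; the proof is a one-line structural observation dressed up with an induction along the path. The only point that deserves an explicit sentence is that each of $v_1,\ldots,v_\ell$ is a genuine internal (uncolored) node, which is exactly what guarantees that inserting it above its child leaves the leaf-color multiset untouched; the boundary case $\ell=0$, in which the path is a single node, is vacuous.
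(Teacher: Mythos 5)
Your proof is correct and follows the same reasoning as the paper, whose entire argument is the single observation that $v_0,\ldots,v_\ell$ have the same set of leaf descendants; you simply spell out the one-step induction and the multiset bookkeeping that the paper leaves implicit.
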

\begin{proof}
  Since the nodes $v_0,\ldots,v_\ell$ have the same set of leaf descendants, the result follows directly.
\end{proof}

Based on \cref{obs:unary}, we henceforth assume that $\mathcal{T}$ has no node with one child (i.e., no unary path). If this is \emph{not} the case, we contract every edge in every maximal unary upward path $v_0,\ldots,v_\ell$, thus dissolving $v_1,\ldots,v_\ell$.
Performing these contractions is necessary for our algorithm, as we explain in Step 1. 
This preprocessing step can be performed in-place in $\cO(N)$ time using a traversal of $\mathcal{T}$.
Afterwards, the mode and frequency for the removed nodes can easily be recovered from the mode of the surviving nodes.

\paragraph{Step 1: Splitting the Tree.}~Step $1$ takes a tree $\mathcal{T}$ on $N$ nodes, $N_L<N$ leaves, and $\Delta\leq N_L$ colors as input. 
It outputs $\Delta$ trees, $\mathcal{T}_0,\ldots,\mathcal{T}_{\Delta-1}$, each associated with an injective mapping
$\phi_i: V(\mathcal{T}_i) \rightarrow V(\mathcal{T})$, where $V(\mathcal{G})$ is the set of nodes of graph~$\mathcal{G}$. 
For each $i \in [0,\Delta)$, we define $\mathcal{T}_i$ and $\phi_i$ as follows:
$\mathcal{T}_i$ is the tree obtained from $\mathcal{T}$ by deleting each node that does not have a descendant colored $i$ and then dissolving any node with  
one child; and $\phi_i$ maps each node of $\mathcal{T}_i$ to its origin in $\mathcal{T}$. 
Note, for each $i$, the number of leaf descendants with color~$i$ of each node of~$\mathcal{T}$ that gets dissolved in the construction of~$\mathcal{T}_i$ is equal to that of its single child; e.g., nodes $h, e, f$, and $g$  in \cref{fig:split} are dissolved in the construction of each  
tree $\mathcal{T}_0, \ldots, \mathcal{T}_3$;
the frequencies of these nodes (per color) can be deduced by those of their children. The following properties hold:
\begin{enumerate}
    \item For each $i\in[0, \Delta)$, all the leaves of $\mathcal{T}_i$ are colored $i$.
    \item The leaf nodes of $\mathcal{T}$ are precisely the elements of
    \[\bigcup_{i \in [0,\Delta)} \bigcup_{\text{leaf } u \text{ of } \mathcal{T}_i} \phi_i(u).\]
    \item For each pair $u,v\in V(\mathcal{T}_i)$, $i\in[0, \Delta)$, $u$ is an ancestor of $v$ if and only if $\phi_i(u)$ is an ancestor of $\phi_i(v)$ in $\mathcal{T}$.
\end{enumerate}
The construction of the single-color trees is  performed in two phases which we detail below. 

\paragraph{Leaf Lists}
Let $O_{\mathcal{T}}=v_1,\ldots,v_{N_L}$ be the list of the leaf nodes of $\mathcal{T}$ in the order in which they are visited in an in-order traversal of $\mathcal{T}$. ($O_{\mathcal{T}}$ can be constructed in $\cO(N)$ time.)
For each $i\in[0, \Delta)$,
we create an (initially empty) leaf list $\mathcal{L}_i$ that will eventually store all leaf nodes of $\mathcal{T}$ colored $i$.
We construct the leaf lists in $\cO(N_L)$ total time by 
scanning $O_{\mathcal{T}}$ from left to right and, for each element $v$ of $O_{\mathcal{T}}$ with color $i$, appending a leaf $u$ with $\phi_i(u) := v$ to~$\mathcal{L}_i$. 

\paragraph{Trees}
For each leaf list $\mathcal{L}_i$, we construct a leaf-colored tree $\mathcal{T}_i$ using a single color $i$.
In particular, $\mathcal{T}_i$ is a tree with the elements of $\mathcal{L}_i$ as leaves and internal nodes being in one-to-one correspondence with the elements of the set $\{\LCA_{\mathcal{T}}(\phi_i(u),\phi_i(v)) : u,v \in \mathcal{L}_i\}$, where $\LCA_T(x,y)$ denotes the lowest common ancestor of two nodes in a tree~$T$.
Our goal is to construct $\mathcal{T}_i$ in $\cO(|\mathcal{L}_i|)$ time.
The tree $\mathcal{T}_i$ can be constructed in $\cO(|\mathcal{L}_i|)$ time using the algorithm of Kasai et al.~\cite[Section 5.2]{DBLP:conf/cpm/KasaiLAAP01}.
This algorithm simulates a traversal of any rooted tree $\mathcal{T}'$ with no unary paths, if one has:
(I) the leaf list of $\mathcal{T}'$ (from left to right);
(II) the \LCA{}'s of adjacent leaves in the list; and
(III) access to the partial order of these \LCA{} nodes (each specified by two leaves $x,y$ such that the node is $\LCA_\mathcal{T'}(x,y)$) defined as $u<v$ if $u$ is a strict ancestor of $v$.

We have already constructed the list $\mathcal{L}_i$ of leaves of $\mathcal{T}_i$.
To find the \LCA{}'s of the leaves in $\mathcal{L}_i$, in a preprocessing step, we construct a data structure for answering \LCA queries on~$\mathcal{T}$.
The construction takes $\cO(N)$ time~\cite{DBLP:conf/latin/BenderF00}.
Given any two nodes $u$ and~$v$ in $\mathcal{T}$, the data structure returns node $w=\LCA_{\mathcal{T}}(u,v)$ in $\cO(1)$ time.
We scan $\mathcal{L}_i$, from left to right, and ask $(|\mathcal{L}_i|-1)$ $\LCA$ queries in $\mathcal{T}$, between the nodes associated with each pair of successive nodes of $\mathcal{L}_i$.
This way, we compute at most $|\mathcal{L}_i|-1$ distinct internal nodes of $\mathcal{T}$; for each such node $w$, we create a node $v$ in $\mathcal{T}_i$ and set $\phi_i(v) := w$.
Given two distinct nodes $u_1$ and $u_2$ of $\mathcal{T}_i$, each specified by a pair of successive leaves of which it is the \LCA, we have that $u_1<u_2$ if and only if $\phi_i(u_1) \neq \phi_i(u_2)$ and $\LCA_\mathcal{T}(\phi_i(u_1),\phi_i(u_2)) = \phi_i(u_1)$; these conditions can be checked in $\cO(1)$ time.
Thus, using Kasai et al.'s algorithm, we construct $\mathcal{T}_i$ and $\phi_i$ in $\cO(|\mathcal{L}_i|)$ time. We have thus proved the following lemma.

\begin{lemma}\label{lem:trees}
 After $\cO(N)$-time preprocessing of $\mathcal{T}$, the trees $\mathcal{T}_0,\ldots,\mathcal{T}_{\Delta-1}$ and the mappings $\phi_0,\ldots,\phi_{\Delta-1}$ can be constructed in $\cO(\sum_{i\in[0,\Delta)}|\mathcal{L}_i|)$ total time. 
\end{lemma}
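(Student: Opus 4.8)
The plan is to assemble the running-time bound from the construction already described above and to verify that its output satisfies the three stated properties. First, the preprocessing consists of three $\cO(N)$-time tasks on $\mathcal{T}$: (i) contracting all maximal unary upward paths so that $\mathcal{T}$ — and hence, as we will see, each $\mathcal{T}_i$ — has no node with exactly one child; (ii) building the in-order leaf list $O_{\mathcal{T}}$; and (iii) building the Bender--Farach \LCA data structure on $\mathcal{T}$~\cite{DBLP:conf/latin/BenderF00}, which answers queries in $\cO(1)$ time. This accounts for the additive $\cO(N)$ term.

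Next, for the per-color work I would: scan $O_{\mathcal{T}}$ once, appending to $\mathcal{L}_i$ a fresh leaf $u$ with $\phi_i(u):=v$ for each leaf $v$ of color $i$; summed over colors this is $\cO(\sum_{i\in[0,\Delta)}|\mathcal{L}_i|)$ since each leaf of $\mathcal{T}$ is touched once. Then, for a fixed $i$, scan $\mathcal{L}_i$ left to right and issue the $|\mathcal{L}_i|-1$ queries $\LCA_{\mathcal{T}}(\phi_i(u),\phi_i(u'))$ over successive pairs $u,u'$, each in $\cO(1)$ time; for each distinct answer $w$ create an internal node $v$ of $\mathcal{T}_i$ with $\phi_i(v):=w$. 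Finally, feed to Kasai et al.'s routine~\cite[Section 5.2]{DBLP:conf/cpm/KasaiLAAP01} the three ingredients it requires: the leaf list $\mathcal{L}_i$, the list of successive \LCA{}s just computed, and an $\cO(1)$-time oracle for the partial order ``$u_1<u_2$ iff $\phi_i(u_1)\neq\phi_i(u_2)$ and $\LCA_{\mathcal{T}}(\phi_i(u_1),\phi_i(u_2))=\phi_i(u_1)$''. This produces $\mathcal{T}_i$ and $\phi_i$ in $\cO(|\mathcal{L}_i|)$ time, and summing over $i$ gives the claimed bound.

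For correctness I would check three things. First, that taking \LCA{}s of only \emph{successive} pairs of $\mathcal{L}_i$ recovers the whole set $\{\LCA_{\mathcal{T}}(\phi_i(u),\phi_i(v)) : u,v\in\mathcal{L}_i\}$: this is the standard \LCA-closure fact, that for leaves ordered by the in-order traversal the \LCA of any pair equals the \LCA of some adjacent pair lying between them (the shallowest one). Second, that $\mathcal{T}_i$ has no unary node and that its leaf set is exactly the set of color-$i$ leaves of $\mathcal{T}$ (Properties 1 and 2), which holds because we dissolved unary nodes in preprocessing and because $\mathcal{L}_i$ enumerates precisely those leaves. Third, Property 3: since ancestry in a tree is characterised by \LCA{}s ($x$ is an ancestor of $y$ iff $\LCA(x,y)=x$) and $\phi_i$ is defined so as to match \LCA{}s of consecutive leaves, an induction along the simulated traversal shows that $u$ is an ancestor of $v$ in $\mathcal{T}_i$ iff $\phi_i(u)$ is an ancestor of $\phi_i(v)$ in $\mathcal{T}$; injectivity of $\phi_i$ follows from the one-to-one correspondence between the created internal nodes and the distinct \LCA values.

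The step I expect to be the main obstacle is arguing rigorously that Kasai et al.'s traversal-simulation reconstructs exactly the tree $\mathcal{T}_i$ as \emph{defined} (the topological minor of $\mathcal{T}$ spanned by $\mathcal{L}_i$ together with the \LCA-closure of its images, with unary nodes dissolved): one must show that the partial-order oracle supplies enough information to recover the full parent structure and that this structure is consistent, i.e.\ no explicit node of $\mathcal{T}$ is forced into two incomparable roles. This rests on the \LCA-closure fact and on $\mathcal{T}$ (hence $\mathcal{T}_i$) being free of unary paths after preprocessing; once those are in place, the remaining bookkeeping — initialising $\phi_i$ on freshly created nodes, charging the $\cO(1)$ oracle calls — is routine.
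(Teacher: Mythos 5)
Your proposal is correct and follows essentially the same route as the paper: $\cO(N)$ preprocessing (unary-path contraction, in-order leaf list, constant-time \LCA structure), per-color leaf lists built in one scan, \LCA queries on successive pairs only, and Kasai et al.'s traversal simulation driven by exactly the partial-order oracle the paper uses. The additional correctness checks you sketch (the \LCA-closure fact for adjacent pairs and the verification of Properties 1--3) are left implicit in the paper but are sound.
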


We also make the following simple observation.

\begin{observation}\label{obs:size}
The total size $\sum_{i\in[0,\Delta)}|\mathcal{T}_i|$ of the trees $\mathcal{T}_0,\ldots,\mathcal{T}_{\Delta-1}$
is $\sum_{i\in[0,\Delta)}|\mathcal{L}_i|=\cO(N)$.
\end{observation}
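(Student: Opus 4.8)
The plan is to bound each $|\mathcal{T}_i|$ linearly in $|\mathcal{L}_i|$ and then sum over the colors. For the first part, I would recall from the Trees construction that $\mathcal{T}_i$ has exactly $|\mathcal{L}_i|$ leaves and that its internal nodes are created in bijection with a set of at most $|\mathcal{L}_i|-1$ distinct nodes of $\mathcal{T}$, namely the \LCA{}'s of consecutive leaves of $\mathcal{L}_i$. Equivalently, since the preprocessing step and the splitting construction guarantee that $\mathcal{T}_i$ has no unary node, a routine edge count gives the same bound: \(\sum_{v}(\#\text{children of }v)=|\mathcal{T}_i|-1\ge 2\cdot(\#\text{internal nodes of }\mathcal{T}_i)\), and combining this with $|\mathcal{T}_i|=(\#\text{internal nodes})+|\mathcal{L}_i|$ yields at most $|\mathcal{L}_i|-1$ internal nodes. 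Either way, $|\mathcal{T}_i|\le 2|\mathcal{L}_i|-1$.

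Next I would observe that the leaf lists form a partition of the leaf set of $\mathcal{T}$: during the Leaf Lists phase, each leaf $v$ of $\mathcal{T}$ is appended to exactly one list $\mathcal{L}_i$, namely the one indexed by its color $i$. Hence $\sum_{i\in[0,\Delta)}|\mathcal{L}_i|=N_L$, and $N_L<N$ by the standing assumption on the input to Step~1.

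Combining the two facts, $\sum_{i\in[0,\Delta)}|\mathcal{T}_i|\le\sum_{i\in[0,\Delta)}\bigl(2|\mathcal{L}_i|-1\bigr)<2\sum_{i\in[0,\Delta)}|\mathcal{L}_i|=2N_L<2N=\cO(N)$, and the middle identity $\sum_{i\in[0,\Delta)}|\mathcal{L}_i|=\cO(N)$ is exactly the partition count. I do not expect a genuine obstacle here; the only non-formal ingredient is the ``at most $L-1$ internal nodes'' property of trees with no unary node, and the reason it applies is precisely the earlier preprocessing that contracts unary upward paths (so that each single-color tree is built by Kasai et al.'s algorithm on a non-unary tree) — without it, a single-color tree could degenerate into a long path and the size bound would fail.
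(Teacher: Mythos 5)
Your proof is correct and matches the paper's argument: the single-color trees have only branching and leaf nodes, so each has fewer internal nodes than leaves, and the leaf lists partition the $N_L < N$ leaves of $\mathcal{T}$, giving total size below $2N_L$. One small attribution quibble: the non-unary property of each $\mathcal{T}_i$ comes from the splitting construction itself (which explicitly dissolves any node with one child when forming $\mathcal{T}_i$), not from the earlier preprocessing that contracts unary paths in $\mathcal{T}$ — but this does not affect the validity of your argument.
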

\begin{proof}
By the construction of the single-color trees, the trees $\mathcal{T}_0,\ldots,\mathcal{T}_{\Delta-1}$ have only branching and leaf nodes.
The total number of leaves in $\mathcal{T}_0,\ldots,\mathcal{T}_{\Delta-1}$ is $N_L$.
 Thus, the total size of $\mathcal{T}_0,\ldots,\mathcal{T}_{\Delta-1}$ is less than $2\cdot N_L=\cO(N)$.
\end{proof}

By \cref{lem:trees} and \cref{obs:size}, we obtain the following.

\begin{lemma}\label{lem:step1}
    Step 1 of the \SCM algorithm takes $\cO(N)$ time.
\end{lemma}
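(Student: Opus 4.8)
The plan is to tally the running time of Step 1 piece by piece and show that each piece is either $\cO(N)$ on its own or sums to $\cO(N)$ across all $\Delta$ colors. Step 1 consists of: (i) the one-time preprocessing of $\mathcal{T}$ --- contracting all maximal unary upward paths in place (a single traversal, $\cO(N)$, as already argued), computing the in-order leaf list $O_{\mathcal{T}}$ in $\cO(N)$ time, and building the \LCA{} data structure of $\mathcal{T}$ in $\cO(N)$ time~\cite{DBLP:conf/latin/BenderF00}; (ii) allocating the $\Delta$ initially-empty leaf lists $\mathcal{L}_0,\ldots,\mathcal{L}_{\Delta-1}$ and populating them by one left-to-right scan of $O_{\mathcal{T}}$, appending, for each leaf $v$ of color $i$, a fresh leaf $u$ with $\phi_i(u):=v$ to $\mathcal{L}_i$; and (iii) for each $i$, running Kasai et al.'s traversal-simulation~\cite[Section 5.2]{DBLP:conf/cpm/KasaiLAAP01} on $\mathcal{L}_i$, the $\LCA$'s of its successive elements (each obtained by one $\cO(1)$-time query to the structure from (i)), and the $\cO(1)$-time comparisons giving the partial order of those $\LCA$ nodes, to produce $\mathcal{T}_i$ together with $\phi_i$.

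For the bookkeeping: part (i) is $\cO(N)$ by the cited results and the in-place contraction argument. Part (ii) costs $\cO(\Delta)$ for the allocation plus $\cO(N_L)$ for the scan, since each leaf of $\mathcal{T}$ triggers $\cO(1)$ work; as $\Delta \le N_L < N$, this is $\cO(N)$. Part (iii) is precisely what \cref{lem:trees} bounds: after the $\cO(N)$-time preprocessing, the trees $\mathcal{T}_0,\ldots,\mathcal{T}_{\Delta-1}$ and the mappings $\phi_0,\ldots,\phi_{\Delta-1}$ are built in $\cO(\sum_{i\in[0,\Delta)}|\mathcal{L}_i|)$ total time, and by \cref{obs:size} we have $\sum_{i\in[0,\Delta)}|\mathcal{L}_i| = \cO(N)$. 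Adding the three contributions yields $\cO(N)$ overall.

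The only real subtlety --- and the place a careless argument would slip --- is ensuring that no per-color step secretly costs $\Omega(N)$ and thus blows up to $\Omega(N\Delta)$ when summed over all colors: the \LCA{} structure and the unary-path contraction must be performed once for $\mathcal{T}$, never once per $\mathcal{T}_i$; the ``initially empty'' leaf lists must be genuinely empty, so their creation is $\cO(\Delta)$ and all of their content is charged to the single scan of $O_{\mathcal{T}}$; and each $\LCA$ query and each partial-order comparison fed to Kasai et al.'s algorithm must run in $\cO(1)$ time, so that the cost for $\mathcal{T}_i$ is linear in $|\mathcal{L}_i|$ rather than in $N$. With these points verified, the telescoping bound $\sum_{i}|\mathcal{L}_i| = \cO(N)$ from \cref{obs:size} closes the argument; the heavy machinery has already been isolated in \cref{lem:trees} and \cref{obs:size}, so no genuinely hard step remains.
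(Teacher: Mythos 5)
Your proposal is correct and follows the same route as the paper: the paper's proof of this lemma is simply the combination of \cref{lem:trees} and \cref{obs:size}, and your argument is an expanded version of exactly that, with the surrounding bookkeeping (one-time preprocessing, leaf-list construction, per-color invocation of Kasai et al.'s algorithm) that the paper establishes in the prose leading up to those two results. The points you flag as subtleties (the \LCA structure and contraction being done once globally, and the per-color cost being linear in $|\mathcal{L}_i|$ rather than $N$) are indeed the right things to check and are already baked into \cref{lem:trees}.
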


\paragraph{Step 2: Counting Colors.}~In Step 2, we count the leaf descendants of  every internal node $v$ in $\mathcal{T}_i$, for all $i\in[0, \Delta)$, and store the count at $v$.   We achieve this using a separate bottom-up traversal for every $\mathcal{T}_i$, $i\in [0, \Delta)$. 
As any bottom-up traversal can be implemented in linear time in the tree size and $\sum_{i\in[0,\Delta)} |\mathcal{T}_i|=\cO(N)$,
we obtain: 

\begin{lemma}\label{lem:step2}
    Step 2 of the \SCM algorithm  takes $\cO(N)$ time.
\end{lemma}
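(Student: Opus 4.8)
The plan is to reduce the claim directly to \cref{obs:size}. For each tree $\mathcal{T}_i$, I would run a single bottom-up (post-order) traversal: initialize the count stored at every leaf of $\mathcal{T}_i$ to $1$, and, when visiting an internal node $v$, set $\mathrm{count}(v) := \sum_{u\text{ child of }v}\mathrm{count}(u)$, which is well defined because all children of $v$ have already been processed. Each stored count is written once (when its node is visited) and read once (when its parent is visited), so, using the fact noted in the text that a bottom-up traversal of a tree runs in time linear in the tree's size, the work spent on $\mathcal{T}_i$ is $\cO(|\mathcal{T}_i|)$.

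Summing over all colors, the total time of Step~2 is $\sum_{i\in[0,\Delta)}\cO(|\mathcal{T}_i|) = \cO\!\big(\sum_{i\in[0,\Delta)}|\mathcal{T}_i|\big)$, and by \cref{obs:size} this is $\cO(N)$. Correctness of the stored values is a one-line induction on the height of $\mathcal{T}_i$: a leaf has exactly one leaf descendant (itself), and the set of leaf descendants of an internal node is the disjoint union of the sets of leaf descendants of its children, so the summation recurrence computes exactly the number of leaf descendants, as required for Step~3.

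I do not expect a genuine obstacle here. The only two things to be careful about are (i) ensuring the traversal really is post-order so a child's count is available before its parent is processed, and (ii) charging the $\Delta$ separate traversals against the \emph{combined} size bound $\sum_{i}|\mathcal{T}_i| = \cO(N)$ from \cref{obs:size}, rather than naively bounding each traversal by $\cO(N)$ and obtaining the weaker $\cO(N\Delta)$ figure. Everything else is routine bookkeeping.
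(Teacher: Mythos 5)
Your proof is correct and follows essentially the same route as the paper: a separate bottom-up traversal of each $\mathcal{T}_i$ costing $\cO(|\mathcal{T}_i|)$, with the total charged against $\sum_{i}|\mathcal{T}_i|=\cO(N)$ via \cref{obs:size}. The additional remarks on post-order evaluation and the correctness induction are fine but not needed beyond what the paper states.
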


With \cref{lem:step1,lem:step2}, it is easy to obtain an $\cO(N\log \Delta)$-time construction algorithm. 
This is slower than \SCM, so 
we just provide the intuition:  
By performing the inverse operation of splitting, we can merge \emph{two trees} in linear time, and then employ \cref{thm:bl1} on each merged tree, which takes linear time because each merged tree consists of \emph{two colors}.
If we do this iteratively in $\log \Delta$ levels, with appropriate color renaming to ensure that each merged tree has leaves of two colors, the whole algorithm takes $\cO(N\log \Delta)$ total time and $\cO(N)$ space. 

\paragraph{Step 3: Merging Counts.}~It seems difficult to improve on the above-mentioned $\cO(N\log \Delta)$-time algorithm if we insist on the technique that merges two trees at a time.
The \emph{crucial observation} we make is that, instead of merging whole trees, 
we can merge the counts of the (at most) $\Delta$ nodes
in $\mathcal{T}_0,\ldots,\mathcal{T}_{\Delta-1}$ that are associated with the same node in $\mathcal{T}$ \emph{at once}.
More specifically, we make a \emph{single}  traversal of $\mathcal{T}$, processing nodes in a bottom-up manner. 
At every internal node $v$ of $\mathcal{T}$, we compute and store the maximum among
the counts stored by the children of $v$ and the counts stored at all internal nodes $u$ of $\mathcal{T}_i$ with $\phi_i(u)=v$, for $i\in[0,\Delta)$ (recall that the latter counts were computed in Step 2).
The base case is at the leaves, where we initialize the count to $1$ for the color of the leaf. We also store the corresponding most frequent color per node, breaking ties arbitrarily. 

Even if for a single node $v$ in $\mathcal{T}$, we have $\Delta$ nodes $u$
in $\mathcal{T}_0,\ldots,\mathcal{T}_{\Delta-1}$ with $\phi_i(u)=v$, 
the running time of Step 3 amortizes to $\cO(N)$ because: (1) accessing the associated node $u \in \mathcal{T}_i, i \in [0, \Delta),$ of a node $v=\phi_i(u) \in \mathcal{T}$ takes $\cO(1)$ time (e.g., when $\phi_i()$ is implemented by a  pointer from $u$ to $v$); (2) each node of any $\mathcal{T}_i, i\in [0,\Delta),$ is accessed \emph{once}, by the definition of $\phi_i()$; and (3) 
the total size of $\mathcal{T},\mathcal{T}_0,\ldots,\mathcal{T}_{\Delta-1}$ is $\cO(N)$, as explained above.  
Thus, we obtain the following:

\begin{lemma}\label{lem:step3}
    Step 3 of the \SCM algorithm  takes $\cO(N)$ time.
\end{lemma}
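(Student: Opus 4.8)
\textbf{Proof proposal for \Cref{lem:step3}.}
The plan is to bound the total work of Step 3 by exhibiting a constant amount of work per unit of data touched, where the relevant data are the nodes of $\mathcal{T}$ together with the nodes of all the single-color trees $\mathcal{T}_0,\ldots,\mathcal{T}_{\Delta-1}$. First I would set up, as a one-time preprocessing before the traversal, an auxiliary structure that lets each node $v$ of $\mathcal{T}$ access, in time proportional to their number, the list of all internal nodes $u$ (across all $\mathcal{T}_i$) with $\phi_i(u)=v$, together with the count stored at each such $u$ (available from Step~2) and its color $i$. This can be built by one linear scan over $\bigcup_i V(\mathcal{T}_i)$: each internal node $u$ of $\mathcal{T}_i$ knows $\phi_i(u)$, so we simply push the triple $(i,\,u,\,\mathrm{count}(u))$ onto a bucket indexed by $\phi_i(u)$; by \cref{obs:size} this scan and the resulting buckets have total size $\cO(N)$.

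Next I would describe the bottom-up traversal of $\mathcal{T}$ itself. Any post-order traversal of $\mathcal{T}$ can be carried out in $\cO(N)$ time. At a leaf $v$ of color $c$, we initialize its stored pair to $(1,c)$. At an internal node $v$ with children $u_1,\ldots,u_\ell$, we iterate over the at most $\ell$ pairs stored at the children and over the at most $\deg$ triples in the bucket of $v$, maintaining the running maximum frequency seen so far and an associated color; we then store this $(f,c)$ pair at $v$. The correctness of what is stored (that it is indeed $(f^{\max}_v, c^{\max}_v)$) will be argued by induction on the traversal: the frequency of color $i$ in the subtree rooted at $v$ equals the sum of the frequencies of $i$ over the children of $v$, which, restricted to the tree $\mathcal{T}_i$, is captured by the subtree of $\mathcal{T}_i$ rooted at the $\mathcal{T}_i$-node mapping to $v$ (using property~3 of Step~1, that ancestry is preserved by $\phi_i$), so the maximum over all colors is recovered by taking the maximum of the child maxima and the per-color counts at the $\mathcal{T}_i$-nodes attached to $v$ — which is exactly the quantity Step~3 computes. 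I would keep this correctness argument brief since it is essentially the ``crucial observation'' already stated in the text preceding the lemma.

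For the running time, the key accounting is: each child pair is read exactly once (by its parent), and there are $N-1$ parent-child edges in $\mathcal{T}$, so the total work from child pairs is $\cO(N)$; each triple in a bucket is read exactly once (by the node indexing that bucket), and the buckets have total size $\cO(N)$ by \cref{obs:size}; the traversal skeleton itself is $\cO(N)$; and the preprocessing scan is $\cO(N)$. Summing gives $\cO(N)$ total time, which is the claim.

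The main obstacle, if any, is purely one of bookkeeping rather than of idea: one must make sure that the cost of gathering, for each $v$, the $\mathcal{T}_i$-nodes mapping to it is charged to those nodes (of total size $\cO(N)$) and not recomputed per color per node — otherwise a naive implementation that loops over all $\Delta$ trees at every node of $\mathcal{T}$ would spend $\cO(N\Delta)$ time. The bucketing preprocessing described above is exactly what sidesteps this, so once it is in place the lemma follows by the routine amortization above.
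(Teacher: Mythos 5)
Your proposal is correct and follows essentially the same route as the paper: a single bottom-up traversal of $\mathcal{T}$ whose per-node work is charged to the children of $v$ in $\mathcal{T}$ and to the nodes $u$ with $\phi_i(u)=v$ across the single-color trees, which have total size $\cO(N)$ by \cref{obs:size}. The explicit bucketing preprocessing you describe is a sensible implementation detail that the paper leaves implicit, but the amortization argument is the same.
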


\begin{example}
Consider the tree $\mathcal{T}$ in Fig.~\ref{fig:split}. 
The first visited internal node of $\mathcal{T}$ is $h$. As there is no node $u$ in any $\mathcal{T}_i$ with $\phi_i(u)=h$, the two children of $h$ both store counts of $1$. Thus, by breaking ties arbitrarily, $h$ 
stores a pair (frequency, color) set to $(1,\textcolor{CWIRed}{\bullet})$, as shown in the right part of Fig.~\ref{fig:split}. 
The next visited internal node of $\mathcal{T}$ is $d$, so we choose the maximum among the counts $1, 1, 1$ coming from its three children in $\mathcal{T}$ (the $1$ for $h$ was computed above), and $2$ coming from the internal node $u$ with $\phi_0(u)=d$. Since $2$ is the maximum, we store $(2,\textcolor{CWIContrGreen}{\bullet})$ at $d$ in $\mathcal{T}$. 

The next visited internal node is $e$. 
By breaking ties arbitrarily, $e$ in~$\mathcal{T}$ stores $(1,\textcolor{orange}{\bullet})$. 
The next visited internal node is $b$, so we choose the maximum among $2$, $1$, and $3$;  
the first two of these counts come from its two children in $\mathcal{T}$,
and the last 
from the internal node $u$ with $\phi_0(u)=b$. 
Since $3$ is the maximum and its color is $\textcolor{CWIContrGreen}{\bullet}$, $b$ in $\mathcal{T}$ stores $(3,\textcolor{CWIContrGreen}{\bullet})$. 

The next visited internal node is $f$. 
By breaking ties arbitrarily, $f$ in $\mathcal{T}$ stores $(1,\textcolor{CWIRed}{\bullet})$. 
The next visited internal node is $g$. 
By breaking ties arbitrarily, $g$ in $\mathcal{T}$ stores $(1,\textcolor{CWIContrGreen}{\bullet})$. 
The next visited internal node is $c$, so we choose the maximum 
among $1$, $1$, and $2$; the first two counts come from the two children of $c$ in $\mathcal{T}$ and the last from 
the internal node $u$ in $\phi_3(u)=c$.  
Hence $c$ in~$\mathcal{T}$ stores $(2,\textcolor{orange}{\bullet})$. 
The next visited internal node is $a$ (the root), so we choose the maximum among $3, 2, 4, 2, 2$, and~$3$; the first two counts come from the two children of $a$ in $\mathcal{T}$ and the rest from the root nodes $u$ with $\phi_i(u)=a$, for $i\in[0,4)$.  
Hence $a$ in $\mathcal{T}$ stores $(4,\textcolor{CWIContrGreen}{\bullet})$.
At this point, the bottom-up traversal of $\mathcal{T}$ is completed, and 
every node $v$ in $\mathcal{T}$ stores $(f^{\max}_v,c^{\max}_v)$.
\end{example}

\paragraph{Correctness and Wrapping-up.}~We prove the correctness of \SCM 
and that it takes $\cO(N)$ time and space.

\begin{lemma}\label{lem:correct}
The \SCM algorithm solves $\SM$ correctly.    
\end{lemma}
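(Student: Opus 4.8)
The plan is to prove, by a bottom-up induction on $\mathcal{T}$, that after Step~3 every surviving node $v$ of $\mathcal{T}$ stores the pair $(f^{\max}_v, c^{\max}_v)$; the nodes dissolved by the preprocessing contraction are then treated separately via \Cref{obs:unary}. Throughout, for a color $i\in[0,\Delta)$ and a node $v$ of $\mathcal{T}$, let $f_i(v)$ denote the number of leaf descendants of $v$ in $\mathcal{T}$ colored $i$, so that $f^{\max}_v=\max_{i\in[0,\Delta)} f_i(v)$ and $c^{\max}_v$ is any color attaining this maximum. Since $\mathcal{T}$ has been contracted, every internal node $v$ has $\ell\ge 2$ children $u_1,\dots,u_\ell$ whose leaf descendants partition those of $v$, whence $f_i(v)=\sum_{j=1}^{\ell} f_i(u_j)$ for every $i$; in particular $f_c(u_j)\le f_c(v)\le f^{\max}_v$ for every color $c$ and every child $u_j$.

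The first ingredient I would record is a correctness statement for Step~2: for every $i\in[0,\Delta)$ and every internal node $u$ of $\mathcal{T}_i$, the count stored at $u$ equals $f_i(\phi_i(u))$. Indeed, by the defining properties of $\mathcal{T}_i$ and $\phi_i$ — namely that $\phi_i$ maps the leaves of $\mathcal{T}_i$ bijectively onto the leaves of $\mathcal{T}$ colored $i$ (Properties 1 and 2 together with the injectivity of $\phi_i$) and preserves ancestry (Property 3) — the map $\phi_i$ restricts to a bijection between the leaf descendants of $u$ in $\mathcal{T}_i$ and the leaf descendants of $\phi_i(u)$ in $\mathcal{T}$ that are colored $i$, and Step~2 computes exactly the size of the former set.

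For the induction, the base case is a leaf $v$ of some color $i$: the algorithm stores $(1,i)$, which is correct as $f_i(v)=1$ and $f_c(v)=0$ for $c\neq i$. For the inductive step let $v$ be internal with children $u_1,\dots,u_\ell$, assumed to satisfy the claim. The value stored at $v$ is the maximum of the frequencies stored at $u_1,\dots,u_\ell$ and of the counts stored at the nodes $u$ of $\mathcal{T}_i$ with $\phi_i(u)=v$ ($i\in[0,\Delta)$). Each of these is at most $f^{\max}_v$: a child contributes $f^{\max}_{u_j}=f_c(u_j)$ for some $c$, and $f_c(u_j)\le f^{\max}_v$; such a node $u$ of $\mathcal{T}_i$ contributes $f_i(\phi_i(u))=f_i(v)\le f^{\max}_v$ by the first ingredient. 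For the reverse inequality, fix a color $i^\star$ with $f_{i^\star}(v)=f^{\max}_v$; since $v$ has at least one leaf descendant, $f^{\max}_v\ge 1$, so $v$ has a leaf descendant colored $i^\star$. If \emph{all} $i^\star$-colored leaf descendants of $v$ lie in the subtree of a single child $u_j$, then $f_{i^\star}(u_j)=f_{i^\star}(v)=f^{\max}_v$, so by the inductive hypothesis $f^{\max}_{u_j}\ge f^{\max}_v$ and the child $u_j$ already supplies a value equal to $f^{\max}_v$. Otherwise some two $i^\star$-colored leaves $x,y$ of $\mathcal{T}$ lie in the subtrees of distinct children of $v$, so $\LCA_{\mathcal{T}}(x,y)=v$; since $x$ and $y$ are images under $\phi_{i^\star}$ of leaves in $\mathcal{L}_{i^\star}$ and the internal nodes of $\mathcal{T}_{i^\star}$ correspond exactly to the set $\{\LCA_{\mathcal{T}}(\phi_{i^\star}(a),\phi_{i^\star}(b)) : a,b\in\mathcal{L}_{i^\star}\}$, there is an internal node $u$ of $\mathcal{T}_{i^\star}$ with $\phi_{i^\star}(u)=v$, which stores $f_{i^\star}(v)=f^{\max}_v$ by the first ingredient. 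Either way the stored frequency equals $f^{\max}_v$, and the color recorded alongside it attains $f_{i^\star}(v)=f^{\max}_v$, hence is a valid $c^{\max}_v$.

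Finally, for a node dissolved during the preprocessing contraction, \Cref{obs:unary} gives that it has the same mode frequency as its unique surviving descendant along the contracted unary path, and its mode can simply be copied from that descendant; this finishes the proof. I expect the only genuinely delicate point to be the case split in the inductive step — recognizing precisely when the chosen node $v$ of $\mathcal{T}$ occurs as a branching (internal) node of $\mathcal{T}_{i^\star}$, so that Step~2's count at the corresponding node of $\mathcal{T}_{i^\star}$ is what certifies the maximum; everything else reduces to the monotonicity $f_c(u_j)\le f_c(v)$ and bookkeeping through the maps $\phi_i$.
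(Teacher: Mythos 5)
Your proposal is correct and follows essentially the same argument as the paper: the decisive step is the same case split on whether the mode color $i^\star$ of $v$ has all its leaf occurrences under a single child (in which case the child's stored maximum already certifies $f^{\max}_v$) or appears under two distinct children (in which case $v$ is an \LCA of two $i^\star$-colored leaves, so a node $u$ with $\phi_{i^\star}(u)=v$ exists in $\mathcal{T}_{i^\star}$ and supplies the count). Your write-up merely makes explicit the induction, the correctness of Step~2's counts via the properties of $\phi_i$, and the treatment of contracted unary nodes, all of which the paper leaves implicit.
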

\begin{proof}

It suffices to show that, for every internal node $v$ of~$\mathcal{T}$,
it is correct to take the maximum of the counts stored at the children of $v$ in $\mathcal{T}$ and the counts stored at all internal nodes~$u$ of $\mathcal{T}_0,\ldots,\mathcal{T}_{\Delta-1}$ with $\phi_i(u)=v$.
Fix a color~$i$.
If $v$ has at most one child with leaf descendants colored $i$, then
the maximum among the counts stored by the children of $v$ covers the case when $i$ is the mode.
Otherwise, $v$ is the \LCA of at least two distinct leaf descendants of $v$ colored $i$.
In this case, there is, by definition, an internal node $u$ in $\mathcal{T}_i$ with $\phi_i(u)=v$, which stores the number of leaf descendants of~$v$ colored $i$. 
\end{proof}

\begin{lemma}\label{lem:combinedtime}
    The \SCM algorithm takes $\cO(N)$ time and space.
\end{lemma}

\begin{proof} 
Each step of \SCM is executed once (see also \Cref{alg:scm}). 
Thus, the time follows from \Cref{lem:step1,lem:step2,lem:step3}. 
The space used by \SCM is bounded by the time, which is $\cO(N)$.
In particular, Steps 1 and 2 take $\cO(N)$ space by 
\cref{obs:size} (for the leaf lists and trees) and by~\cite{DBLP:conf/latin/BenderF00} (for the \LCA data structure); and Step 3 takes $\cO(N)$ space since the total size of $\mathcal{T}, \mathcal{T}_0, \ldots, \mathcal{T}_{\Delta-1}$ is $\cO(N)$ and we store $\cO(N)$ counts. 
\end{proof}

\Cref{lem:correct,lem:combinedtime} imply \Cref{the:linear-time}.

\paragraph{Anti-mode.}~In a preprocessing step, we compute, for every node $v$ of $\mathcal{T}$, 
the number $c(v)$ of distinct colors in the leaves of the subtree rooted at $v$, in $\cO(N)$ time using a well-known algorithm~\cite{DBLP:conf/cpm/Hui92}. 
Then, using a bottom-up traversal on $\mathcal{T}$, we find the nodes $u$ of $\mathcal{T}$, such that $c(u)<\Delta$ and $c(\textsf{parent}(u))=\Delta$. We call these the \emph{border} nodes of $\mathcal{T}$.
Two observations are immediate. First, every border node and all its strict descendants have an anti-mode with a \emph{zero frequency}. Second, the subtrees rooted at border nodes are pairwise disjoint. Using a bottom-up traversal per subtree, we compute a representative absent color per border node (and propagate it to all its descendants). 
Specifically, for each such subtree, we (re-)use a bit array $B$ of size $\Delta$ to record all colors encountered in its traversal and report the smallest color $i$ with $B[i]=0$. As we report one color per subtree, this takes time linear in the subtree size.
(We re-set all array entries to $0$ once the subtree is processed, and process the next subtree.) The total time is thus $\cO(N)$.
We then focus on the \emph{top} part of $\mathcal{T}$: the set of strict ancestors of the border nodes; every such ancestor has an anti-mode with a \emph{non-zero frequency}.

 To this end, we modify \SCM as follows. 
 The first two steps are identical.
 We change Step~3 from a bottom-up to a top-down traversal.
 The base case is now at the root, where we simply take as the answer, the pair $(f,i)$, where $i$ is an anti-mode of the root of $\mathcal{T}$ and $f$ its frequency 
 (breaking ties arbitrarily). 
 For an arbitrary node~$v$ of $\mathcal{T}$, we perform the following: 
 (I) Let $(f,i)$ be the pair stored at $v$.
 If $\mathcal{T}_i$ does not have a node $u$ with $\phi_i(u)=v$,
 we push $(f,i)$ downwards to its corresponding child (i.e., the child whose subtree has $f$ leaves colored $i$). 
 (II) If any~$\mathcal{T}_i$, with $i\in [0,\Delta)$,
 has a node~$u$ with $\phi_i(u)=v$, for each child $w$ of $u$, with count $f_w$, 
 we push the pair $(f_w,i)$ to the child of $v$ that is an ancestor of $\phi(w)$ 
 in $\mathcal{T}$---this node can be computed in constant time after a linear-time preprocessing of $\mathcal{T}$ for level ancestor queries~\cite{DBLP:journals/jcss/BerkmanV94}. 
 When all pairs are pushed downwards from $v$, each child of $v$ (lying on the top part of~$\mathcal{T}$)
 selects the pair $(f,i)$ with minimum $f$ among its list of pairs (breaking ties arbitrarily) and discards the rest.
 The total number of pairs pushed downwards is asymptotically linear in the total number of edges in the $\mathcal{T}_i$'s and $\mathcal{T}$, and thus linear in $N$.
 As noted, each such push can be performed in constant time and hence the total running time of the algorithm is $\cO(N)$.
 
 To see that the algorithm is correct, fix an edge $(\textsf{parent}(v),v)$ in $\mathcal{T}$ and a color $i$.
 We have three cases when descending from $\textsf{parent}(v)$ to $v$: 
 either the frequency of $i$ is non-zero in $\textsf{parent}(v)$ and zero in $v$;
 or the frequency of $i$ is the same in $\textsf{parent}(v)$ and $v$;
 or the frequency of $i$ is reduced from $\textsf{parent}(v)$ to $v$ to a non-zero value.
 The first case is handled by the preprocessing.
 The second case is handled by Step 3(I).
 In the third case, there exists a node $u$ in $\mathcal{T}_i$ with $\phi_i(u) = v$.
 This case is handled by Step 3(II).
 To conclude, if $i$ is the anti-mode of $v$, 
 the algorithm will consider $i$ as part of its minimum computation.
 We thus obtain the following result. 

\begin{theorem}\label{the:least}
Given a tree $\mathcal{T}$ on $N$ nodes, we can construct, in $\cO(N)$ time, 
a data structure that, given a node $v$ of $\mathcal{T}$ as a query, it returns the anti-mode $c^{\min}_v$ of $v$ in $\cO(1)$ time. In particular, for a given node $v$, 
the query algorithm returns both the anti-mode $c^{\min}_v$ and
its frequency $f^{\min}_v$.
\end{theorem}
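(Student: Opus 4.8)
The plan is to recycle Steps~1 and~2 of the \SCM algorithm unchanged: after $\cO(N)$ preprocessing we have the single-color trees $\mathcal{T}_0,\dots,\mathcal{T}_{\Delta-1}$, the injective maps $\phi_i\colon V(\mathcal{T}_i)\to V(\mathcal{T})$, and, at every node $u$ of every $\mathcal{T}_i$, the value $\cnt(u)$ equal to the number of $i$-colored leaves of $\mathcal{T}$ below $\phi_i(u)$ (with $\cnt(u)=1$ at leaves); by \cref{obs:size} the $\mathcal{T}_i$'s have total size $\cO(N)$. The structural fact I would establish is: for a fixed color $i$ and a node $u\in V(\mathcal{T}_i)$, let $S(u)$ be the set of nodes $v$ of $\mathcal{T}$ for which $u$ is the \emph{shallowest} node of $\mathcal{T}_i$ whose image $\phi_i(u)$ lies in the subtree of $v$. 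Then $S(u)$ is exactly the upward path in $\mathcal{T}$ ending at $\phi_i(u)$ and starting, if $u$ is not the root of $\mathcal{T}_i$, at the child of $\phi_i(\mathrm{parent}_{\mathcal{T}_i}(u))$ on the $\phi_i(u)$-branch (and at the root of $\mathcal{T}$ otherwise); moreover the frequency of color $i$ is constantly $\cnt(u)$ on all of $S(u)$. Both claims follow from the ancestor-preservation property (Property~3) of the $\phi_i$'s, hence \LCA preservation: color $i$ occurs in the subtree of $v$ iff some $\phi_i$-image lies below $v$, and the shallowest such image is the \LCA of all $i$-colored leaves below $v$. The starting node of $S(u)$ is obtained from $\phi_i(u)$ by one level-ancestor query on $\mathcal{T}$, so after the $\cO(N)$-time preprocessing of~\cite{DBLP:journals/jcss/BerkmanV94} (with precomputed depths) all these paths are produced in $\cO(\sum_i|\mathcal{T}_i|)=\cO(N)$ time. (The tempting alternative of turning Step~3 into a top-down pass that pushes candidate frequency-color pairs is delicate, because colors whose frequency drops to zero at a child do not necessarily split at the parent; the path-based reduction sidesteps this.)

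We now have a family $\mathcal{P}$ of $\cO(N)$ labeled upward paths $(b,t,\ell,c)$ of $\mathcal{T}$ with $t$ a weak ancestor of $b$, $\ell=\cnt(u)\in[1,N]$, $c=i$; and, since for a fixed color the $S(u)$'s partition the nodes from which that color is visible, the multiset of labels of paths through $v$ is exactly $\{\text{freq}(i,v):i\text{ occurs below }v\}$, so $f^{\min}_v$ is the smallest such label and $c^{\min}_v$ the color of an achieving path (every node of $\mathcal{T}$ has a leaf descendant, hence lies on at least one path). It remains to solve, in $\cO(N)$ time, the combinatorial problem: given $\cO(N)$ vertical paths with integer labels in $[1,N]$, output for every node the minimum label of a path through it. I would sweep the paths in nondecreasing label order — sorted in $\cO(N)$ time by counting sort — while maintaining a ``nearest still-unanswered ancestor'' structure on $\mathcal{T}$, initially with every node unanswered. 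To process $(b,t,\ell,c)$: jump from $b$ to the nearest unanswered ancestor $x$; while $\mathrm{depth}(x)\ge\mathrm{depth}(t)$ — which, as $x$ and $t$ are both weak ancestors of $b$, exactly means $x$ lies on the segment — set the answer at $x$ to $(\ell,c)$, mark $x$ answered by merging it with its parent, and repeat. Each node is answered exactly once, by the smallest-label path through it, so correctness is immediate, and the total number of structure operations is $\cO(|\mathcal{P}|+N)=\cO(N)$.

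Finally I would undo the initial unary-path contraction as in the mode case: by \cref{obs:unary} every dissolved node has the same anti-mode and frequency as the surviving node of its maximal unary path, so the $\cO(N)$ computed pairs $(c^{\min}_v,f^{\min}_v)$ extend to all nodes in $\cO(N)$ time and a query is an array lookup. The main obstacle is keeping the sweep at $\cO(N)$ rather than $\cO(N\,\alpha(N))$: the ``nearest unanswered ancestor'' requests form a sequence of finds interleaved with unions of a node into its parent over a \emph{static} tree known in advance, which is precisely the setting of Gabow and Tarjan's linear-time set-union algorithm; one must also verify the amortization, namely that each inner-loop iteration either answers a fresh node (at most $N$ times overall) or is the final, failing iteration for some path (at most $|\mathcal{P}|$ times). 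A subsidiary point to settle is the exact meaning of ``least frequent color'': under the standard convention that the anti-mode ranges over colors \emph{present} in the subtree (as for range minority), the construction above is exactly correct since we only ever create paths for colors that actually occur; the alternative convention in which absent colors count with frequency $0$ would additionally require a distinct-colors-per-subtree computation, which is also $\cO(N)$.
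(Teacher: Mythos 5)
Your proof takes a genuinely different route from the paper's. The paper keeps Steps~1 and~2 and replaces Step~3 by a \emph{top-down} pass: each node $v$ forwards to its children its own stored anti-mode pair $(f,i)$ (sent to the unique child containing the $i$-colored leaves, with $(0,i)$ to the others) together with the split counts $(f_w,i)$ coming from any node $u$ of $\mathcal{T}_i$ with $\phi_i(u)=v$, and each child keeps the minimum of what it receives; correctness rests on the monotonicity $f^{\min}_{w}\le f^{\min}_{v}$ for a child $w$ of $v$. So the ``tempting alternative'' you set aside as delicate is exactly what the paper does, and it goes through because only the parent's own minimum pair (not one candidate per color) needs to be forwarded. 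Your alternative --- decomposing, for each color $i$, the set of nodes from which $i$ is visible into the vertical paths $S(u)$ of constant frequency $\cnt(u)$, and then solving ``minimum label over paths through each node'' by a label-sorted sweep with a nearest-unanswered-ancestor union-find --- is sound: the structure of $S(u)$ does follow from Property~3 of the $\phi_i$'s, every node lies on at least one path, the sweep answers each node by the first (hence minimum-label) path reaching it, and your accounting ($\le N$ successful finds plus one failing find per path) is correct. What your route buys is a clean reusable reduction; what it costs is reliance on the Gabow--Tarjan static-tree union-find to shave the $\alpha(N)$ factor, which the paper's push-down avoids entirely.

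One loose end must be closed to match the paper's semantics. The paper's convention --- made explicit by its worked example, where node $e$ is assigned anti-mode red with frequency $0$ precisely because no red leaf lies below it, and needed for the \CPM application --- is that absent colors count with frequency $0$. Your paths only ever carry colors \emph{present} below a node, so as written your structure returns the least frequent present color. You rightly note that a distinct-colors-per-subtree computation decides in $\cO(N)$ time whether $f^{\min}_v=0$, but that settles only the frequency: the theorem also requires reporting $c^{\min}_v$, i.e., naming a witness absent color, and the color count does not produce one. Extracting such a witness for every node in $\cO(N)$ total time needs an additional argument (the paper's propagation of $(0,i)$ pairs yields it automatically); this is repairable within your framework but is not covered by the proposal as stated.
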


We illustrate this result in an example.

\begin{example}
 We will compute the anti-mode for each node of the tree $\mathcal{T}$ in Fig.~\ref{fig:split}, in which $\Delta=4$ and $i=0, 1, 2, 3$ corresponds to $\textcolor{CWIContrGreen}{\bullet},  \textcolor{CWIRed}{\bullet}, \textcolor{CWIBlue}{\bullet}, \textcolor{orange}{\bullet}$. 
 During preprocessing,
 the border nodes are $d,e,f$ and $g$.
 For $d$ (and its descendants) we report $(0,\textcolor{orange}{\bullet})$; $\textcolor{orange}{\bullet}$ is the smallest color that is absent from the  subtree of $d$, as during the traversal we encounter $\textcolor{CWIContrGreen}{\bullet}$, $\textcolor{CWIRed}{\bullet}$, and $\textcolor{CWIBlue}{\bullet}$. 
 For $e$ (and its descendants), we report $(0,\textcolor{CWIRed}{\bullet})$.
 For $f$ (and its descendants), we report $(0,\textcolor{CWIContrGreen}{\bullet})$.
 For $g$ (and its descendants), we report $(0,\textcolor{CWIRed}{\bullet})$.
 The top part of $\mathcal{T}$ contains $a$, $b$, and $c$, the strict ancestors of the border nodes. 
 For the root (node $a$), we take as the answer $(2, \textcolor{CWIRed}{\bullet})$ (breaking ties arbitrarily). Since, for all $i\in [0,4)$, $\mathcal{T}_i$ has a node $u$ with $\phi_i(u)=a$, we proceed to 
 Step 3(II). For $\mathcal{T}_0$, 
 we push $(3,\textcolor{CWIContrGreen}{\bullet})$ to $b$, 
 as the child $w$ of $u$ with $\phi_0(w)=b$ has count $3$, and $b$ is a child of $a$ and ancestor of itself in $\mathcal{T}$. Similarly, we push $(1, \textcolor{CWIContrGreen}{\bullet})$ to $c$, as the child $w$ of $u$ with $\phi_0(w)=8$ has count $1$, and $c$ is the child of $a$ that is an  ancestor of $8$ in $\mathcal{T}$. 
 For $\mathcal{T}_1$, we push $(1,\textcolor{CWIRed}{\bullet})$ to $b$    and $(1,\textcolor{CWIRed}{\bullet})$ to $c$.
 For $\mathcal{T}_2$, we push $(1,\textcolor{CWIBlue}{\bullet})$ to $b$ and $(1,\textcolor{CWIBlue}{\bullet})$ to $c$, and for $\mathcal{T}_3$, 
 we push $(1,\textcolor{orange}{\bullet})$ to $b$ and $(2,\textcolor{orange}{\bullet})$ to $c$. 
 Thus, $b$ selects $(1,\textcolor{CWIRed}{\bullet})$ and $c$
  $(1,\textcolor{CWIRed}{\bullet})$ as these pairs have the minimum frequency  (breaking ties arbitrarily). Fig.~\ref{fig:anti-modecomp} shows the result when the algorithm terminates.

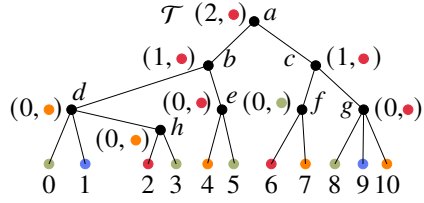
\begin{figure}[!ht]
\begin{center}
\scalebox{0.9}{
\begin{tikzpicture}[- , level distance=1.5cm]
\filldraw[black] (0,0) circle (2pt); 
\filldraw[black] (0.9,-0.65) circle (2pt);
\filldraw[black] (-2.67,-1.3) circle (2pt);  
\filldraw[black] (-0.67,-0.65) circle (2pt);
\filldraw[black] (-0.47,-1.3) circle (2pt);
\filldraw[black]  (0.7,-1.3) circle (2pt);   
\filldraw[black] (1.6,-1.3) circle (2pt);   
\filldraw[black] (-1.37,-1.6) circle (2pt); 

\filldraw[CWIContrGreen] (-3,-2.1) circle (2pt);  
\filldraw[CWIBlue] (-2.47,-2.1) circle (2pt); 
\filldraw[CWIRed] (-1.55,-2.1) circle (2pt);  
\filldraw[CWIContrGreen] (-1.14,-2.1) circle (2pt); 
\filldraw[orange] (-0.68,-2.1) circle (2pt);
\filldraw[CWIContrGreen] (-0.3,-2.1) circle (2pt);
\filldraw[CWIRed]  (0.25,-2.1) circle (2pt);   
\filldraw[orange]  (0.75,-2.1) circle (2pt);   
\filldraw[CWIContrGreen]  (1.18,-2.1) circle (2pt);   
\filldraw[CWIBlue]  (1.59,-2.1) circle (2pt);   
\filldraw[orange]  (1.94,-2.1) circle (2pt);   

\node at (-3,-2.4) {$0$};
\node at (-2.47,-2.4) {$1$};
\node at (-1.55,-2.4) {$2$};
\node at (-1.14,-2.4) {$3$};
\node at (-0.68,-2.4) {$4$};
\node at (-0.3,-2.4) {$5$};
\node at (0.25,-2.4) {$6$};
\node at (0.75,-2.4) {$7$};
\node at (1.18,-2.4) {$8$};
\node at (1.59,-2.4) {$9$};
\node at (1.94,-2.4) {$10$};

\node[left] at (-1.06,-0.55) {$(1,$};
\node[right] at (-0.6,-0.55) {$b$};
\filldraw[CWIRed,left]  (-1.06,-0.55) circle (2pt);
\node[left] at (-0.7,-0.55) {$)$};

\node[left] at (-0.3,0.1) {$(2,$};
\node[right] at (0,0.1) {$a$};
\filldraw[CWIRed,left]  (-0.3,0.1) circle (2pt);
\node[left] at (0.05,0.1) {$)$};

\node[right] at (0.9,-0.55) {$(1,$};
\node[left] at (0.75,-0.58) {$c$};
\filldraw[CWIRed,right]  (1.6,-0.55) circle (2pt);
\node[right] at (1.6,-0.55) {$)$};

\node[left] at (-3,-1.3) {$(0,$};
\node[right] at (-2.8,-1.03) {$d$};
\filldraw[orange,left]  (-3,-1.3) circle (2pt);
\node[left] at (-2.65,-1.3) {$)$};

\node[left] at (-0.77,-1.2) {$(0,$};
\node[right] at (-0.55,-1.12) {$e$};
\filldraw[CWIRed,left]  (-0.77,-1.2) circle (2pt);
\node[left] at (-0.47,-1.2) {$)$};

\node[left] at (-1.75,-1.75) {$(0,$};
\node[right] at (-1.37,-1.55) {$h$};
\filldraw[orange,left]  (-1.73,-1.75) circle (2pt);
\node[left] at (-1.37,-1.75) {$)$};

\node[left] at (0.4,-1.2) {$(0,$};
\node[right] at (0.68,-1.2) {$f$};
\filldraw[CWIContrGreen,left]  (0.4,-1.2) circle (2pt);
\node[left] at (0.7,-1.2) {$)$};

\node[right] at (1.6,-1.3) {$(0,$};
\node[left] at (1.6,-1.3) {$g$};
\filldraw[CWIRed,right]  (2.25,-1.3) circle (2pt);
\node[right] at (2.2,-1.3) {$)$};

\node at (-1.2,0.1) {$\mathcal{T}$};

\draw (0,0) -- (-0.67,-0.65);
\draw (0,0) -- (0.9,-0.65);
\draw (-2.67,-1.3) -- (-0.67,-0.65);
\draw (-0.47,-1.3) -- (-0.67,-0.65);
\draw (0.9,-0.65) -- (0.7,-1.3);
\draw (0.7,-1.3) -- (0.25,-2.1);
\draw (0.7,-1.3) -- (0.75,-2.1);
\draw (0.9,-0.65) -- (1.6,-1.3);
\draw (-0.47,-1.3) -- (-0.29,-2.09);
\draw (-0.47,-1.3) -- (-0.68,-2.09);
\draw (-2.67,-1.3) -- (-2.47,-2.1);
\draw (-2.67,-1.3) -- (-3,-2.1);
\draw (-2.67,-1.3) -- (-1.37,-1.6);
\draw (-1.37,-1.55) -- (-1.14, -2.1);
\draw (-1.37,-1.55) -- (-1.55, -2.1);
\draw (1.6, -1.3) -- (1.18,-2.1);
\draw (1.6, -1.3) -- (1.59,-2.1);
\draw (1.6, -1.3) -- (1.94,-2.1);
\end{tikzpicture}}
\end{center}
\caption{The tree $\mathcal{T}$ of Fig.~\ref{fig:split} annotated with (anti-mode, color) pairs; those of the leaves are the same as their parents.}\label{fig:anti-modecomp}
\end{figure}
\end{example}

\section{String-Processing Applications}\label{sec:applications}
In this section, we discuss the problems underlying  the string-processing applications of \SM and their solutions.

\paragraph{Top-1 Document Retrieval.}~
We formally define the \DRQ problem below.  

\defDSproblem{Top-$1$ Document Retrieval (\DRQ)} 
{A collection $\mathcal{S}$ of strings.}{Given a string $P$, output the string in $\mathcal{S}$ with the maximum number $\ell >0$ of occurrences of $P$ (breaking ties arbitrarily) or $-1$ if $P$ does not occur in any string in $\mathcal{S}$.} 

\begin{theorem}\label{thm:tb-1}
Let $\mathcal{S}$ be a collection of strings of total length $N$ over an integer alphabet $\Sigma$ of size $N^{\cO(1)}$. 
We can construct, in $\cO(N)$ time with high probability,  
a data structure that answers any \DRQ query $P\in\Sigma^m$ for $\mathcal{S}$ in $\cO(m)$ time.
\end{theorem}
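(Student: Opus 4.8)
The plan is to reduce \DRQ to \SM exactly as sketched in \cref{sec:intro}, and then invoke \Cref{the:linear-time}. First I would discard any empty strings in $\mathcal{S}$ --- a nonempty pattern occurs in none of them, and if every string is empty the answer is always $-1$ --- so that $\Delta := |\mathcal{S}| \le N$. Let $T = S_0\$_0 S_1\$_1\cdots S_{\Delta-1}\$_{\Delta-1}$, where $\$_0,\ldots,\$_{\Delta-1}$ are $\Delta$ fresh letters, say the integers $\sigma,\ldots,\sigma+\Delta-1$; then $|T| = N+\Delta = \cO(N)$ and $T$ is over an integer alphabet of size $\sigma+\Delta = N^{\cO(1)}$, so by \Cref{lem:ST} I can build $\ST(T)$ in $\cO(N)$ time. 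I also attach to each explicit node a hash table for $\cO(1)$-time navigation by the first letter of an edge label, which costs $\cO(N)$ total time with high probability~\cite{DBLP:conf/sosa/BenderFKK24} --- this is the only randomized ingredient. Then I color the $|T|$ leaves: the leaf of the suffix of $T$ starting at position $j$ receives the color $i$ of the block $S_i\$_i$ that contains $j$; with the block boundaries precomputed, one $\cO(N)$-time pass over the leaves suffices. Feeding the resulting leaf-colored tree to \Cref{the:linear-time} yields, in $\cO(N)$ time, a data structure storing $(c^{\max}_v, f^{\max}_v)$ at every node $v$ and answering \SM queries in $\cO(1)$ time. That is the entire construction.

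To answer a query $P\in\Sigma^m$ (for $m=0$ I answer at the root), I spell $P$ from the root of $\ST(T)$ in $\cO(m)$ time using the hash tables. If spelling fails --- a mismatch, or running off a leaf before $P$ is exhausted --- then $P$ occurs in no $S_i$ and I return $-1$. Otherwise spelling ends at a point on some edge $(u,w)$, possibly at $w$ itself, and I claim its lower endpoint $w$ is the locus of $P$: $\str(w)$ has $P$ as a prefix, and among explicit nodes $w$ has minimum string depth with that property, since $\str(u)$ is a proper prefix of $P$. I then issue a single $\cO(1)$-time \SM query at $w$ and output $S_{c^{\max}_w}$; the total query time is $\cO(m)$.

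For correctness, the leaves in the subtree of $w$ are precisely the suffixes of $T$ having $P$ as a prefix, i.e.\ the occurrences of $P$ in $T$. Because $P\in\Sigma^m$ uses none of the separators $\$_i$, every such occurrence lies wholly within one block $S_i$; hence the occurrences of $P$ in $T$ are partitioned according to which block they start in, and that block is exactly the color of the corresponding leaf. Therefore the number of color-$i$ leaves below $w$ equals $|\occ_{S_i}(P)|$, so $f^{\max}_w = \max_i |\occ_{S_i}(P)| > 0$ and $c^{\max}_w$ attains that maximum, making $S_{c^{\max}_w}$ a correct answer. The two points that genuinely need care are exactly these: that spelling always lands on an explicit node $w$ which is the locus --- so the \SM query is well defined --- and that the terminator-freeness of $P$ rules out spurious cross-block occurrences, so leaf-color counts really coincide with per-string frequencies. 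Everything else --- the linear construction time, the $\cO(m)$ spelling, and the $-1$ and $m=0$ corner cases --- is routine, and I do not expect a substantial obstacle beyond stating these two observations precisely.
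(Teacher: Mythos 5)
Your proposal is correct and follows essentially the same route as the paper's proof: build $\ST(S_0\$_0\cdots S_{\Delta-1}\$_{\Delta-1})$ with hash tables (the sole source of the "with high probability"), color leaves by the block containing their starting position, apply \Cref{the:linear-time}, and answer a query by spelling $P$ and issuing one \SM query at the nearest explicit node at or below the spelling endpoint. The extra details you supply (discarding empty strings, the $m=0$ case, and the observation that $P\in\Sigma^m$ cannot straddle a separator so leaf-color counts equal per-string frequencies) are correct elaborations of what the paper leaves implicit.
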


\begin{proof}
Let the strings in $\mathcal{S}$ be $S_0, \ldots, S_{\Delta-1}$.
During preprocessing, we construct the suffix tree $\ST(S)$ for $S:=S_0\$_0\ldots S_{\Delta-1}\$_{\Delta-1}$, where each $\$_i\notin\Sigma$ is a unique delimiter, using \cref{lem:ST} (with hash tables), and color every leaf whose path-label starts at a position in $S_i\$_i$ with color $i$.
We conclude our preprocessing with an application of \cref{the:linear-time} on $\ST(S)$. This takes $\cO(N)$ time.
 Given a query pattern $P$, we spell $P$ in $\ST(S)$ in $\cO(m)$ time.
Say that we have
arrived at the explicit node $v$ (if we arrive at an implicit node, we 
take its nearest explicit descendant as $v$). 
If $v$ is branching, 
we return $c^{\max}_P:=c^{\max}_v$; else, it is a leaf colored $c_v$, in which case we return $c^{\max}_P:=c_v$.
If $P$ does not occur in~$S$, we return $-1$.
\end{proof}

Using \cref{the:least} on $\ST(S)$, we can solve the \textsf{Bottom-$1$ DR} problem, which  analogously to \textsf{Bottom} \kDRQ~\cite{sharma}, asks for the string from $\mathcal{S}$ that contains the least number of occurrences of $P$, within the complexities of \cref{thm:tb-1}.

\paragraph{Uniform Pattern Mining.}~The \CPM problem asks for all substrings (patterns) whose frequencies in
each pair of strings in $\mathcal{S}$ differ by at most $\epsilon$. 
We next define the notion of an $\epsilon$-uniform pattern in $\mathcal{S}$ and the \CPM problem: 

\begin{definition}
A string $P$ is \emph{$\epsilon$-uniform} in a collection of strings~$\mathcal{S}$, for an integer $\epsilon\geq 0$, if, for each pair of strings $S,S'\in \mathcal{S}$, it holds that  
$\left||\occ_S(P)|-|\occ_{S'}(P)|\right|\leq \epsilon$. \label{def:fairness-pairwise}
\end{definition}

\defproblem{Uniform Pattern Mining (\CPM)}
{A collection $\mathcal{S}$ of strings and an integer $\epsilon\geq 0$.}
{All $\epsilon$-uniform patterns in $\mathcal{S}$.}

\begin{observation}\label{obs:minmax}
A string $P$ is $\epsilon$-uniform in $\mathcal{S}$ if and only if 
$\max_{S\in \mathcal{S}}|\occ_S(P)|-\min_{S\in \mathcal{S}}|\occ_S(P)|\leq \epsilon$.  
\end{observation}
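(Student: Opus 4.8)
The plan is to prove the equivalence by unfolding Definition~\ref{def:fairness-pairwise} in both directions, exploiting that $\mathcal{S}$ is finite, so the maximum and minimum frequencies are attained by concrete strings. If $\setsize{\mathcal{S}}\le 1$, both sides hold trivially (they equal $0\le\epsilon$), so I assume $\setsize{\mathcal{S}}\ge 2$.

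For the ``only if'' direction, I will fix $S^+\in\mathcal{S}$ attaining $\max_{S\in\mathcal{S}}|\occ_S(P)|$ and $S^-\in\mathcal{S}$ attaining $\min_{S\in\mathcal{S}}|\occ_S(P)|$. Since $P$ is $\epsilon$-uniform, instantiating the defining inequality at the pair $(S^+,S^-)$ gives $\max_{S\in\mathcal{S}}|\occ_S(P)|-\min_{S\in\mathcal{S}}|\occ_S(P)| = \big||\occ_{S^+}(P)|-|\occ_{S^-}(P)|\big|\le\epsilon$, as desired. For the ``if'' direction, I will take an arbitrary pair $S,S'\in\mathcal{S}$; both $|\occ_S(P)|$ and $|\occ_{S'}(P)|$ lie in the interval $[\min_{T\in\mathcal{S}}|\occ_T(P)|,\ \max_{T\in\mathcal{S}}|\occ_T(P)|]$, whose length is assumed to be at most $\epsilon$, so $\big||\occ_S(P)|-|\occ_{S'}(P)|\big|\le\epsilon$; since $S,S'$ were arbitrary, $P$ is $\epsilon$-uniform by Definition~\ref{def:fairness-pairwise}.

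There is no real obstacle here: the only point requiring a word of care is the existence of the extremal strings, which is immediate from the finiteness of $\mathcal{S}$, together with the degenerate singleton case. The value of the observation lies in what comes next: it reduces the pairwise $\epsilon$-uniformity test for $P$ to a single comparison between $\max_{S\in\mathcal{S}}|\occ_S(P)|$ and $\min_{S\in\mathcal{S}}|\occ_S(P)|$, i.e.\ between the most and least frequent colors among the leaves below the locus of $P$ in the generalized suffix tree of $\mathcal{S}$ -- precisely the quantities that \Cref{the:linear-time} and \Cref{the:least} precompute at every node in $\cO(N)$ time.
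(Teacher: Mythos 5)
Your proof is correct and is exactly the immediate argument the paper has in mind: the paper states this observation without proof, treating it as a direct consequence of Definition~\ref{def:fairness-pairwise}, and your two directions (instantiating at the extremal pair, and bounding an arbitrary pair by the interval length) are the standard way to make that explicit. The singleton caveat and the remark on how the observation feeds into \Cref{thm:fpm} are fine but not needed.
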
 

\begin{theorem}\label{thm:fpm}
Consider an integer $\epsilon \geq 0$ and a collection~$\mathcal{S}$ of strings of total length $N$ over an integer alphabet $\Sigma$ of size~$N^{\cO(1)}$.
The 
\CPM problem 
can be solved in $\cO(N + \Output)$ time using $\cO(N)$ space, where $\Output$ is the output size.
\end{theorem}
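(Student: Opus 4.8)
The plan is to reduce \CPM to \SM on a generalized suffix tree and then read the answer directly off the data structures of \cref{the:linear-time} and \cref{the:least}. First I would build, exactly as in the proof of \cref{thm:tb-1}, the suffix tree $\ST(S)$ of $S := S_0\$_0\cdots S_{\Delta-1}\$_{\Delta-1}$ (with $\cO(N)$-time hash tables at the nodes, by \cref{lem:ST}), where each $\$_i\notin\Sigma$ is a distinct delimiter, and color every leaf with the index $i$ of the block $S_i\$_i$ in which its path-label begins; $S$ has length $\cO(N)$. Running \cref{the:linear-time} and \cref{the:least} on this leaf-colored tree gives, for every node $v$, the pairs $(f^{\max}_v, c^{\max}_v)$ and $(f^{\min}_v, c^{\min}_v)$; together with the suffix-tree construction this is $\cO(N)$ time with high probability and $\cO(N)$ space. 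Since the subtree of the root contains a leaf of every color, these satisfy $f^{\max}_v = \max_{i\in[0,\Delta)} n_i(v)$ and $f^{\min}_v = \min_{i\in[0,\Delta)} n_i(v)$, where $n_i(v)$ is the number of color-$i$ leaves in the subtree of $v$ (and $n_i(v)=0$ is allowed).

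The structural core of the argument is the following. Because the $\$_i$ are unique and do not belong to $\Sigma$, a non-empty delimiter-free string occurs in $S$ only strictly inside the blocks. Hence, for a non-empty $P$ that occurs in some $S_i$, with locus $v$ in $\ST(S)$, the occurrences of $P$ in $S$ are in bijection with the leaves of the subtree of $v$, and the color-$i$ leaves are exactly the occurrences of $P$ inside $S_i$; therefore $|\occ_{S_i}(P)| = n_i(v)$ for every $i$. In particular the vector of per-string frequencies of $P$ depends only on its locus, so by \cref{obs:minmax} every $P$ with locus $v$ is $\epsilon$-uniform if and only if $f^{\max}_v - f^{\min}_v \le \epsilon$. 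Moreover, every non-empty $P$ that occurs in some $S_i$ is a delimiter-free prefix of $\str(v)$ for its locus $v$, namely one of string depth in $(\sd(\textsf{parent}(v)), \ell_v]$, where $\ell_v$ is the smaller of $\sd(v)$ and the string depth at which the first delimiter appears on the root-to-$v$ path; distinct such strings have distinct (locus, length), so ranging $v$ over all nodes and listing these prefixes enumerates every substring of the $S_i$'s exactly once.

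It then remains to perform a single traversal of $\ST(S)$ in which I precompute $\sd(\cdot)$ and the first-delimiter depths (all in $\cO(N)$ time) and, for each node $v$ with $f^{\max}_v - f^{\min}_v \le \epsilon$, emit its $\max(0,\ell_v - \sd(\textsf{parent}(v)))$ patterns, each reported as a pointer into $S$ together with a length (or spelled out letter by letter if literal strings are required). This costs $\cO(N)$ for the traversal plus $\cO(1)$ per reported pattern (resp.\ time proportional to its length), i.e.\ $\cO(N + \Output)$, and the working space stays $\cO(N)$ since the output is streamed. I expect the only delicate points to be the off-by-one bookkeeping of $\ell_v$ that keeps the enumerated prefixes delimiter-free (so that no string containing a $\$_i$ is ever output and nothing is output twice), and the reliance on \cref{the:least} returning the minimum over all $\Delta$ colors rather than only over the colors present below $v$ --- which is precisely what makes $f^{\max}_v - f^{\min}_v$ equal the quantity in \cref{obs:minmax}.
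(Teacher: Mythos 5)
Your proof is correct and follows essentially the same route as the paper: build the colored generalized suffix tree, apply \cref{the:linear-time,the:least}, mark nodes with $f^{\max}_v-f^{\min}_v\le\epsilon$, and enumerate per marked node the prefixes of $\str(v)$ of length in $(\sd(\textsf{parent}(v)),\cdot\,]$, using that all such prefixes share the leaf set (hence the frequency vector) of $v$. Your truncation at the first delimiter depth $\ell_v$ is a welcome extra precaution the paper elides (and the hash tables you build are unnecessary here, since no pattern is ever spelled, so the construction can be made deterministic $\cO(N)$ as in the paper).
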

\begin{proof}
Let the strings in $\mathcal{S}$ be $S_0, \ldots, S_{\Delta-1}$.
During preprocessing, we construct $\ST(S)$ for $S:=S_0\$_0\ldots S_{\Delta-1}\$_{\Delta-1}$, where each $\$_i\notin\Sigma$ is a unique delimiter, using \cref{lem:ST}, and color every leaf whose path-label starts at a position in $S_i\$_i$ with color $i$. 
We apply \cref{the:linear-time,the:least} on $\ST(S)$ in $\cO(N)$ time.
Then, we mark every explicit node~$v$ of $\ST(S)$, such that $f^{\max}_v - f^{\min}_v\leq \epsilon$. 
Then, we output, for each marked node $v$, each prefix of $\str(v)$ with length in $[|\str(\textrm{parent}(v))|+1,|\str(v)|]$.
These are precisely the $\epsilon$-uniform patterns in $\mathcal{S}$ due to \cref{obs:minmax} and the fact that, for any explicit node $v$, all implicit nodes along the edge from $\textrm{parent}(v)$ to $v$ have the same mode and anti-mode frequencies as $v$ since they have the same leaf descendants as $v$.
\end{proof}

\paragraph{Consistent Query String.}~The \CSQ  problem asks to 
count the distinct $q$-grams $Q\in\Sigma^q$, 
whose frequency in $P$ is in the interval $[{\min_{S\in \mathcal{S}}|\occ_S(Q)|-\epsilon}, \max_{S\in \mathcal{S}}|\occ_S(Q)|+\epsilon]$, for a string collection $\mathcal{S}$ and an integer $\epsilon\geq 0$. We call such a $q$-gram of $P$ \emph{$\epsilon$-consistent} with~$\mathcal{S}$ -- we may drop ``with $\mathcal{S}$'' when $\mathcal{S}$ is clear from the context. 

\defDSproblem{Consistent Query String (\CSQ)}
{A collection $\mathcal{S}$ of strings.}
{Given a string $P$ and integers $q>0$, $\epsilon\geq 0$, output the number of distinct $q$-grams of $P$ that are $\epsilon$-consistent.} 

\begin{theorem}\label{thm:q-cq}
Let $\mathcal{S}$ be a collection of strings with total length $N$ over an integer alphabet $\Sigma$ of size $N^{\cO(1)}$.
We can construct, in $\cO(N)$ time with high probability, 
a data structure that answers any \CSQ query with $P\in \Sigma^m$ in $\cO(m)$ time.
\end{theorem}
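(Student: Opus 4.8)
The plan is to reuse the suffix-tree machinery already set up for \DRQ and \CPM, additionally annotated via \Cref{the:least}, and to answer a query by a single matching-statistics pass of $P$ over $S$, combined with a weighted-ancestor query per distinct $q$-gram. In preprocessing I would build $\ST(S)$ for $S:=S_0\$_0\cdots S_{\Delta-1}\$_{\Delta-1}$ by \cref{lem:ST} (with hash tables for $\cO(1)$ child navigation), colour every leaf by the index $i$ of the document in which its suffix starts, and apply \Cref{the:linear-time} and \Cref{the:least} to annotate every explicit node $v$ with the pair $(f^{\min}_v,f^{\max}_v)$. Since every $q$-gram of $P$ lies in $\Sigma^q$ and thus contains no delimiter, the occurrences of such a $q$-gram $Q$ in $S$ are exactly the leaves below its locus $v$ in $\ST(S)$, and a leaf there coloured $i$ is exactly an occurrence of $Q$ in $S_i$; hence $\max_{S\in\mathcal{S}}|\occ_S(Q)|=f^{\max}_v$ and $\min_{S\in\mathcal{S}}|\occ_S(Q)|=f^{\min}_v$ (the minimum correctly drops to $0$ when some document has no occurrence, since the anti-mode ranges over all $\Delta$ colours). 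I would keep the suffix links of $\ST(S)$ and equip it with a weighted-ancestor structure that, given an explicit node and a string depth $d$, returns in $\cO(1)$ time the locus at depth $d$ on the root-to-node path. All of this is $\cO(N)$ time, with high probability because of the hashing.

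For a query $P\in\Sigma^m$, $q$, $\epsilon$ (the answer is $0$ if $q>m$), I would build $\ST(P)$ in $\cO(m)$ time and run one DFS to compute, for every explicit node, its number of leaf descendants (its occurrence count in $P$), and, for every leaf (suffix) with starting position $j\le m-q$, the lowest explicit ancestor of string depth $\ge q$ -- call it the \emph{$q$-locus} of $j$. Distinct $q$-grams of $P$ are in bijection with distinct $q$-loci, and the occurrence count in $P$ of a $q$-gram equals the leaf count of its $q$-locus. I would then compute the matching statistics of $P$ against $S$: for each $j$, the length $\ell_j$ of the longest prefix of $P[j\dd m-1]$ occurring in $S$ and a pointer to its locus in $\ST(S)$; using suffix links and skip/count on the hashed $\ST(S)$ this takes $\cO(m)$ total time.

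Now take one representative position $j$ per distinct $q$-gram $Q$. If $\ell_j<q$, then $Q$ does not occur in $S$, the admissible interval is $[-\epsilon,\epsilon]$, and $Q$ is $\epsilon$-consistent iff $|\occ_P(Q)|\le\epsilon$. If $\ell_j\ge q$, one weighted-ancestor query at string depth $q$ from the matching-statistics locus yields the locus $v$ of $Q$ in $\ST(S)$, and $Q$ is $\epsilon$-consistent iff $f^{\min}_v-\epsilon\le|\occ_P(Q)|\le f^{\max}_v+\epsilon$. These tests depend only on $Q$, so I would mark each decided $q$-locus of $\ST(P)$, reset the marks afterwards via a list, and output the number of $\epsilon$-consistent distinct $q$-loci. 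Each distinct $q$-gram costs $\cO(1)$, so the query runs in $\cO(m)$ time, and correctness follows from \Cref{the:linear-time,the:least} together with the observation about occurrences of delimiter-free patterns above.

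The main obstacle is keeping the per-$q$-gram cost at $\cO(1)$: matching statistics only localise $Q$ at the deepest matched point, and cutting that path back to string depth exactly $q$ is a weighted-ancestor query, which for general trees costs $\Omega(\log\log n)$; the argument must invoke the special structure of suffix trees (increasing, polynomially bounded string depths along root-to-leaf paths) to obtain constant time. The remaining work is bookkeeping: the locus of $Q$ is the lower explicit endpoint of its edge, and its stored annotation is valid because all implicit points on an edge share the same set of leaf descendants; the terminator $\$$ must not be counted as part of a $q$-gram; and equal $q$-grams arising from different positions must be counted once, which is handled by aggregating on the $q$-loci of $\ST(P)$.
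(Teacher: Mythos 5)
Your preprocessing and the overall query architecture match the paper's: suffix tree of $S_0\$_0\cdots S_{\Delta-1}\$_{\Delta-1}$ with leaves colored by document, mode and anti-mode annotations via \Cref{the:linear-time,the:least} so that $f^{\max}_v$ and $f^{\min}_v$ give $\max_S|\occ_S(Q)|$ and $\min_S|\occ_S(Q)|$ at the locus of $Q$, a suffix tree of $P$ to get $|\occ_P(Q)|$ and to deduplicate repeated $q$-grams, and a suffix-link walk over $\ST(S)$ to localize each $q$-gram. (You are also right to invoke \Cref{the:least} explicitly; the paper's proof uses $f^{\min}_v$ but only states an application of \Cref{the:linear-time}.) The one place you diverge is the localization step, and it is the one place your argument is not self-contained. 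You compute full matching statistics --- the locus of the longest prefix of $P[j\dd m-1]$ occurring in $S$ --- and then cut back to string depth exactly $q$ with a weighted-ancestor query, which you yourself flag as the obstacle: for general node-weighted trees such queries cost $\Omega(\log\log n)$, and getting $\cO(1)$ on suffix trees is a genuinely nontrivial result (it is known, e.g.\ Gawrychowski, Lewenstein and Nicholson, ESA 2014, but you neither prove nor cite it, so as written this is a gap). The paper avoids the issue entirely: instead of full matching statistics it iterates, via suffix links, over the loci of the longest prefixes of the \emph{length-$q$ windows} $P[i\dd i+q)$ that occur in $S$, i.e.\ it truncates the descent at string depth $q$ and never has to climb back up. The standard amortization (each suffix link drops the depth by one, total downward movement is at most $m$) still gives $\cO(m)$ overall, and each window is handled with $\cO(1)$ extra work. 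So your proof becomes correct either by importing the $\cO(1)$ weighted-ancestor result for suffix trees, or --- more economically --- by replacing the ``match maximally, then cut back'' step with the depth-$q$-truncated traversal. The remaining bookkeeping in your write-up (delimiter-free $q$-grams, aggregating on $q$-loci of $\ST(P)$ to count each distinct $q$-gram once, taking the lower explicit endpoint of an edge as the annotated node, and the empty-interval case when $Q$ does not occur in $S$ at all) is all consistent with the paper.
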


\begin{proof}
Let the strings in $\mathcal{S}$ be $S_0, \ldots, S_{\Delta-1}$.
During preprocessing, we construct the suffix tree $\ST(S)$ for $S:=S_0\$_0\ldots S_{\Delta-1}\$_{\Delta-1}$, where each $\$_i\notin\Sigma$ is a unique delimiter, using \cref{lem:ST} (with hash tables), and color every leaf whose path-label starts at a position in $S_i\$_i$ with color $i$.
We then apply \cref{the:linear-time} on $\ST(S)$ in $\cO(N)$ time.

Given a query with $P \in \Sigma^m$, we construct the suffix tree $\ST(P)$ using \cref{lem:ST}, and compute $|\occ_P(Q)|$, for every string $Q$ of length $q$ that occurs in $P$ in $\cO(m)$ total time.
We also use $\ST(P)$ to mark every position $i'$ of $P$ such that $P[i\dd i+q)=P[i'\dd i'+q)$, for some $i<i'$, to avoid double-counting.
We then spell~$P$ in $\ST(S)$ using suffix links~\cite{DBLP:books/cu/Gusfield1997}, to iterate, for increasing $i$, over the loci (in $\ST(S)$) of the longest prefixes of $P[i\dd i+q)$ that occur in $S$.
We maintain a counter, which is initialized as zero, as follows. If $Q:=P[i\dd i+q)$ occurs in $\mathcal{S}$, $i$ is not marked, and 
${|\occ_P(Q)|\in [f^{\min}_v-\epsilon,f^{\max}_v+\epsilon]}$, where $v$ is the node of $\ST(S)$ with path-label $Q$ (if this is an implicit node, we 
take its nearest explicit descendant as $v$), 
we increase the counter by one. 
The value of the counter is output after processing all of $P$.
Spelling $P$ takes $\cO(m)$ time~\cite{DBLP:books/cu/Gusfield1997};
for every $q$-gram $Q$, we make $\cO(1)$ elementary $\cO(1)$-time operations, and hence the query time follows. 
\end{proof}

\section{Generalizations of \SM}\label{sec:generalizations}

We describe two natural generalizations of \SM that are solved by straightforward modifications to our \SCM algorithm.

\paragraph{Node-colored Trees.}~One may wonder whether the algorithms for \SM
rely strictly on the fact that $\mathcal{T}$ has colors only on the leaves. This is \emph{not} the case. We call \NCSMQ the generalization of \SM in which \emph{all nodes} of $\mathcal{T}$ are colored.
The corollary below shows a simple linear-time reduction in which an algorithm for \NCSMQ can be gleamed 
from any algorithm for \SM.

\begin{corollary}
Given a tree $\mathcal{T}$ on $N$ nodes, we can construct, in $\cO(N)$ time, a data structure that can answer any \NCSMQ query in $\cO(1)$ time.
In particular, for a given node $v$,  
the query algorithm returns both the most frequent color in the subtree
rooted at $v$ and its frequency.
\end{corollary}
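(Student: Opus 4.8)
The plan is to reduce \NCSMQ to \SM by a simple local gadget and then invoke \Cref{the:linear-time}. Given a node-colored tree $\mathcal{T}$ on $N$ nodes, I would build a leaf-colored tree $\mathcal{T}'$ as follows: keep all nodes and edges of $\mathcal{T}$, and for every node $v$ of $\mathcal{T}$ attach a fresh leaf $\ell_v$ as a child of $v$, colored with the color that $v$ has in $\mathcal{T}$. After this, every original node of $\mathcal{T}$ — including its former leaves — is an internal node of $\mathcal{T}'$, and the leaf set of $\mathcal{T}'$ is exactly $\{\ell_v : v\in V(\mathcal{T})\}$.

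The key observation is that, for any node $v$ of $\mathcal{T}$, the leaves of $\mathcal{T}'$ lying in the subtree rooted at $v$ are precisely $\{\ell_u : u \text{ is a descendant of } v \text{ in } \mathcal{T}\}$, and $\ell_u$ carries the color of $u$. Hence the multiset of leaf-colors in the subtree of $v$ in $\mathcal{T}'$ equals the multiset of node-colors in the subtree of $v$ in $\mathcal{T}$, so the mode and its frequency computed by \SM on $\mathcal{T}'$ for node $v$ coincide with the most frequent color in the subtree of $v$ in $\mathcal{T}$ and its frequency. Adding the gadget leaves does not disturb ancestor–descendant relationships among the original nodes, since we only append leaves; and the former leaves of $\mathcal{T}$, now internal, still have their color counted via the $\ell_v$ attached to them.

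For the complexity, $\mathcal{T}'$ has $2N$ nodes and is built by one traversal of $\mathcal{T}$ in $\cO(N)$ time; applying \Cref{the:linear-time} to $\mathcal{T}'$ yields, in $\cO(N)$ time, a data structure answering \SM queries on $\mathcal{T}'$ in $\cO(1)$ time. An \NCSMQ query on a node $v$ of $\mathcal{T}$ is then answered by a single \SM query on the corresponding node $v$ of $\mathcal{T}'$, returning both $c^{\max}_v$ and $f^{\max}_v$ in $\cO(1)$ time. There is essentially no obstacle here — the reduction is deliberately lightweight and the heavy lifting is entirely contained in \Cref{the:linear-time}.
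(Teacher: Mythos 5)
Your reduction is correct and is essentially the paper's own argument: the paper likewise builds $\mathcal{T}'$ by hanging a fresh leaf colored with $v$'s color off each node (it does so only for internal nodes, keeping the original leaves' colors in place, whereas you also re-hang leaves off the former leaves — an immaterial difference) and then applies \Cref{the:linear-time}. Both versions preserve the multiset of colors in every subtree, cost $\cO(N)$, and answer an \NCSMQ query by one \SM query at the same node.
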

\begin{proof}
In $\cO(N)$ time, we transform the given instance of \NCSMQ to an instance of \SM on $\cO(N)$ nodes and apply \cref{the:linear-time}.
We construct a tree~$\mathcal{T}'$ from~$\mathcal{T}$ by: 
(I) attaching a new leaf child to each internal node and coloring it with that internal node's color, and
(II) removing all colors from internal nodes.
A direct application of \cref{the:linear-time} to $\mathcal{T}'$ then yields, in $\cO(N)$ time, a data structure for $\NCSMQ$ with query time $\cO(1)$.
\end{proof}

\paragraph{\texorpdfstring{$k$}{k} Most Frequent Colors.}~The $k$-\RM problem asks for the $k$ most frequent colors in a range of an array~\cite{DBLP:journals/corr/abs-1101-4068}. 
Analogously, we define the $k$-\SM problem asking for the $k$ most frequent colors in the leaves of the subtree rooted at a node of~$\mathcal{T}$ (breaking ties arbitrarily).
\cref{cor:k-most-frequent} is obtained by slightly modifying the \SCM algorithm from \cref{sec:linear}.

\begin{corollary}\label{cor:k-most-frequent}
Given a tree $\mathcal{T}$ on $N$ nodes, for any $k\leq \Delta$, 
we can construct, in $\cO(kN)$ time, a data structure that can answer any $k$-\SM query in $\cO(k)$ time.
In particular, for a given node $v$,   
the query algorithm returns both the $k$ most frequent colors in the leaves of the subtree
rooted at $v$ and their frequencies in sorted order.
\end{corollary}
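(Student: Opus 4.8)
The plan is to mimic the three-step \SCM algorithm, replacing every scalar ``(frequency, color)'' annotation by a length-$k$ sorted list of the $k$ most frequent (frequency, color) pairs, and then argue that all bookkeeping stays within the claimed bounds. Steps 1 and 2 are unchanged: we still preprocess $\mathcal{T}$ (contract unary paths, build an \LCA data structure), split $\mathcal{T}$ into the single-color trees $\mathcal{T}_0,\ldots,\mathcal{T}_{\Delta-1}$ with their mappings $\phi_i$ via \Cref{lem:trees}, and compute at each internal node of each $\mathcal{T}_i$ the count of its leaf descendants. By \Cref{obs:size} this all still costs $\cO(N) \subseteq \cO(kN)$ time.

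The modification is in Step 3. We again traverse $\mathcal{T}$ bottom-up, but now each node $v$ stores a sorted list $L_v$ of up to $k$ pairs $(f,c)$ — the $k$ largest color frequencies in the subtree of $v$, one per distinct color. At a leaf of color $c$ we set $L_v = [(1,c)]$. At an internal node $v$ with children $w_1,\ldots,w_\ell$ in $\mathcal{T}$, the multiset of ``candidate'' contributions is: (i) the lists $L_{w_1},\ldots,L_{w_\ell}$, and (ii) for each $i \in [0,\Delta)$ and each node $u$ of $\mathcal{T}_i$ with $\phi_i(u)=v$, the single pair $(\text{count}(u), i)$. Exactly as in \Cref{lem:correct}, for each color $i$ the true frequency of $i$ in the subtree of $v$ is either the maximum frequency of $i$ appearing among the children's lists (when at most one child sees color $i$) or the count stored at the node $u \in \mathcal{T}_i$ with $\phi_i(u)=v$ (when $v$ is the \LCA of two leaves colored $i$); in the first case at most one child's list can carry the correct value for $i$ — but a color $i$ that is not among the top $k$ of every child might still be among the top $k$ of $v$ only if some child's list actually recorded it, which happens whenever $i$'s frequency is positive in exactly one child (then it is that child's value) or $v$ is an \LCA for $i$ (then it is the $\mathcal{T}_i$ node's value); so it suffices to merge (i) and (ii), deduplicating by color and keeping the largest frequency per color, then take the $k$ largest. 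The merge can be done by a $k$-bounded linear-time merge of the already-sorted lists: since each $L_{w_j}$ has length $\le k$, truncating each to its top $k$ and merging $\ell+1$ sorted sequences while discarding duplicate colors and stopping after $k$ outputs takes time $\cO\!\big(k + \sum_j |L_{w_j}| + (\text{\# of } \mathcal{T}_i\text{-nodes mapped to } v)\big)$, using a bucket/priority argument or simply a standard $k$-way merge with a size-$\cO(\ell)$ heap; summed over all $v$ this telescopes to $\cO(kN)$ because $\sum_v \deg_{\mathcal{T}}(v) = \cO(N)$, each contributed list has length $\cO(k)$, and $\sum_{i,u} 1 = \cO(N)$ by \Cref{obs:size}. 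Recovering the answer for contracted unary nodes is immediate since they share leaf descendants with a surviving node.

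For the query, given $v$ we return $L_v$ directly in $\cO(k)$ time, which is already sorted by frequency with the colors and their frequencies. Correctness follows from the invariant that $L_v$ holds the $k$ colors of largest frequency in the subtree of $v$ together with those frequencies, proved by induction over the bottom-up traversal using the case analysis of \Cref{lem:correct} per color. The space is bounded by the running time, $\cO(kN)$.

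The main obstacle I expect is making the per-color correctness argument of \Cref{lem:correct} robust to truncation: a color that belongs in $v$'s top $k$ but was truncated out of every child's top-$k$ list must be ``resurfaced'' by a $\mathcal{T}_i$ node mapped to $v$, and one has to verify this is always the case — i.e., if color $i$ is positive in $v$, has frequency not attained by any single child (so $v$ is its \LCA), then a node $u\in\mathcal{T}_i$ with $\phi_i(u)=v$ exists and stores exactly that frequency; whereas if $i$'s frequency in $v$ equals its frequency in a unique child $w$, then $i$ appears in $L_w$ unless it was truncated — but if it was truncated, $L_w$ already contains $k$ colors each with frequency $\ge$ that of $i$ in $w = $ that of $i$ in $v$, and those $k$ colors also have frequency $\ge$ that value in $v$, so $i$ is genuinely not in $v$'s top $k$ and dropping it is correct. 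Spelling this dichotomy out carefully (and handling ties consistently with ``breaking ties arbitrarily'') is the delicate part; the rest is a routine adaptation of \SCM.
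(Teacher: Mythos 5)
Your overall plan is the paper's plan: keep Steps 1--2 of \SCM unchanged and upgrade Step 3 so that each node stores a list of the $k$ most frequent (color, frequency) pairs, with per-color correctness inherited from the case analysis of \Cref{lem:correct} and the total work amortized against the $\cO(N)$ combined size of $\mathcal{T}, \mathcal{T}_0, \ldots, \mathcal{T}_{\Delta-1}$ (\Cref{obs:size}). Your truncation dichotomy --- a color dropped from every child's top-$k$ either resurfaces via a $\mathcal{T}_i$-node mapped to $v$ (when $v$ is an \LCA for that color) or is dominated at $v$ by the $k$ colors that displaced it in its unique contributing child --- is exactly the argument needed for correctness, and you spell it out more carefully than the paper does (the paper simply asserts that \SCM ``naturally extends'').

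The one concrete gap is in the time bound for delivering the lists \emph{in sorted order} within $\cO(kN)$ total. Your mechanism is to keep every $L_v$ sorted and combine children by a $k$-way merge; with a size-$\cO(\ell)$ heap this costs $\Omega(\log \ell)$ per extraction, and because duplicates of already-seen colors may be popped before $k$ distinct colors are collected, the per-node cost can reach $\Theta(k\ell\log\ell)$, summing to $\Theta(kN\log N)$ in the worst case --- and the alluded-to ``bucket/priority argument'' is not developed (per-node bucketing over frequencies up to $N$ is too expensive). The paper sidesteps this with two specific ingredients: at each node it extracts the $k$ largest candidates \emph{unsorted} via the classic linear-time selection algorithm (so each node costs $\cO(\ell k + m_v)$, summing to $\cO(kN)$), and only at the very end sorts all $N$ lists simultaneously with a single global radix/counting sort over the $kN$ frequencies, which is $\cO(kN)$ because every frequency lies in $[0,N]$. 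Without that global sort (or an equivalent device), your construction achieves $\cO(kN\log k)$ or $\cO(kN\log N)$ rather than the claimed $\cO(kN)$; for the intended application ($k=\cO(1)$ in \kDRQ) this is immaterial, but for general $k\leq\Delta$ the stated bound needs the paper's final step.
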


\begin{proof}
    The \SCM algorithm 
    naturally extends for $k$-\SM.
    For every node $v$ of $\mathcal{T}$, in Step 3, we now store the $k$ most frequent colors (breaking ties arbitrarily) and their frequencies as a list $(c^1_v, f^1_v), \dots, (c^k_v, f^k_v)$.
    (If, for some node, we have fewer than $k$ colors, we leave some entries undefined.)
    We thus only need to verify that given $n>k$ integers we can select the $k$ largest ones 
    in $\cO(n)$ time to obtain the claimed generalization.
    We do this using the classic linear-time selection algorithm~\cite{DBLP:journals/jcss/BlumFPRT73}. 
    After computing one list $(c_1, f_1), \dots, (c_k, f_k)$ per node, we sort the lists by frequency using a single global radix sort.
    This takes $\cO(kN)$ time since each frequency is at most $N$.
\end{proof}

\Cref{cor:k-most-frequent} implies a linear-space data structure for any $k=\cO(1)$ for \kDRQ, which answers queries in optimal time \emph{and} can be constructed in linear time. In particular, we make the same reduction as in \Cref{thm:tb-1}, but instead of using \Cref{the:linear-time}, we use \Cref{cor:k-most-frequent}. We obtain the following result.

\begin{theorem}\label{thm:tb-k}
Let $\mathcal{S}$ be a collection of strings of total length $N$ over an integer alphabet $\Sigma$ of size $N^{\cO(1)}$.
For any $k=\cO(1)$, we can construct, in $\cO(N)$ time with high probability, a data structure that can answer any \kDRQ query $P\in\Sigma^m$ for $\mathcal{S}$ in $\cO(m)$ time.
\end{theorem}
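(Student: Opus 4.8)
The plan is to compose the suffix-tree reduction already used in \Cref{thm:tb-1} with the $k$-\SM data structure of \Cref{cor:k-most-frequent}. During preprocessing, I would form the string $S := S_0\$_0\ldots S_{\Delta-1}\$_{\Delta-1}$, where the $\$_i\notin\Sigma$ are distinct delimiters, build its suffix tree $\ST(S)$ together with the per-node hash tables via \cref{lem:ST}, and color each leaf of $\ST(S)$ by the index $i$ of the block $S_i\$_i$ in which its suffix starts. All of this takes $\cO(N)$ time with high probability (the randomization is confined to the hash tables). I would then run the construction of \Cref{cor:k-most-frequent} on $\ST(S)$ with the given constant $k$; if $k>\Delta$ we replace $k$ by $\Delta$, since a query can return at most $\Delta$ documents and the hypothesis $k\le\Delta$ of \Cref{cor:k-most-frequent} must be met. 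Because $k=\cO(1)$, the $\cO(kN)$ bound of \Cref{cor:k-most-frequent} is $\cO(N)$, so the total preprocessing cost stays $\cO(N)$ with high probability.

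For a query $P\in\Sigma^m$, I would spell $P$ in $\ST(S)$ in $\cO(m)$ time using the hash tables; if $P$ does not occur in $S$, report that it occurs in no string of $\mathcal{S}$. Otherwise let $v$ be the explicit node reached, or its nearest explicit descendant if the locus of $P$ is implicit. As in the proof of \Cref{thm:tb-1}, the leaf descendants of $v$ are exactly the suffixes of $S$ that have $P$ as a prefix, so the multiset of colors over those leaves is precisely $\{\,i : |\occ_{S_i}(P)|>0\,\}$ with multiplicity $|\occ_{S_i}(P)|$. Reading the stored list $(c^1_v,f^1_v),\dots,(c^k_v,f^k_v)$ produced by \Cref{cor:k-most-frequent} takes $\cO(k)=\cO(1)$ time and yields, in sorted order, the documents in which $P$ occurs most frequently together with their frequencies; I output the strings $S_{c^j_v}$ for the entries with $f^j_v>0$ (there may be fewer than $k$ such entries, in which case all strings containing $P$ are reported).

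Correctness is inherited directly from the $k=1$ case of \Cref{thm:tb-1}: the only delicate point is the bookkeeping when the subtree of $v$ contains fewer than $k$ distinct colors, but \Cref{cor:k-most-frequent} already permits undefined list entries, and at query time we simply skip them (equivalently, drop entries with frequency $0$). I do not anticipate a genuine obstacle here; the one thing worth spelling out is that the $\cO(m)$ query bound is preserved precisely because $k$ is constant — the global radix sort of \Cref{cor:k-most-frequent} is paid for at construction time, and each query performs only the $\cO(m)$-time spelling of $P$ followed by reading $\cO(1)$ precomputed values at a single node.
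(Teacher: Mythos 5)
Your proposal is correct and follows exactly the route the paper takes: the paper's proof of this theorem is a one-line remark that one repeats the reduction of \Cref{thm:tb-1} but invokes \Cref{cor:k-most-frequent} in place of \Cref{the:linear-time}, which is precisely what you do (with the edge cases about $k>\Delta$, undefined list entries, and non-occurring patterns spelled out more carefully than the paper bothers to).
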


\section{Lower Bounds for Descendant-Mode in DAGs}\label{sec:lb}

As \SM can be solved in linear time and it applies to a rooted tree, it is reasonable to ask whether the analogous problem on a
directed acyclic graph (DAG) defined below can also be solved fast (e.g., as fast as \SM or faster than \RM).

\defDSproblem{DAG-Descendant Mode (\DM)}
{A DAG $\mathcal{D}$ on $N$ nodes with every sink colored from a set $\{0, \ldots, \Delta-1\}$ of colors (integers).}
{Given a node $u$ of $\mathcal{D}$, output the most frequent color $c^{\max}_{u}$ in the sink descendants of $u$.}

We prove the following hardness result.

\begin{theorem}\label{the:subtree-union-bmm}
    If there exists a data structure for \DM with
    construction time $p(N)$ and query time $q(N)$, then the Boolean matrix multiplication problem for two $n \times n$ matrices admits an $\cO(p(n^2) + n^2 \cdot q(n^2))$-time solution.
\end{theorem}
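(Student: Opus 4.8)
The plan is to prove \Cref{the:subtree-union-bmm} by a reduction from Boolean matrix multiplication (BMM) to \DM. Given two $n\times n$ Boolean matrices $A,B$, I would build, in $\cO(n^2)$ time, a DAG $\mathcal{D}$ on $\Theta(n^2)$ nodes and $\cO(n^2)$ edges with a coloring of its sinks, together with $n^2$ designated query nodes $Q_{ik}$ (one per entry of the product), engineered so that a single \DM query on $Q_{ik}$ reveals the bit $(AB)_{ik}$. Feeding $\mathcal{D}$ to the assumed data structure and issuing the $n^2$ queries then computes the whole product $C=AB$ in time $\cO(n^2)+p(n^2)+n^2\cdot q(n^2)=\cO(p(n^2)+n^2\cdot q(n^2))$, as claimed.

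\emph{The routing gadget.} First I would encode the product as reachability in a three-level layered DAG. Create row nodes $a_1,\dots,a_n$, middle nodes $b_1,\dots,b_n$, and column nodes $c_1,\dots,c_n$, with an edge $a_i\to b_j$ whenever $A_{ij}=1$ and an edge $b_j\to c_k$ whenever $B_{jk}=1$. All edges point from the $a$-level to the $b$-level to the $c$-level, so this part is acyclic and has $\cO(n^2)$ edges, and its crucial property is that $c_k$ is a descendant of $a_i$ if and only if there is some $j$ with $A_{ij}=B_{jk}=1$, i.e.\ iff $C_{ik}=1$. To avoid accidentally producing an uncolored sink at an $a_i$ or $b_j$ whose corresponding matrix row is all-zero, I would either assume without loss of generality that $A,B$ have no zero rows or columns, or simply add one extra sink $\bot$ with a fresh color and edges $a_i\to\bot$, $b_j\to\bot$ for all $i,j$; this adds $\cO(n)$ edges and affects nothing below.

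\emph{Colors and query nodes.} Next I would use the sink coloring to turn ``is $c_k$ a descendant of $a_i$?'' into the answer of a \emph{mode} query. Use colors $\{0,1,\dots,n\}$ (plus the fresh color for $\bot$). Give $c_k$ two private sink children $s_k^{1},s_k^{2}$, both colored $k$; add two global sinks $d_1,d_2$ colored $0$; and for each $k$ add one more sink $e_k$ colored $k$. For each pair $(i,k)$ create a source node $Q_{ik}$ with out-edges $Q_{ik}\to a_i$, $Q_{ik}\to e_k$, $Q_{ik}\to d_1$, and $Q_{ik}\to d_2$. Then the sink descendants of $Q_{ik}$ always include $d_1,d_2$, so color $0$ has count exactly $2$; they always include $e_k$, so color $k$ has count at least $1$; color $k$ additionally receives $s_k^{1},s_k^{2}$ precisely when $C_{ik}=1$; and every other column color $k'$ has count $0$ or $2$ (it lacks an $e_{k'}$ edge from $Q_{ik}$). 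Hence if $C_{ik}=1$ then color $k$ has count $3$ and is the \emph{unique} most frequent color, so the \DM query on $Q_{ik}$ returns $k$; and if $C_{ik}=0$ then color $k$ has count $1$ while the maximum count is $2$, so the query returns some color $\neq k$. Therefore $C_{ik}=1$ iff the query on $Q_{ik}$ returns $k$, and this characterization is immune to arbitrary tie-breaking because in the ``yes'' case the maximiser is unique.

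\emph{Accounting and the obstacle.} The node set is $\{a_i\},\{b_j\},\{c_k\},\{s_k^{1},s_k^{2}\},\{e_k\},\{d_1,d_2,\bot\},\{Q_{ik}\}$, dominated by the $n^2$ query nodes, so $|V(\mathcal{D})|=\Theta(n^2)$; the edge count is $\cO(n^2)$, and the whole graph plus colors is assembled in $\cO(n^2)$ time by scanning $A$ and $B$. Constructing the \DM data structure costs $p(\Theta(n^2))=\cO(p(n^2))$; the $n^2$ queries cost $n^2\cdot q(\Theta(n^2))=\cO(n^2\cdot q(n^2))$; reading off $C$ from the answers is $\cO(n^2)$. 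Summing yields the stated time bound. I expect the only real difficulty to be exactly what the coloring gadget resolves: a \DM query returns a \emph{single} color, whereas a row of $C$ carries $n$ bits, so the reduction is forced to spend one query per entry, and it must arrange that the mode of the queried node's multiset of sink-descendant colors flips precisely on $C_{ik}$ — which is the purpose of the ``baseline count $1$, default count $2$, bonus $+2$'' arithmetic — all while keeping $|V(\mathcal{D})|=\Theta(n^2)$ so that $p$ and $q$ are evaluated at $n^2$ (which is what makes the consequence meaningful, e.g.\ precluding a near-linear construction with polylogarithmic queries under the combinatorial BMM conjecture).
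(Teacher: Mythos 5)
Your reduction is correct, but it is genuinely different from the one in the paper. The paper builds a depth-two gadget: for each entry $(i,j)$ of the product it creates a query node $y_{j,i}$ with out-degree two, pointing to a star $r(a_i)$ whose sink children encode row $i$ of $A$ (colored by column index) and a star $r(b_j)$ whose sink children encode column $j$ of $B$ (colored by row index); a color $k$ then has multiplicity $2$ among the sink descendants of $y_{j,i}$ exactly when $k$ witnesses $A_{i,k}=B_{k,j}=1$, and multiplicity at most $1$ otherwise. Crucially, when no witness exists the \DM query still returns \emph{some} color, so the paper must follow every query with an $\cO(1)$ verification that the returned $k$ really satisfies $A_{i,k}=B_{k,j}=1$. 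You instead encode $(AB)_{ik}$ as reachability of $c_k$ from $a_i$ through a three-layer graph and then calibrate sink multiplicities (baseline $1$ for the probed color via $e_k$, a fixed ceiling of $2$ for all distractors via $d_1,d_2$ and the paired sinks $s_{k'}^1,s_{k'}^2$, and a $+2$ bonus triggered by reachability) so that the query on $Q_{ik}$ returns $k$ if and only if $(AB)_{ik}=1$, with a unique maximizer in the ``yes'' case. What your version buys is a decision that is read off directly from the returned color, immune to tie-breaking and needing no post-query check; what the paper's version buys is a smaller and simpler gadget (no middle layer, only $n$ colors, out-degree-two query nodes) at the cost of the trivial verification step. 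Both constructions have $\Theta(n^2)$ nodes and $\cO(n^2)$ edges and yield the identical $\cO(p(n^2)+n^2\cdot q(n^2))$ bound; your handling of the uncolored-sink corner case via the fresh-colored sink $\bot$ is also sound.
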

\begin{proof}
Let $A,B$ be Boolean $n\times n$ matrices.  We build a sink-colored DAG $\mathcal D$ with $N=\Theta(n^2)$ nodes and $\Delta\leq n$ colors so that each entry of $AB$ is revealed by a specific \DM query.  

For each row index $i\in[0,n)$, we create an internal node $r(a_i)$ and attach, for every column index $k$ with $A_{i,k}=1$, a sink child of $r(a_i)$ colored $k$.  
For each column index $j\in[0,n)$, we create an internal node $r(b_j)$ and attach, for every row index~$k$ with $B_{k,j}=1$, a sink child of $r(b_j)$ colored $k$. 
For every pair $(j,i)\in[0,n)^2$, we create a node $y_{j,i}$ and add the following directed edges:
\[
y_{j,i}\to r(a_i)\qquad\text{and}\qquad y_{j,i}\to r(b_j).
\]
The graph $\mathcal D$ is a DAG because we can partition its nodes into three sets $Y = \{y_{j,i} : (j,i)\in[0,n)^2$\}, $R=\{r(x_i) : x\in \{a,b\}, i \in [0,n)\}$, and $S = V(\mathcal{D}) \setminus (Y \cup R)$, such that any edge is either from a node of $Y$ to a node of $R$ or from a node of $R$ to a node of $S$.
Further, the size of $\mathcal{D}$ is $\Theta(n^2)$.

The sink descendants of $y_{j,i}$ are exactly the multiset union of the sinks of $r(a_i)$ and $r(b_j)$. Hence any color $k$ with $A_{i,k}=B_{k,j}=1$ appears twice among those descendants, while any color that appears on only one side  appears at most once. Therefore a \DM query at $y_{j,i}$ returns some color $k$ with $A_{i,k}=B_{k,j}=1$ whenever such a $k$ exists; if no such $k$ exists every color multiplicity is at most $1$.

Thus, after constructing a \DM data structure on $\mathcal D$ in $p(N)$ time, we determine each entry $(AB)_{i,j}$ by querying \DM at $y_{j,i}$ to obtain a candidate color $k$ and then checking in $\cO(1)$ time whether $A_{i,k}=B_{k,j}=1$. There are $n^2$ queries, so the total time is $\cO(p(n^2)+n^2\cdot q(n^2))$, as claimed.
\end{proof}

\begin{example}
    Consider the following $2\times 2$ Boolean matrices:
\[
A = \begin{pmatrix} 1 & 0 \\ 0 & 1 \end{pmatrix},\qquad
B = \begin{pmatrix} 0 & 1 \\ 1 & 0 \end{pmatrix}.
\]

From matrices $A$ and $B$, we construct the DAG $\mathcal D$ in Fig.~\ref{fig:dag}. For instance, $(AB)_{i,j}=(AB)_{1,0}=1$ is obtained by querying
$y_{j,i}=y_{0,1}$, which returns the mode $k=1$. 
Indeed, $A_{i,k}=A_{1,1}=B_{k,j}=B_{1,0}=1$.
\end{example}

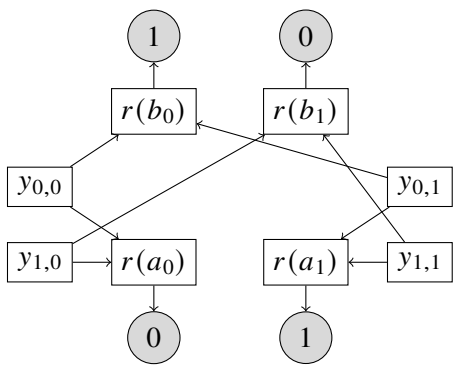
\begin{figure}[tbph]
\centering
\scalebox{1.0}{%
\hspace{-0.1cm}
\begin{tikzpicture}[
    node distance=1cm and 1cm,
    every node/.style={draw, rectangle, minimum width=0.7cm, minimum height=0.4cm},
    sink/.style={draw, circle, fill=gray!30, minimum size=0.4cm}
]

\node (a0) at (-0.5,0) {$r(a_0)$};
\node (a1) at (1.5,0) {$r(a_1)$};

\node[sink] (a0s0) at (-0.5,-1) {0};
\node[sink] (a1s1) at (1.5,-1) {1};

\node (b0) at (-0.5,2) {$r(b_0)$};
\node (b1) at (1.5,2) {$r(b_1)$};

\node[sink] (b0s1) at (-0.5,3) {1};
\node[sink] (b1s0) at (1.5,3) {0};

\node (y00) at (-2,1) {$y_{0,0}$};
\node (y01) at (3,1) {$y_{0,1}$};
\node (y10) at (-2,0) {$y_{1,0}$};
\node (y11) at (3,0) {$y_{1,1}$};

\draw[->] (y00) -- (a0);
\draw[->] (y00) -- (b0);

\draw[->] (y01) -- (a1);
\draw[->] (y01) -- (b0);

\draw[->] (y10) -- (a0);
\draw[->] (y10) -- (b1);

\draw[->] (y11) -- (a1);
\draw[->] (y11) -- (b1);

\draw[->] (a0) -- (a0s0);
\draw[->] (a1) -- (a1s1);

\draw[->] (b0) -- (b0s1);
\draw[->] (b1) -- (b1s0);

\end{tikzpicture}}
\caption{DAG $\mathcal D$ for $A$ and $B$.}\label{fig:dag}
\end{figure}

Theorem~\ref{the:subtree-union-bmm} implies that an algorithm for \DM that is similarly fast to our \SCM algorithm is highly unlikely, as $p(N)=\cO(N)$ and $q(N)=\cO(1)$ for \SM, and this would contradict the combinatorial BMM conjecture~\cite{AbboudFKLM24}. 
More generally, for any $\varepsilon >0$, there is no data structure for \DM with construction time $\cO(N^{\omega/2-\varepsilon})$ and sub-polynomial query time (unless the square matrix multiplication exponent~$\omega$ is $2$). 
Under the combinatorial BMM conjecture, for any $\varepsilon_1, \varepsilon_2\geq 0$, there is no combinatorial data structure for \DM with construction time $\cO(N^{3/2-\varepsilon_1})$ and query time $\cO(N^{1/2-\varepsilon_2})$. 
These conditional lower bounds are identical to the ones for \RM proved by Chan et al.~\cite{DBLP:journals/corr/abs-1101-4068}.

\section{Related Work}\label{sec:related} 
\paragraph{Range Mode (\RM).}~
Krizanc et al.~\cite{DBLP:journals/njc/KrizancMS05} proposed a data structure for \RM with $\cO(1)$-time queries and $\cO(N^2 \log \log N / \log N)$ space, and another with $\cO(\sqrt{N}\log N)$-time queries and $\cO(N)$ space. Chan et al.~\cite{DBLP:journals/corr/abs-1101-4068} improved the time/space trade-off of the data structures of~\cite{DBLP:journals/njc/KrizancMS05}
by designing an $\cO(N + s^2/w)$-space data structure with $\cO(N/s)$ query time, for any $s\in [1,N]$, where $w=\Omega(\log N)$ is the machine word size~\cite[Section~6]{DBLP:journals/corr/abs-1101-4068}. 
By setting $s:=\lceil\sqrt{Nw}\rceil$, said data structure takes $\cO(N)$ space and has $\cO(\sqrt{N/w})$-time queries.
In \textsf{Baseline~2}, we employ the data structure in~\cite[Section 3]{DBLP:journals/corr/abs-1101-4068}, as it is less expensive to construct~\cite{DBLP:journals/corr/abs-1101-4068,DBLP:journals/information/KarrasTKK24} and thus leads to a faster baseline for \SM (\Cref{thm:b2}).
Greve et al.~\cite{DBLP:conf/icalp/GreveJLT10} showed a lower bound for \RM: any data
structure that uses $N \log^{\cO(1)}N$ space needs $\Omega(\log N / \log \log N)$ time to answer
an \RM query, and any data structure that supports \RM  
queries in $\cO(1)$ time needs $N^{1+\Omega(1)}$  space. The current best conditional lower bound~\cite{DBLP:journals/corr/abs-1101-4068} indicates that
answering $N$ \RM queries on an array of size $\cO(N)$ cannot be performed in 
$\cO(N^{\omega/2-\epsilon})$ time for any
$\epsilon > 0$, where $\omega < 2.3716$~\cite{DBLP:conf/soda/WilliamsXXZ24}  
is the square matrix multiplication exponent. 
Dynamic and multidimensional versions of \RM were studied in~\cite{DBLP:conf/icalp/SandlundX20,DBLP:conf/esa/El-Zein0MS18} and~\cite{DBLP:journals/tcs/DurocherEMT15}, respectively. \RM on paths of a tree (instead of subtrees as in \SM) was  studied in~\cite{DBLP:journals/algorithmica/DurocherSST16,DBLP:conf/esa/Gao022}. 

\paragraph{Document Retrieval.}~\SM can be used to construct a data structure for the \DRQ problem; see Section~\ref{sec:applications}. The \kDRQ problem has been studied in theoretical computer science; \cite{DBLP:conf/soda/NavarroN12,DBLP:journals/jacm/HonSTV14, DBLP:journals/siamcomp/NavarroN17,DBLP:conf/soda/0001N25} proposed linear-space data structures with optimal or near-optimal query time but no efficient algorithms to construct them. Thus, these works are of theoretical interest, and the data structure we propose is the first practical approach for \kDRQ.
\kDRQ differs from Top-$k$ document retrieval based on \emph{known patterns} (keywords), for which we refer to~\cite{DBLP:conf/sigir/KhattabHE20, DBLP:conf/sigir/GouMLSS25}.

\paragraph{Pattern Mining.}~The \CPM problem in   Section~\ref{sec:applications} falls into the area of pattern mining~\cite{DBLP:books/sp/fpm14/Aggarwal14}. Specifically, it is somewhat related to \emph{discriminative} (a.k.a \emph{emerging} or \emph{contrast sets}) pattern mining~\cite{DBLP:conf/kdd/DongL99,DBLP:journals/kais/DongL05,DBLP:journals/datamine/BayP01}. Given a collection of records (sequences~\cite{DBLP:journals/bib/LiuWGWH15,DBLP:journals/kais/MathonatNBK21,DBLP:conf/dasfaa/ChanKYT03}, transactions~\cite{DBLP:journals/kais/DongL05}, relational tuples~\cite{DBLP:journals/kais/DongL05}, or vectors~\cite{DBLP:journals/datamine/BayP01}), the latter  problem asks for mining all patterns (subsequences~\cite{DBLP:journals/bib/LiuWGWH15,DBLP:journals/kais/MathonatNBK21}, 
substrings~\cite{DBLP:conf/dasfaa/ChanKYT03}, itemsets~\cite{DBLP:journals/kais/DongL05}, sets of relational attribute values~\cite{DBLP:journals/kais/DongL05}, or sets of attribute/value pairs~\cite{DBLP:journals/datamine/BayP01}) which occur ``disproportionately'' in two or more parts of the collection that have different class labels. The disproportionality is measured based on difference in frequency~\cite{DBLP:journals/datamine/BayP01}, \emph{growth rate}~\cite{DBLP:conf/kdd/DongL99,DBLP:journals/kais/DongL05}, \emph{weighted relative accuracy}~\cite{DBLP:journals/kais/MathonatNBK21}, or other measures~\cite{DBLP:journals/bib/LiuWGWH15}. The \CPM problem is also relevant to \emph{fair} pattern mining~
\cite{DBLP:conf/sac/HajianMPDG14}. The latter problem  asks for mining itemsets in a transaction dataset that do not produce association rules which are unprotected according to legally-grounded fairness measures~\cite{DBLP:conf/sac/HajianMPDG14}. None of the aforementioned approaches can solve  \CPM.  

\paragraph{$\bm{q}$-grams.}~Sequence comparison by means of $q$-grams is ubiquitous in bioinformatics~\cite{DBLP:journals/tcs/Ukkonen92}. It offers a faster alternative to using more expensive string measures such as edit distance. For instance, BLAST~\cite{Altschul1990} and FASTA~\cite{Pearson1988}, two of the most widely-used tools for sequence-to-database search, are based on the notion of $q$-grams to report the best hits for a query.

\paragraph{Problems on Colored Trees.}~There are many other problems on leaf-colored~\cite{DBLP:conf/cpm/Hui92,steel,DBLP:conf/soda/Muthukrishnan02} and node-colored~\cite{pawel,shlomo,golovach} trees.

\section{Experimental Evaluation}\label{sec:experiments}

\paragraph{Data and Setup.}~We used $4$ benchmark datasets (see  Table~\ref{dataset2}): (1) \WebKB~\cite{craven1998learning}, which is a collection of webpages of computer science departments of various universities;
 (2) \Genes, which is a collection of DNA sequences between two markers flanking the human X chromosome centromere~\cite{sayers2023database}; (3) \News~\cite{zhou2015pattern}, which is a collection of documents from the Newsgroups dataset, and (4) \VIR~\cite{ncbi_genome_2025}, which is a collection of viral genomes. As the  baselines could not run on large  datasets, we applied them to samples of the first two datasets, constructed by selecting strings that have the same length uniformly at random; see Table~\ref{tab:sampledata}.

\begin{table}[!ht] 
\centering
\caption{Datasets characteristics}
\begin{tabular}{ccccccc}
\toprule
\textbf{Dataset}      & \textbf{Domain} & {\bf Alphabet} & {\bf No. of colors}  & 
\textbf{Mean string} & \textbf{Total length} & {\bf Nodes in} \\ 
~ & ~ & {\bf size} $\sigma$ & \textbf{$\Delta$} & {\bf length} & $N$ & $\mathcal{T}$ \\
\midrule
\WebKB~\cite{craven1998learning}  & Web    & 26 & 790,340         & 
32.98              & 26,068,175  & 38,491,476 \\ \midrule
\Genes~\cite{ncbi_handbook_chapter6} & Biology & 4 & 800,000      & 
1,250.01           & 1,000,007,888  & 2,030,921,400 \\ \midrule
\News~\cite{zhou2015pattern}         & News   & 26 & 4,781           & 
725.556            & 3,768,883   & 5,293,347 
\\ \midrule
\VIR~\cite{ncbi_genome_2025} & Biology & 4 & 143,588 & 29,085 & 4,176,208,246 & 7,313,326,212 
\\\bottomrule
\end{tabular}
\label{dataset2}
\vspace{+1mm}
\centering
\caption{Sample datasets characteristics}\label{tab:sampledata}
\begin{tabular}{ccccccc}
\toprule
\textbf{Dataset}      & \textbf{Domain} & {\bf Alphabet} & \textbf{No. of colors}  &
\textbf{String length} & \textbf{Total length} & {\bf Nodes in} \\ 
~ & ~ & \textbf{size $\sigma$} & $\Delta$ & ~ & $N$ & $\mathcal{T}$ \\\midrule
\WebKBsample~\cite{craven1998learning}       & Web    & 26 & 31,030          & 
100                & 3,103,000   & 4,876,306 \\ \midrule
\Genessample~\cite{ncbi_handbook_chapter6} & Biology   & 4 & 10,000          
& 200                & 2,000,000   & 3,753,134\\ \bottomrule
\end{tabular}
\label{dataset1}
\end{table}

Each of the used datasets is a collection of $\Delta$ strings $S_0, \ldots, S_{\Delta-1}$. Therefore, the tree $\mathcal{T}$ in the \SM problem is the suffix tree for $S_0\$_0\ldots S_{\Delta-1}\$_{\Delta-1}$ and the leaves with color $i\in [0,\Delta)$ correspond to the suffixes starting in $S_i\$_i$ (see Section~\ref{sec:applications}). We used string datasets because they are represented using large and complex trees that stress-test our algorithms (e.g.,\,note in Table~\ref{dataset2} that the suffix tree for \VIR has over $7.3$ billion nodes). On the other hand, phylogenetic trees or trees modeling file structures are generally much smaller.

Since there is no existing algorithm for \SM, we compared our \SCM algorithm (see Section~\ref{sec:linear}) to: (I) \textsf{Baseline~1} (\BAI) and \textsf{Baseline 2} (\BAII) (see Section~\ref{sec:baselines}); and (II) the  fastest $\cO(N\log \Delta)$-time baseline (\BAIII) (see Section~\ref{sec:linear}). 
Recall that all construction algorithms pre-compute and store all possible $N$ query answers. 
As a consequence, they have the same query time, and thus we do not evaluate this in our experiments.

We examined the impact of the two problem parameters, $N$ and $\Delta$, on runtime and space. We also showcase the benefit of our approach in \CPM, \CSQ, \kDRQ, and phylogenetic tree annotation. In these applications, we used four real datasets. 

Our experiments were conducted on a server equipped with an AMD EPYC 7702 64-Core Processor @ 2.00 GHz, 1 TB RAM, and Ubuntu 22.04.5 LTS. We implemented all algorithms in \textsf{C++}; see~\url{https://github.com/JialongZhou666/subtree-mode-mining}
for our code and datasets. 

\paragraph{Efficiency on Small Datasets.}~We present results showing that our \SCM algorithm substantially outperforms all three baselines in terms of runtime and memory consumption. 

\subparagraph{Impact of $N$.} Figs.~\ref{fig:webkb_small_time_n} to~\ref{fig:gene_small_time_n} show the runtime for varying $N$ and fixed $\Delta$. Our \SCM algorithm was 
\emph{faster than the fastest baseline, \BAIII, by at least one order of magnitude
and $24$ times on average}. \BAIII in turn was faster than both \BAI and \BAII by at least two orders of magnitude on average. All algorithms scaled linearly with $N$, in line with  their time complexities, except 
\BAII whose time complexity is $\cO(N\sqrt{N})$.
In fact, since the space complexity of \BAII is also
$\cO(N\sqrt{N})$, it 
did not terminate for strings with more than 1 million letters; it needed more than the 1TB of memory that was available. Figs.~\ref{fig:webkb_small_memory_n} to~\ref{fig:gene_small_memory_n} show the  peak memory consumption for the experiments of Figs.~\ref{fig:webkb_small_time_n} to~\ref{fig:gene_small_time_n}. \SCM needed 
\emph{on average $26\%$ and up to $58\%$ less memory compared to the best baseline}, \BAIII, as storing counts for the single-color trees needs less memory than merging trees. \BAIII in turn needed two orders of magnitude less memory on average compared to both \BAI and \BAII, as its space complexity is $\cO(N)$,
while that of \BAI and \BAII is $\cO(N\Delta)$ and $\cO(N\sqrt{N})$, respectively.

\begin{figure}[htbp]
\hspace{-2mm}
\begin{subfigure}[b]{0.31\columnwidth}
    \includegraphics[width=1.06\textwidth]{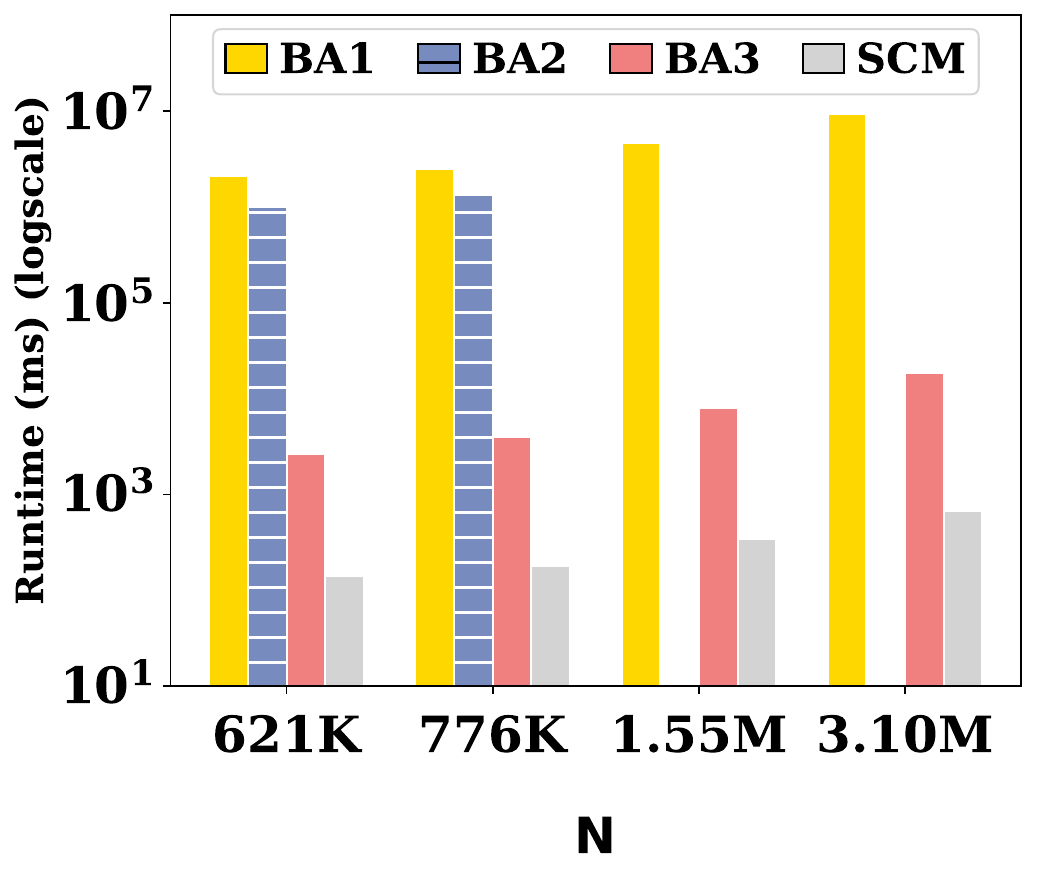}
    \caption{\WebKBsample}
    \label{fig:webkb_small_time_n}
\end{subfigure}\hspace{+1mm}
\begin{subfigure}[b]{0.31\columnwidth}
    \includegraphics[width=1.06\textwidth]{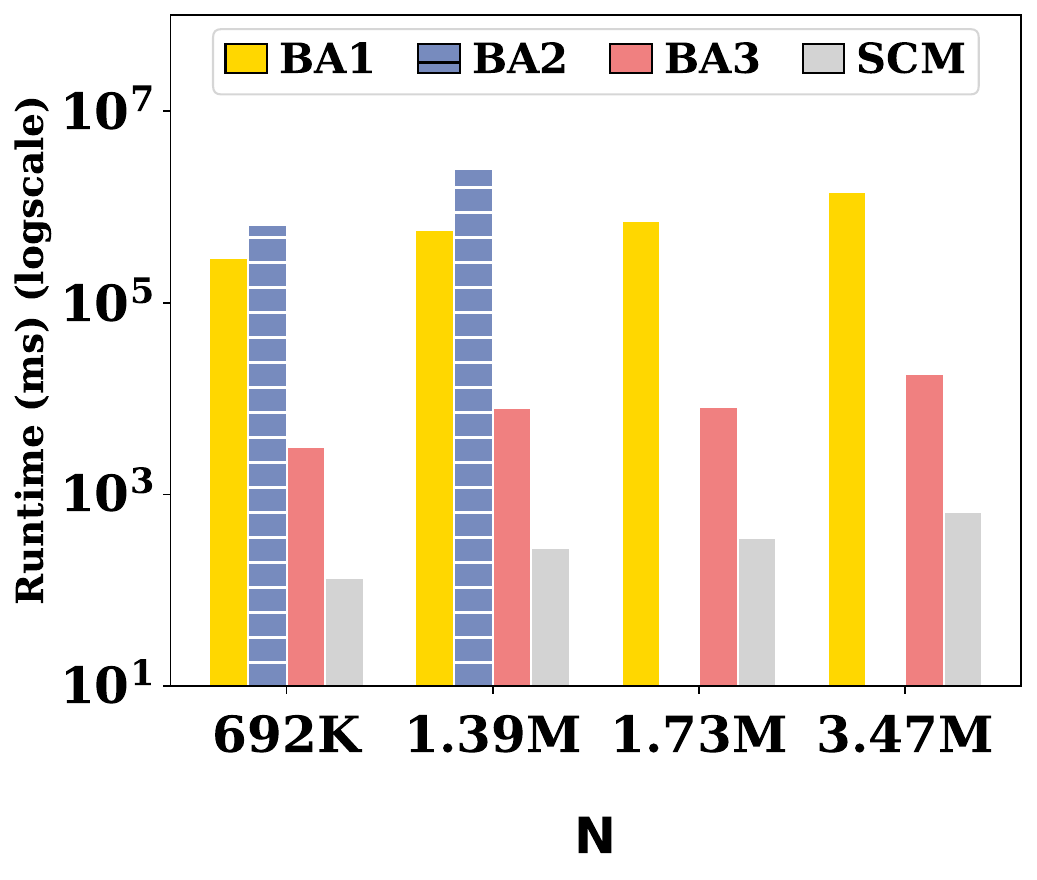}
    \caption{\News}
    \label{fig:news_time_n}
\end{subfigure}\hspace{+1mm}
\begin{subfigure}[b]{0.31\columnwidth}
    \includegraphics[width=1.06\textwidth]{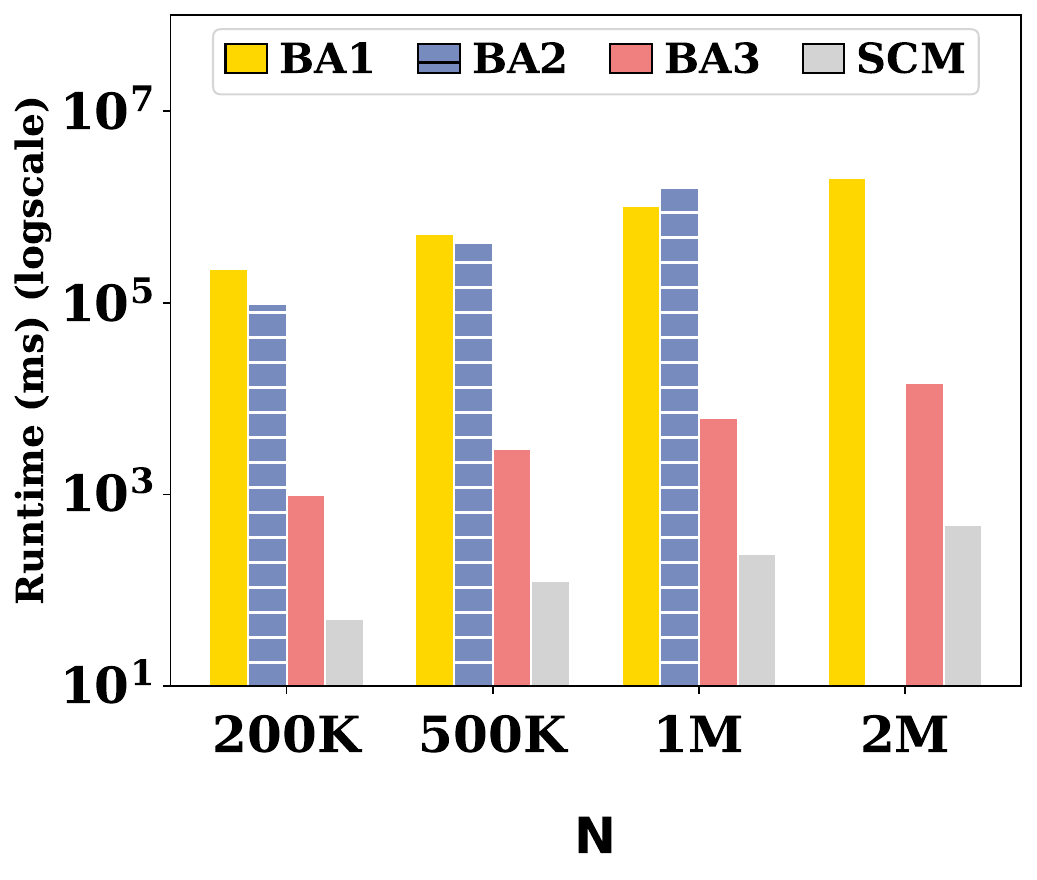}
    \caption{\Genessample}
    \label{fig:gene_small_time_n}
\end{subfigure}\\
\begin{subfigure}[b]{0.31\columnwidth}
    \includegraphics[width=1.06\textwidth]{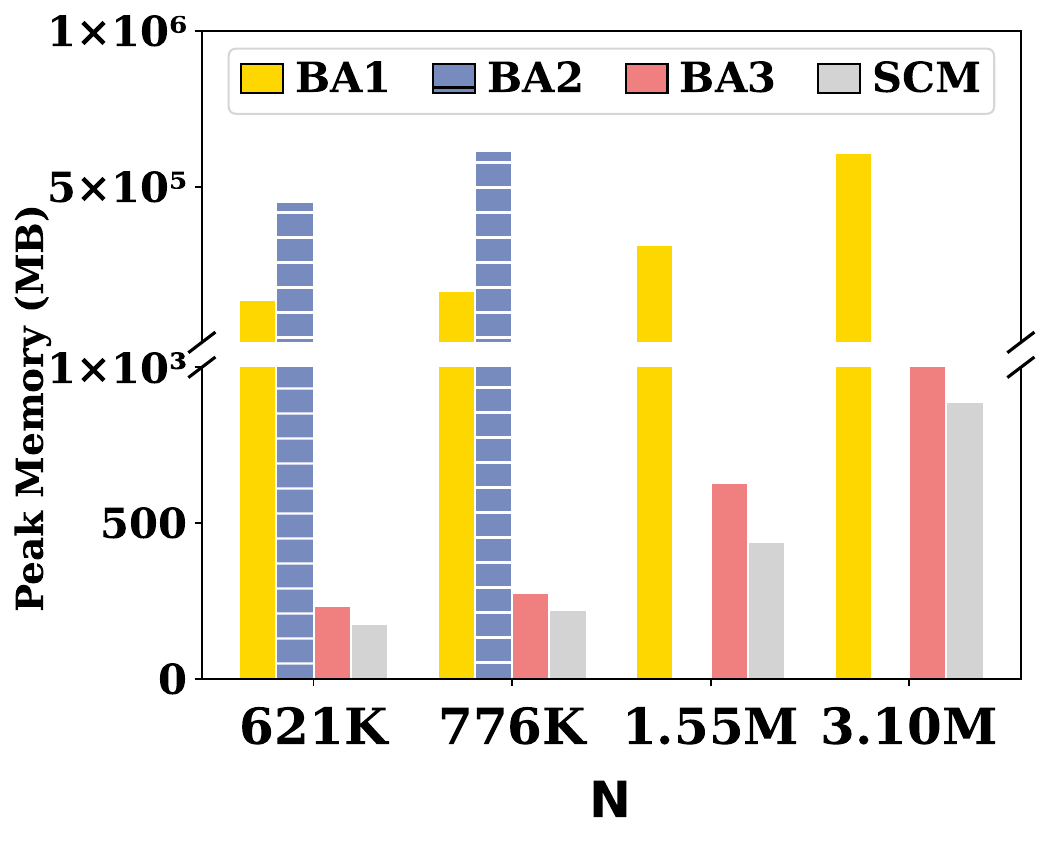}
    \caption{\WebKBsample}
    \label{fig:webkb_small_memory_n}
\end{subfigure}\hspace{+1mm}
\begin{subfigure}[b]{0.31\columnwidth}
    \includegraphics[width=1.06\textwidth]{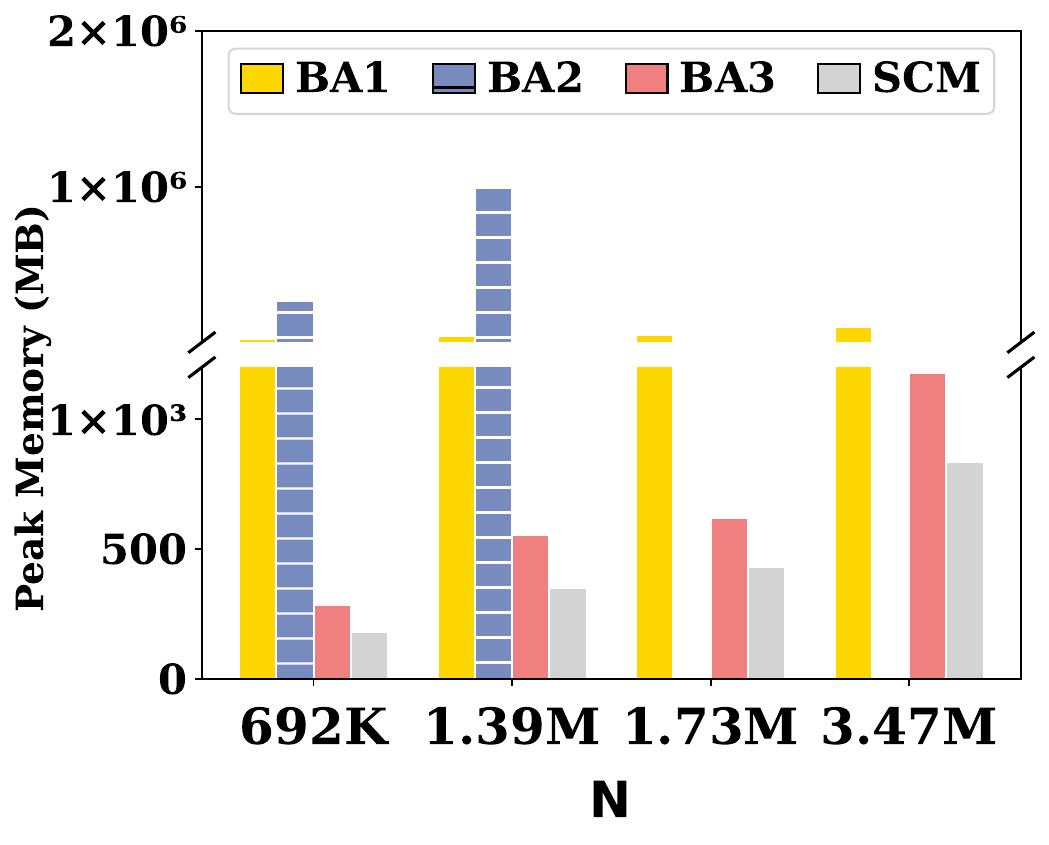}
    \caption{\News}
    \label{fig:news_memory_n}
\end{subfigure}\hspace{+1mm}
\begin{subfigure}[b]{0.31\columnwidth}
    \includegraphics[width=1.06\textwidth]{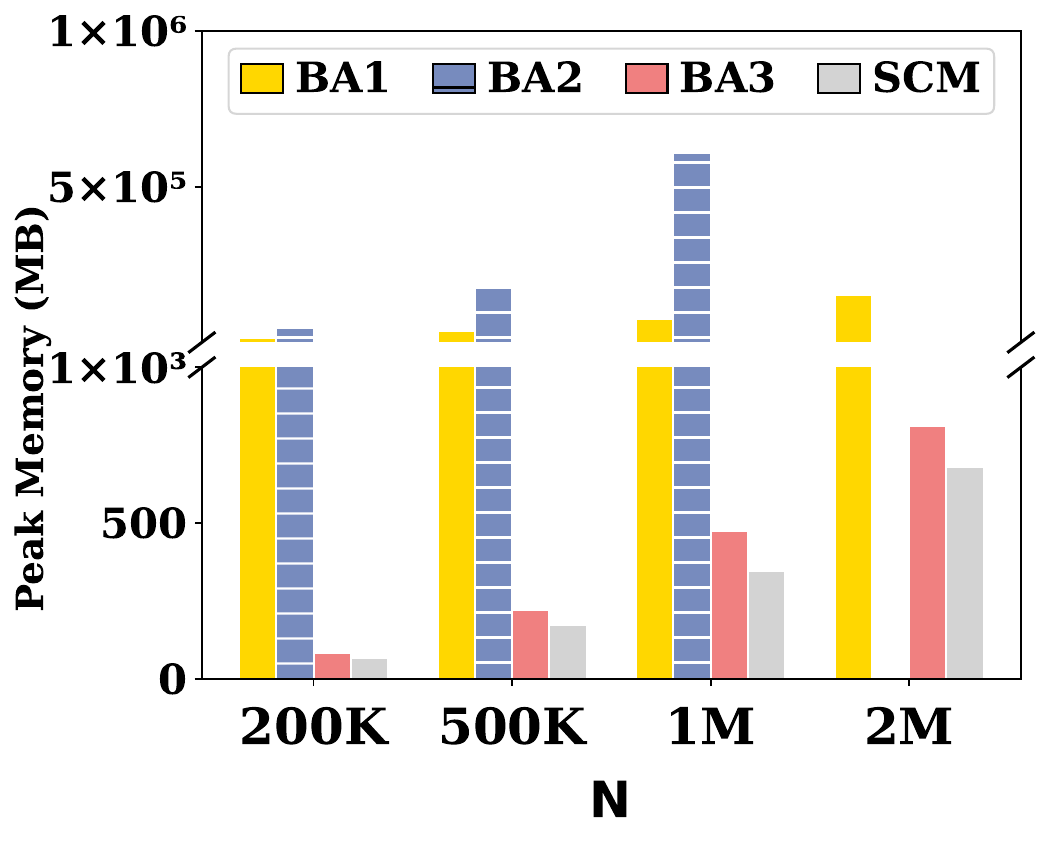}
    \caption{\Genessample}
    \label{fig:gene_small_memory_n}
\end{subfigure}

\caption{(a-c) Runtime and (d-f) memory vs. $N$. Missing bars indicate that a method needed more than 1TB of memory.}
\label{fig:runtime-mem-small-N}
\end{figure}

\subparagraph{Impact of $\Delta$.} Figs.~\ref{fig:webkb_small_time_delta} to~\ref{fig:gene_small_time_delta} show the runtime for varying~$\Delta$ and fixed $N$; we fixed $N=300,000$ for the \WebKBsample dataset, $N=162,000$ for \News, and $N=200,000$ for \Genessample by taking the prefix of length $\lfloor N/\Delta \rfloor$ of each string. 
Our \SCM algorithm was 
\emph{faster than the fastest baseline, \BAIII, by at least an order of magnitude and $21$ times on average} and, as expected by its time complexity, its runtime was not affected by $\Delta$. \BAIII in turn was faster than both \BAI and \BAII by two orders of magnitude on average, which is in line with the time complexities of these algorithms.  Figs.~\ref{fig:webkb_small_memory_delta} to~\ref{fig:gene_small_memory_delta} show the peak memory consumption for the experiments of Figs.~\ref{fig:webkb_small_time_delta} to~\ref{fig:gene_small_time_delta}. \SCM  needed \emph{at least $20\%$ and up to $50\%$ less memory compared to the best baseline}, \BAIII, for the same reason as in the experiments of Figs.~\ref{fig:webkb_small_memory_n} to~\ref{fig:gene_small_memory_n}. \BAIII in turn was more space-efficient than both \BAI and \BAII by more than two orders of magnitude on average, which is in line with the space complexities of these algorithms. Also, \BAII uses more memory as $\Delta$ increases, as the string index it uses gets  larger.

\begin{figure}[htbp]
\begin{subfigure}[b]{0.32\columnwidth}
    \includegraphics[width=1.05\textwidth]{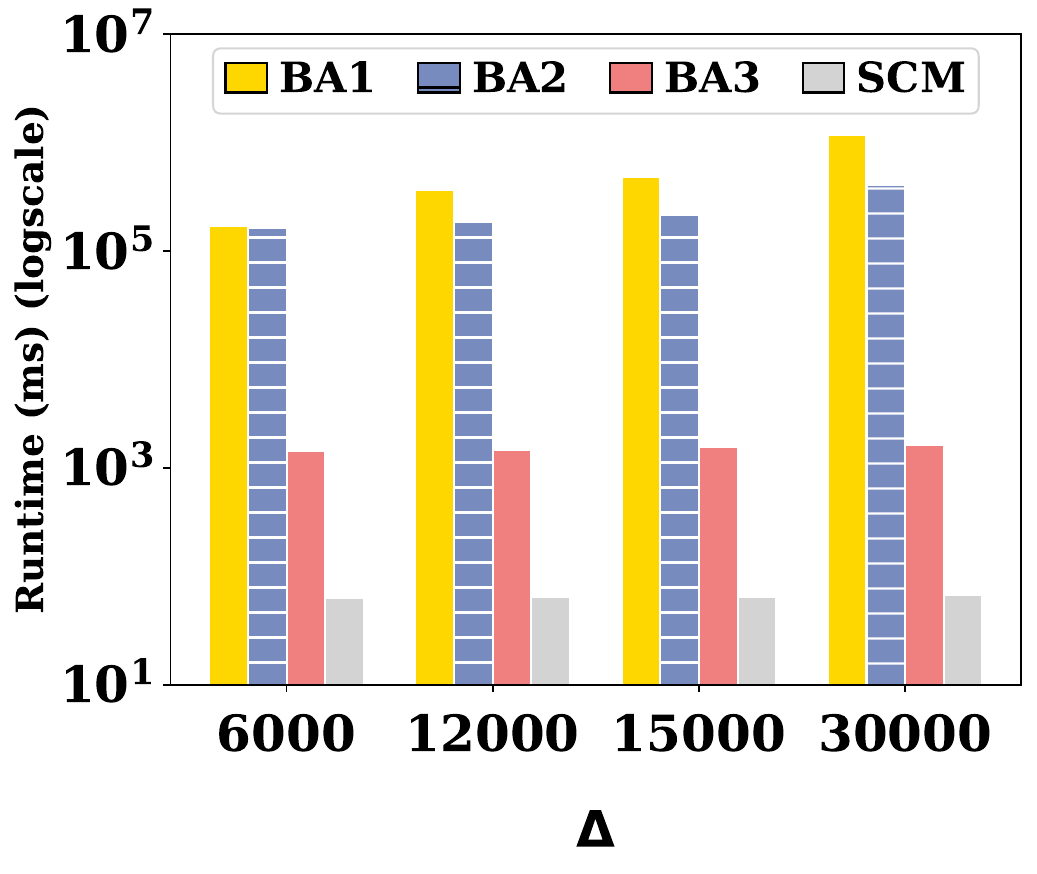}
    \caption{\WebKBsample}
    \label{fig:webkb_small_time_delta}
\end{subfigure}
\begin{subfigure}[b]{0.32\columnwidth}
    \includegraphics[width=1.05\textwidth]{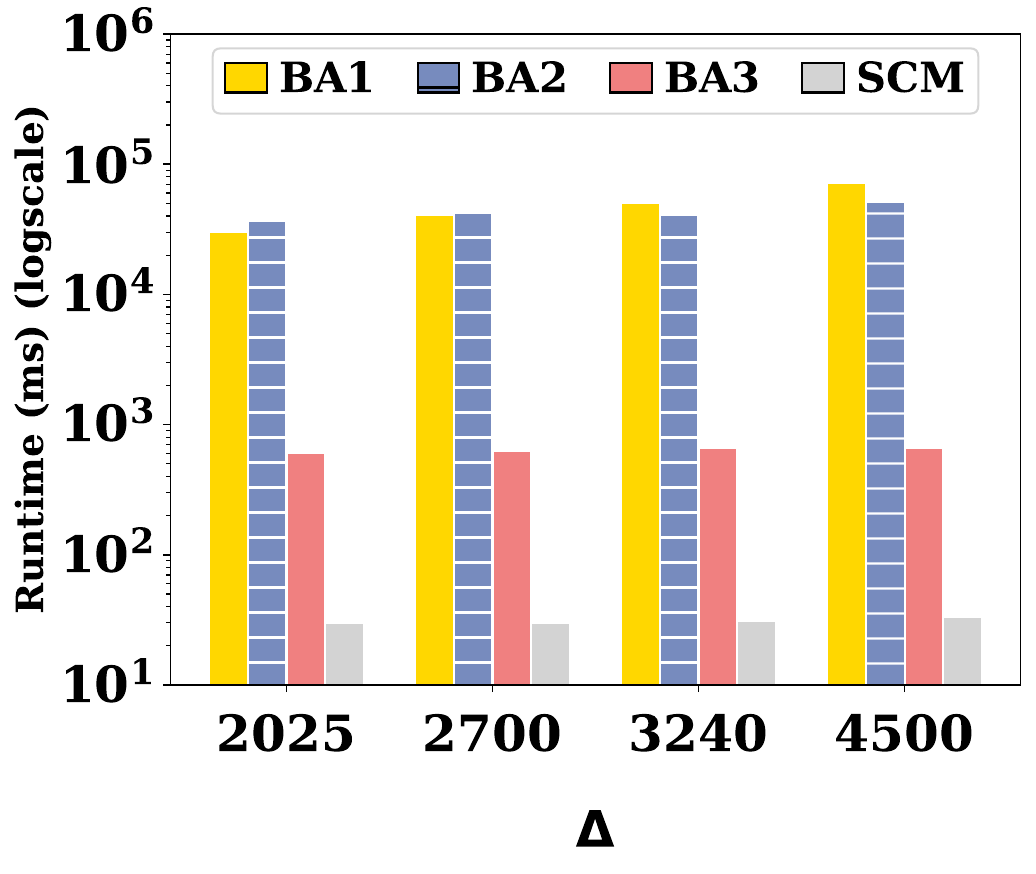}
    \caption{\News}
    \label{fig:news_time_delta}
\end{subfigure}
\begin{subfigure}[b]{0.32\columnwidth}
    \includegraphics[width=1.05\textwidth]{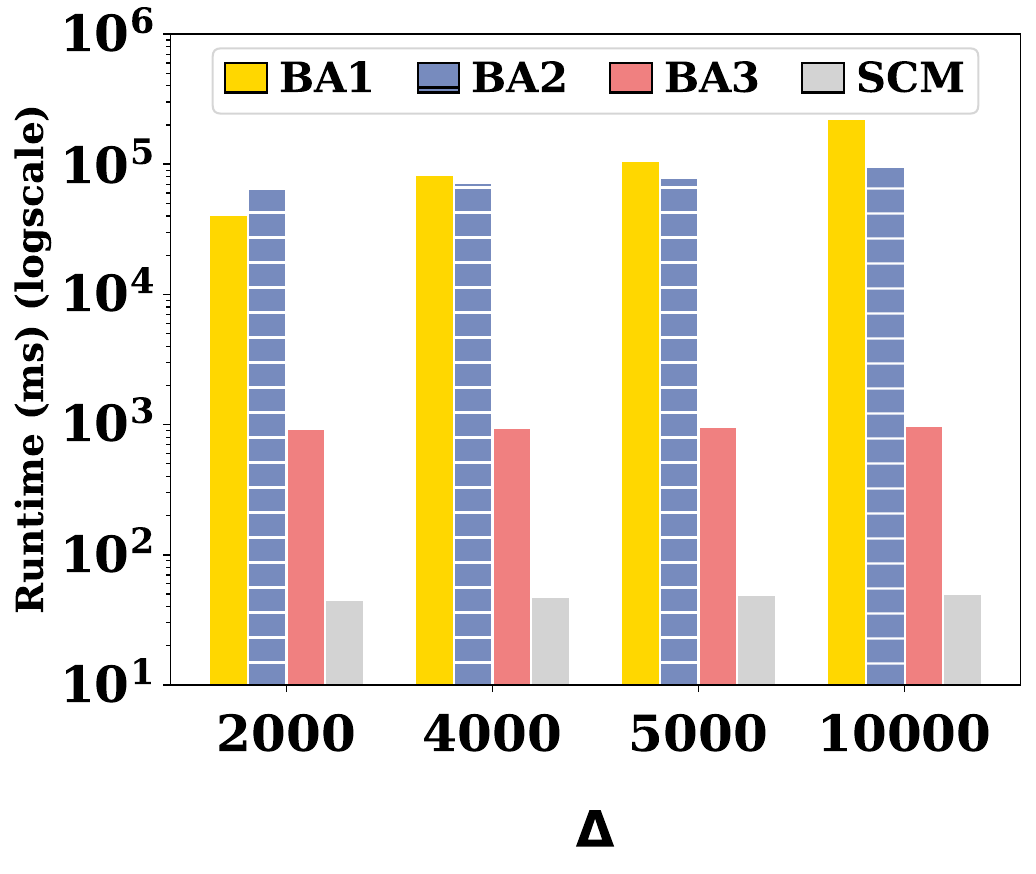}
    \caption{\Genessample}
    \label{fig:gene_small_time_delta}
\end{subfigure}

\begin{subfigure}[b]{0.32\columnwidth}
    \centering
    \includegraphics[width=1.05\textwidth]{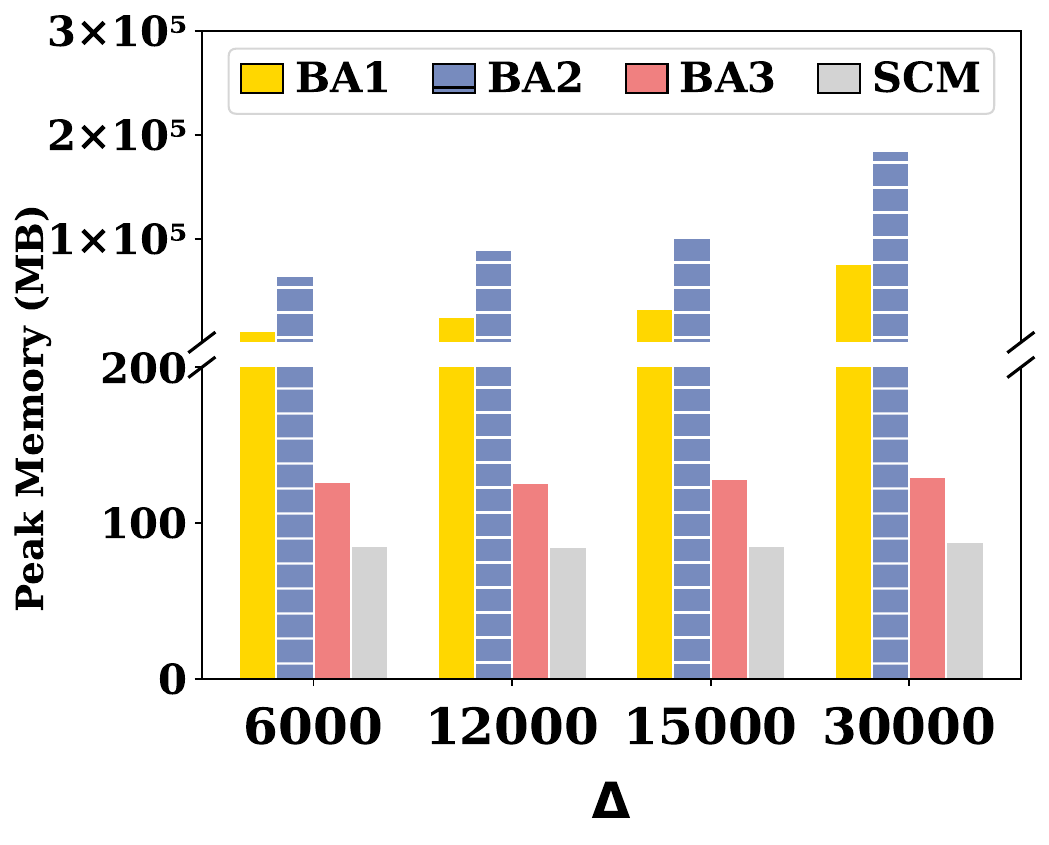}
    \caption{\WebKBsample}
    \label{fig:webkb_small_memory_delta}
\end{subfigure}
\begin{subfigure}[b]{0.32\columnwidth}
    \centering
    \includegraphics[width=1.05\textwidth]{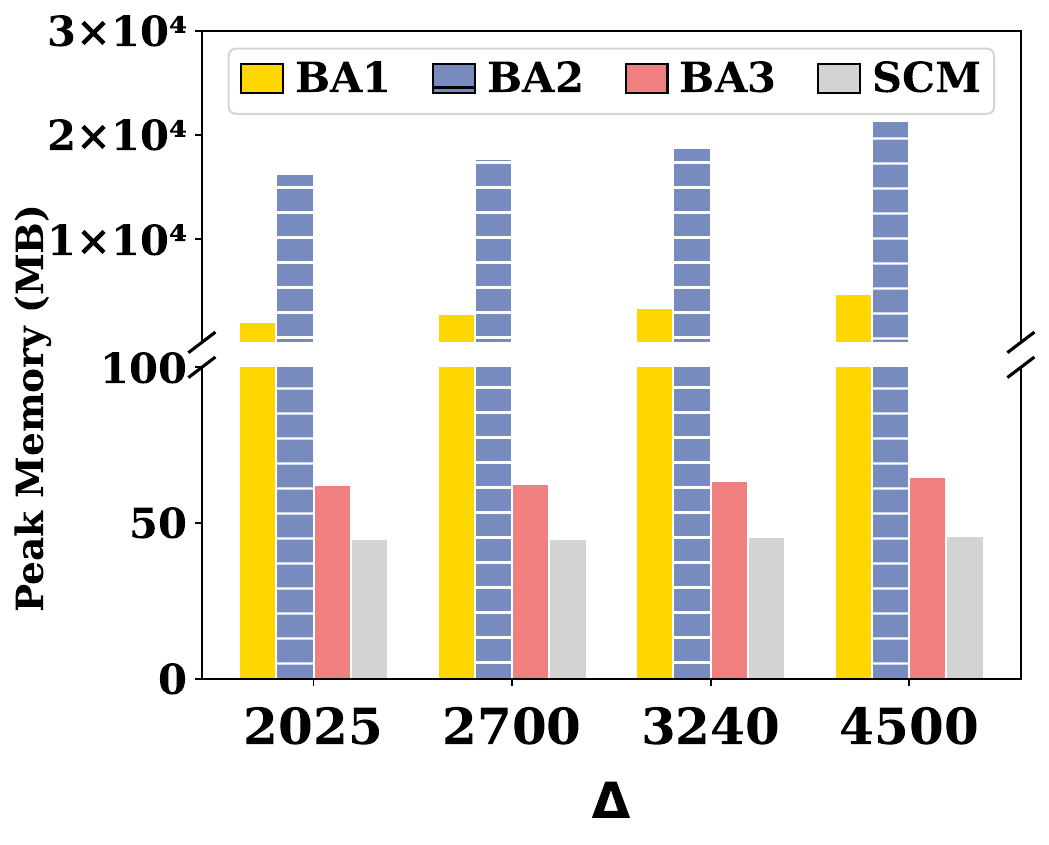}
    \caption{\News}
    \label{fig:news_memory_delta}
\end{subfigure}
\begin{subfigure}[b]{0.32\columnwidth}
    \centering
    \includegraphics[width=1.05\textwidth]{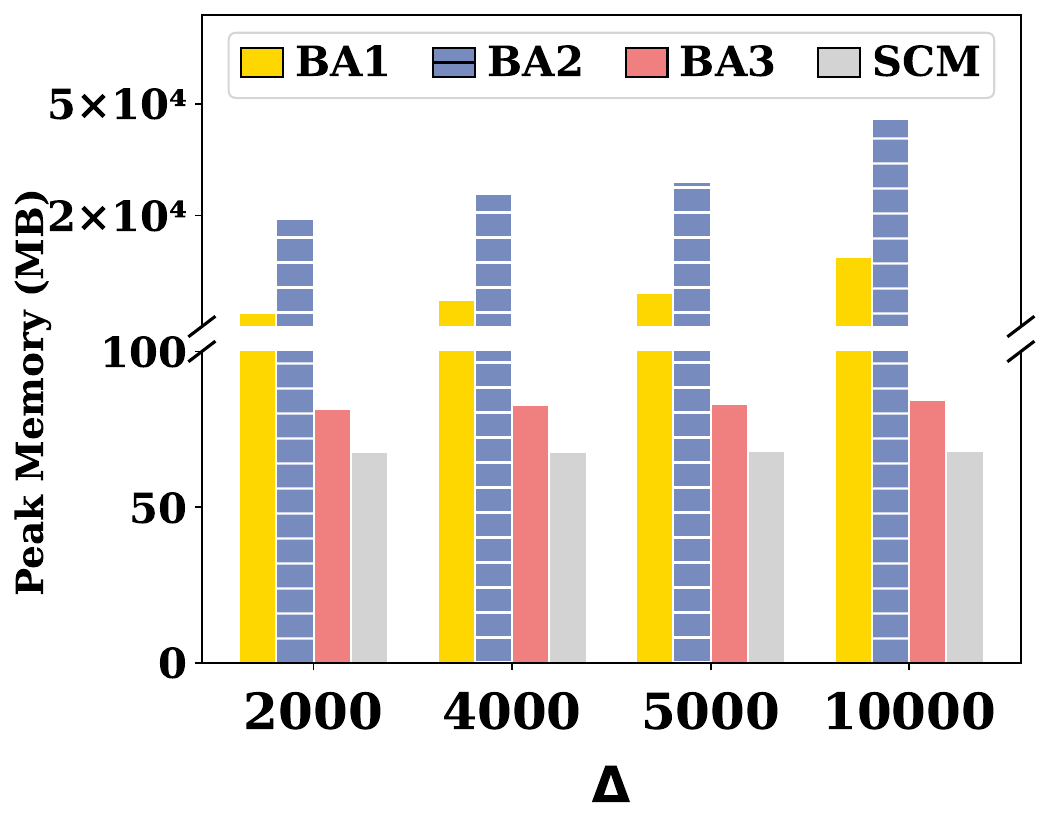}
    \caption{\Genessample}
    \label{fig:gene_small_memory_delta}
\end{subfigure}

\caption{(a-c) Runtime and (d-f) memory vs. $\Delta$.}
\label{fig:runtime-mem-small-Delta}
\end{figure}
\begin{figure*}[htbp]
\begin{subfigure}[b]{0.3\textwidth}
    \includegraphics[width=1.05\textwidth]{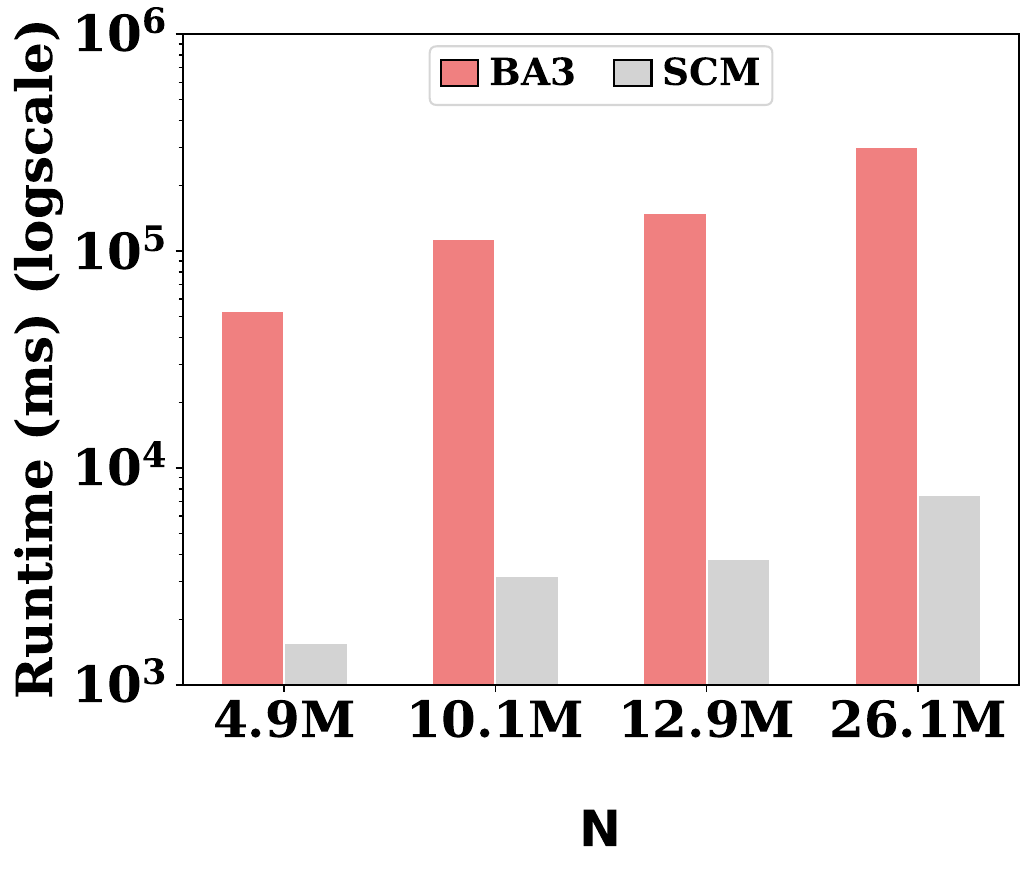}
    \caption{\WebKB}
    \label{fig:webkb_time_n}
\end{subfigure}\hspace{+1mm}
\begin{subfigure}[b]{0.3\textwidth}
    \includegraphics[width=1.05\textwidth]{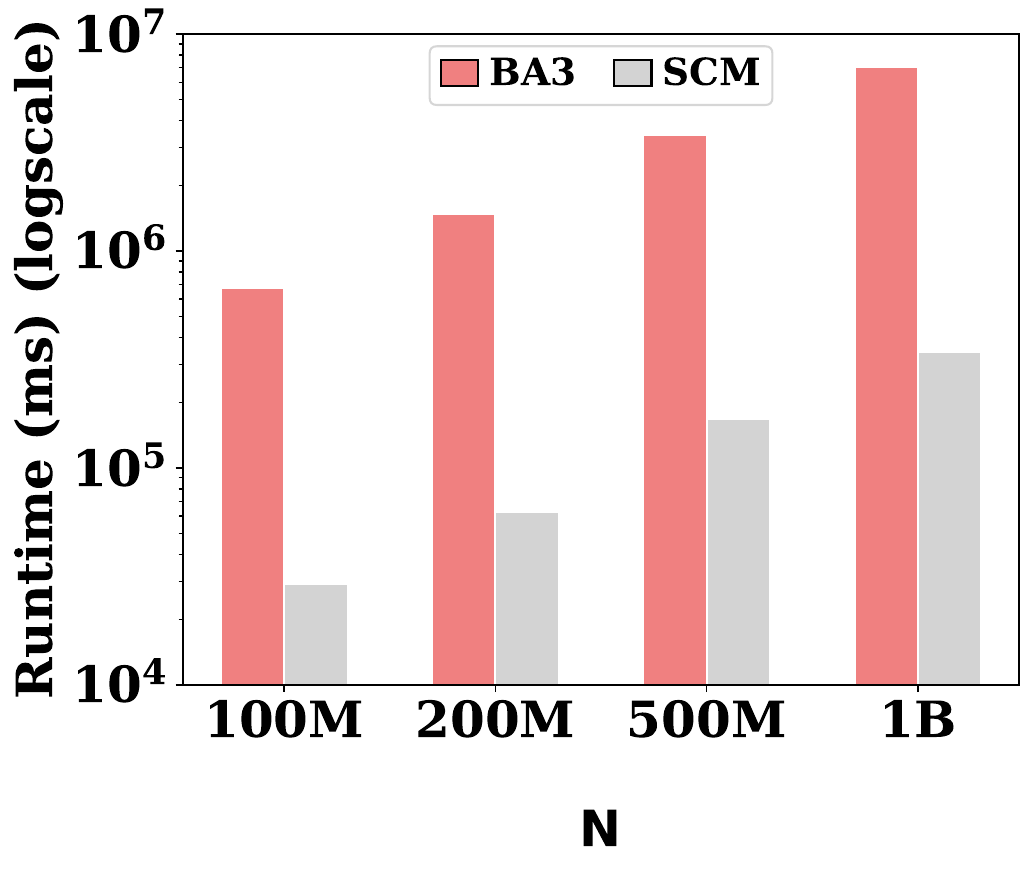}
    \caption{\Genes}
    \label{fig:gene_time_n}
\end{subfigure}\hspace{+1mm}
\begin{subfigure}[b]{0.3\textwidth}
    \includegraphics[width=1.05\textwidth]{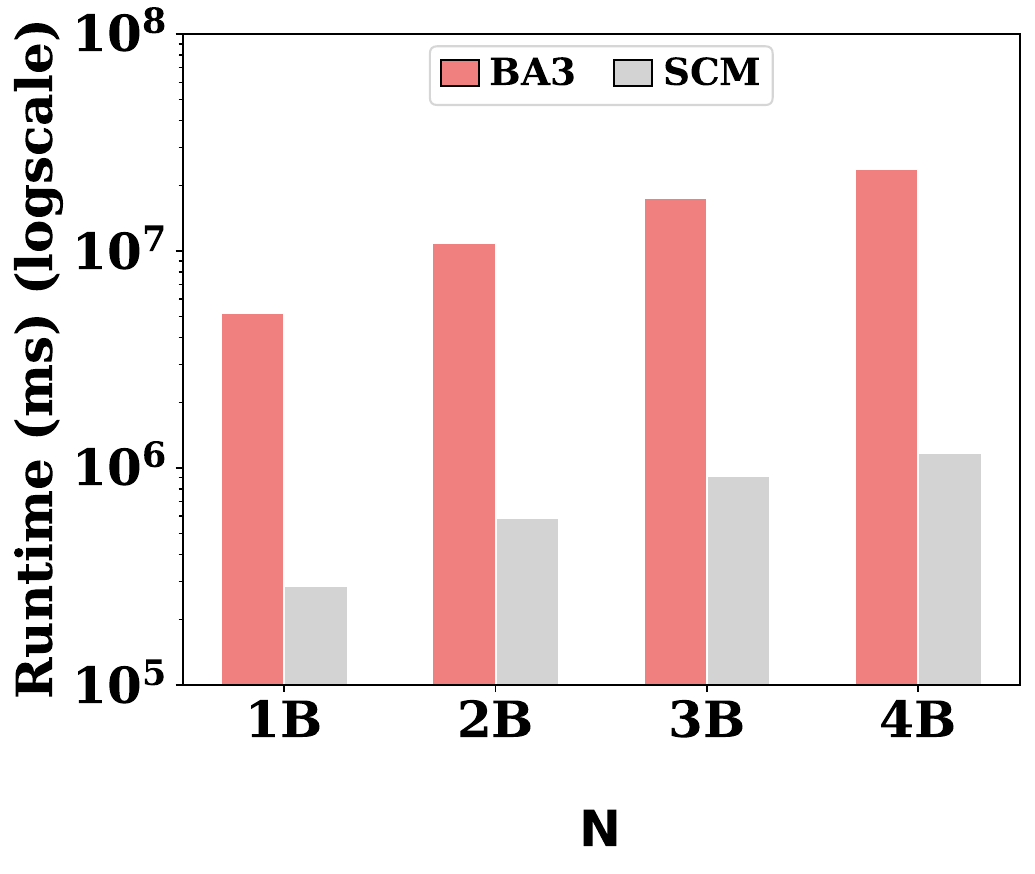}
    \caption{\VIR}
    \label{fig:vir_time_n}
\end{subfigure}\hspace{+1mm}\\
\begin{subfigure}[b]{0.3\textwidth}
    \centering
    \includegraphics[width=1.05\textwidth]{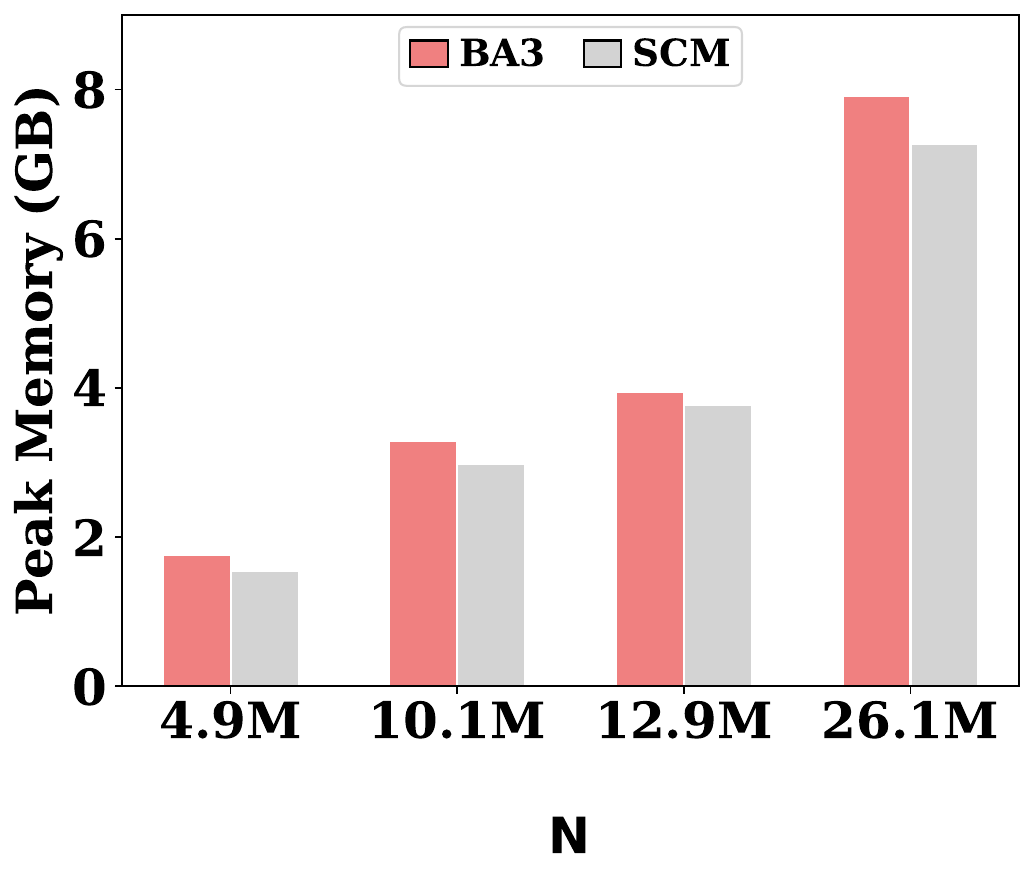}
    \caption{\WebKB}
    \label{fig:webkb_mem_n}
\end{subfigure}\hspace{+1mm}
\begin{subfigure}[b]{0.3\textwidth}
    \centering
    \includegraphics[width=1.05\textwidth]{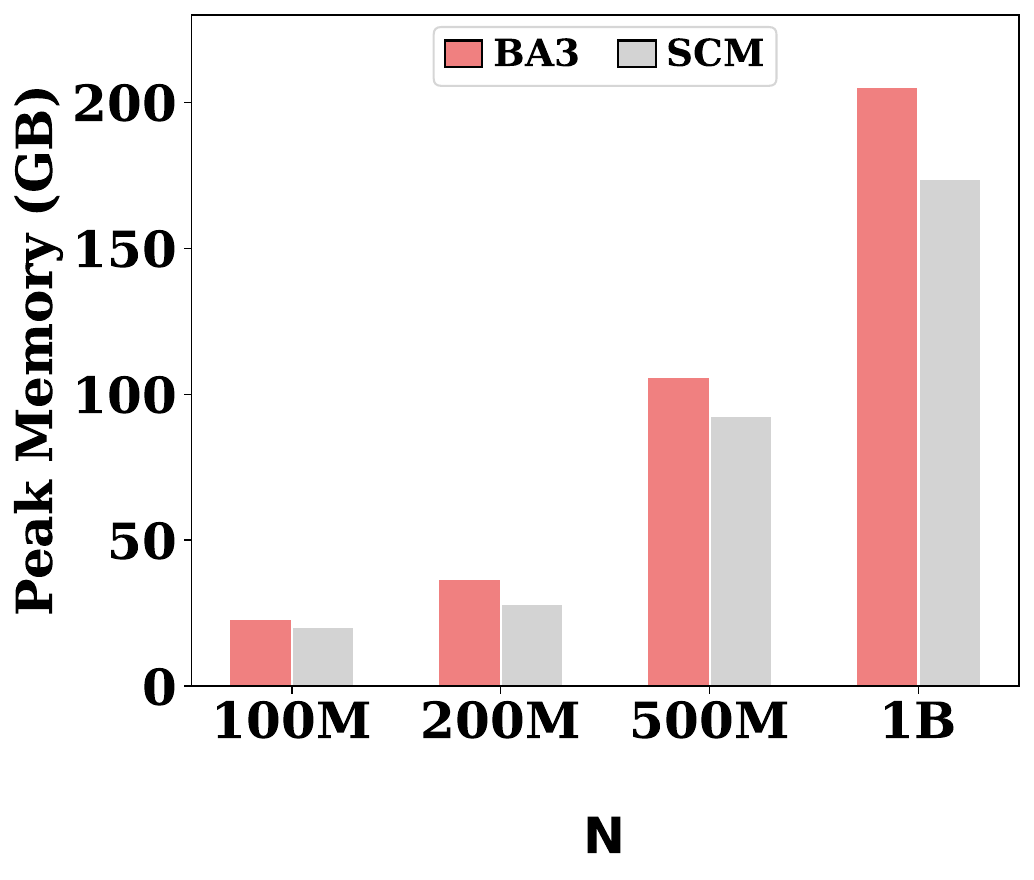}
    \caption{\Genes}
    \label{fig:gene_mem_n}
  \end{subfigure}\hspace{+1mm}
\begin{subfigure}[b]{0.3\textwidth}
    \centering
    \includegraphics[width=1.05\textwidth]{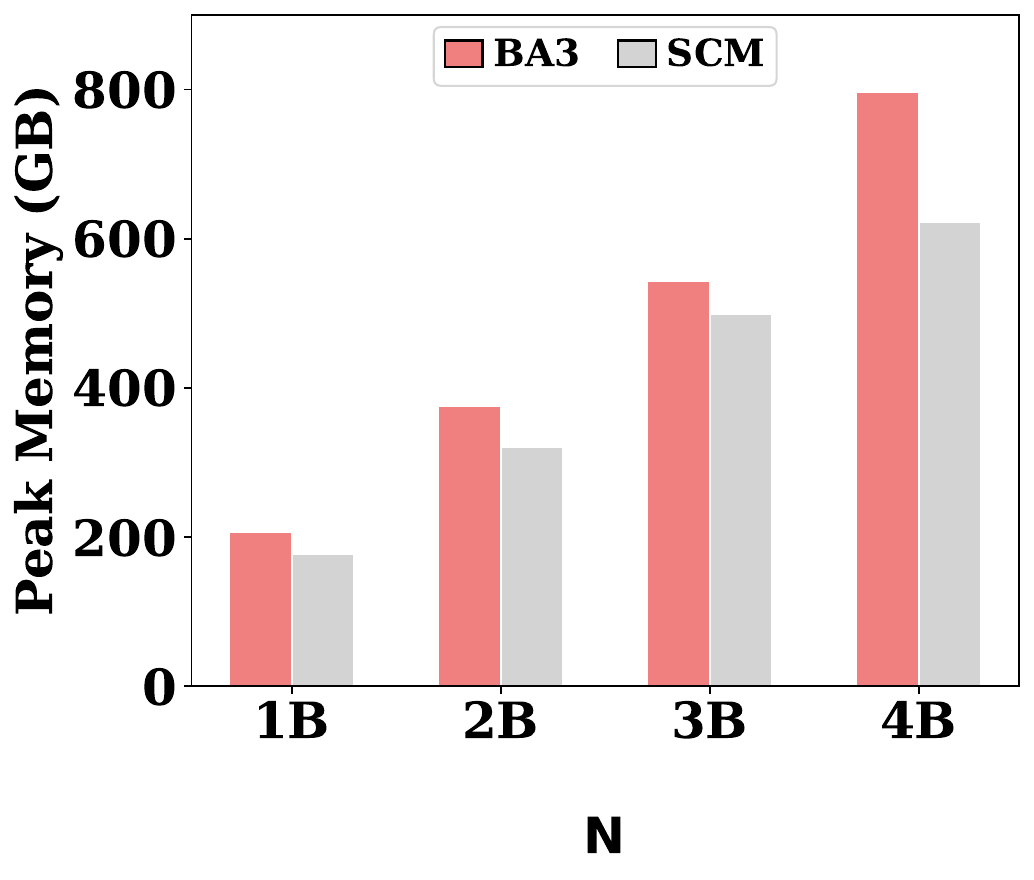}
    \caption{\VIR}
    \label{fig:vir_mem_n}
\end{subfigure}%
\caption{
(a-c) Runtime and (d-f) memory vs. $N$.}
\label{fig:runtime-largeN}
\begin{subfigure}[b]{0.3\textwidth}
    \includegraphics[width=1.05\textwidth]{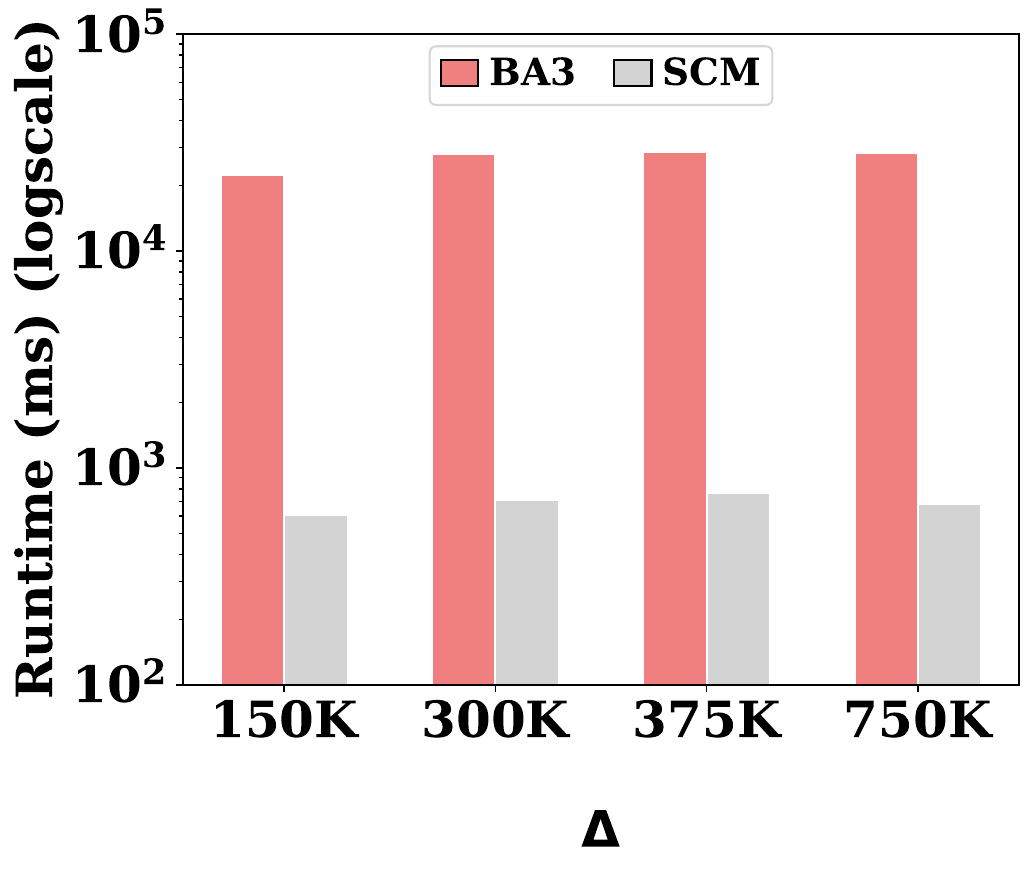}
    \caption{\WebKB}
    \label{fig:webkb_time_delta}
\end{subfigure}\hspace{+1mm}
\begin{subfigure}[b]{0.3\textwidth}
    \includegraphics[width=1.05\textwidth]{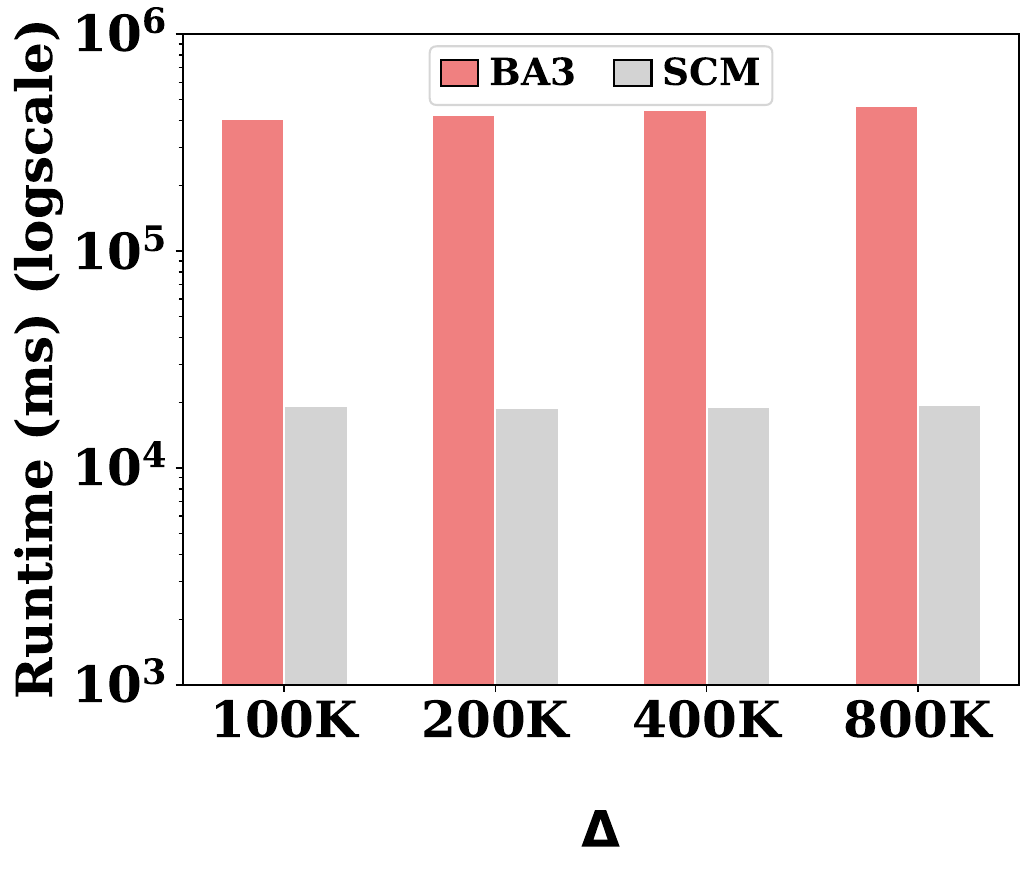}
    \caption{\Genes}
    \label{fig:gene_time_delta}
\end{subfigure}\hspace{+1mm}
\begin{subfigure}[b]{0.3\textwidth}
    \includegraphics[width=1.05\textwidth]{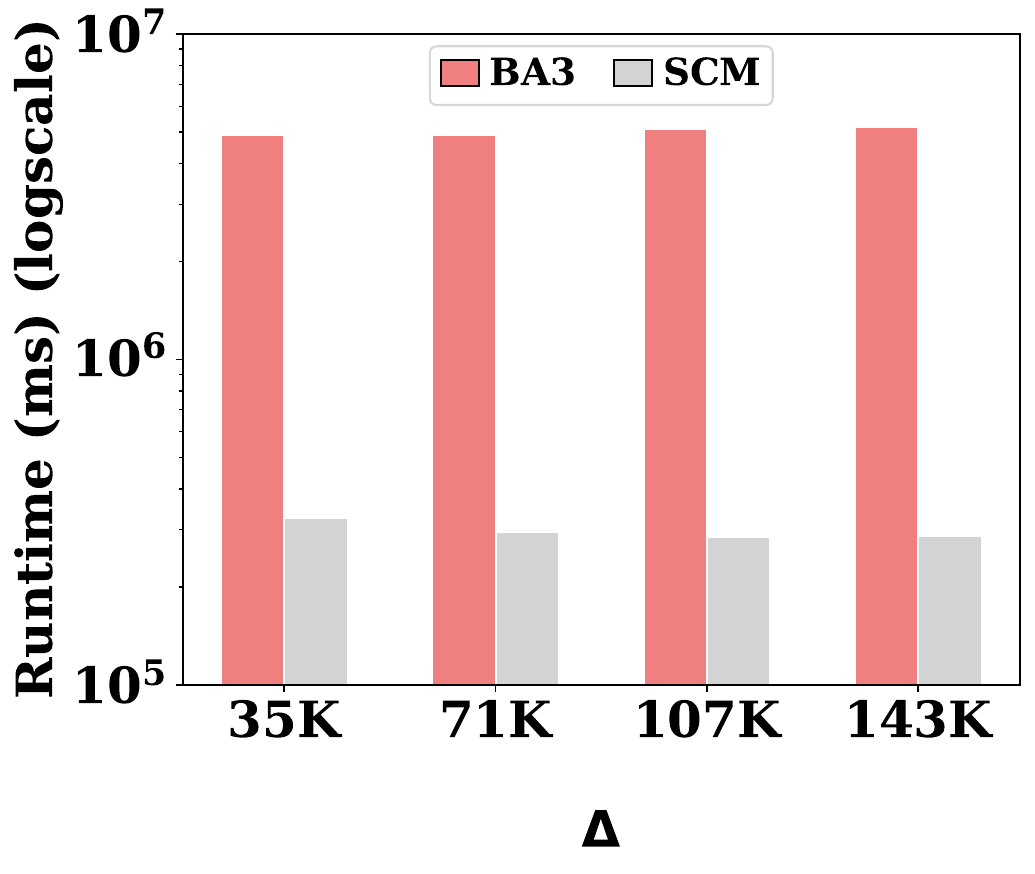}
    \caption{\VIR}
    \label{fig:vir_time_delta}
\end{subfigure}\\
\begin{subfigure}[b]{0.3\textwidth}
    \includegraphics[width=1.05\textwidth]{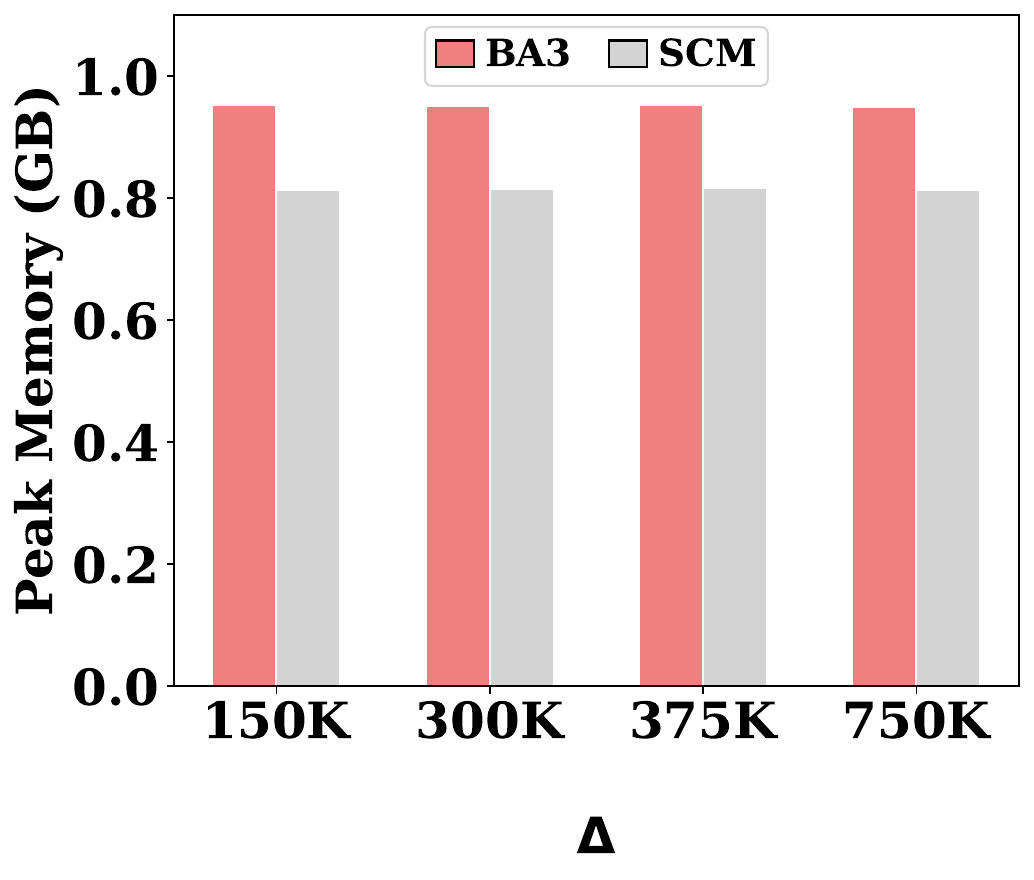}
    \caption{\WebKB}
    \label{fig:webkb_mem_delta}
\end{subfigure}\hspace{+1mm}
\begin{subfigure}[b]{0.3\textwidth}
    \includegraphics[width=1.05\textwidth]{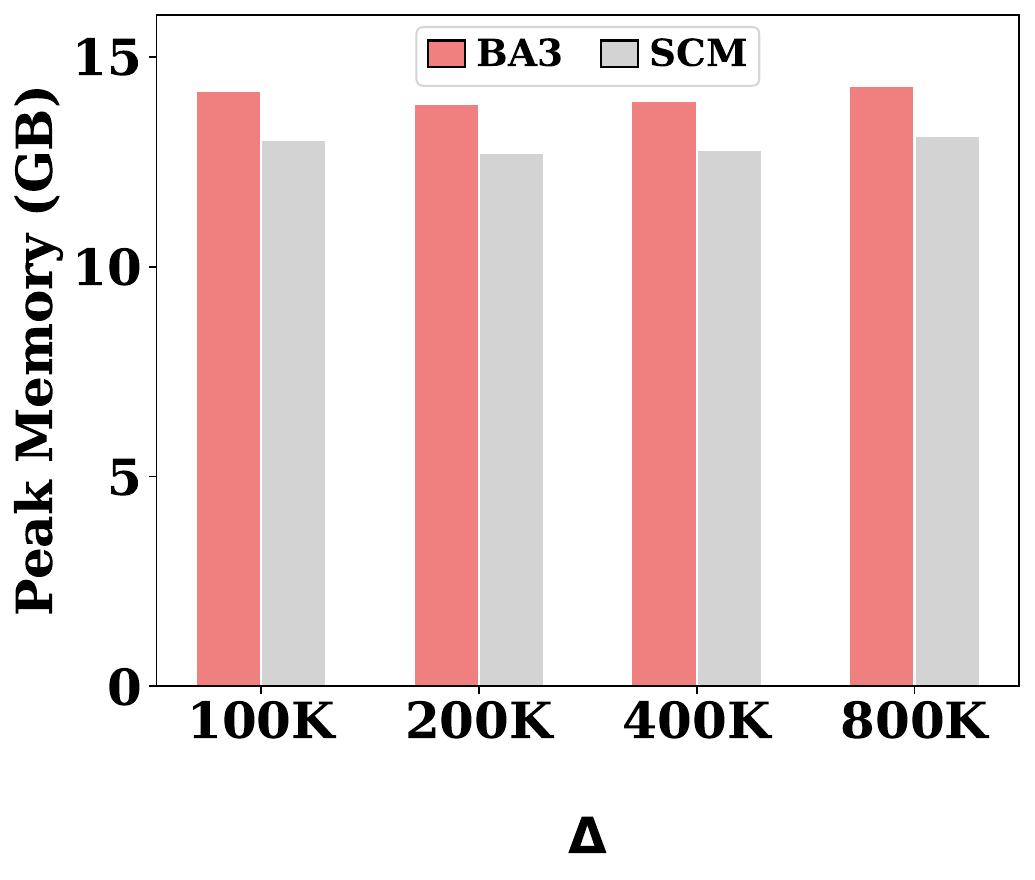}
    \caption{\Genes}
    \label{fig:gene_mem_delta}
\end{subfigure}
\begin{subfigure}[b]{0.3\textwidth}
    \centering
    \includegraphics[width=1.05\textwidth]{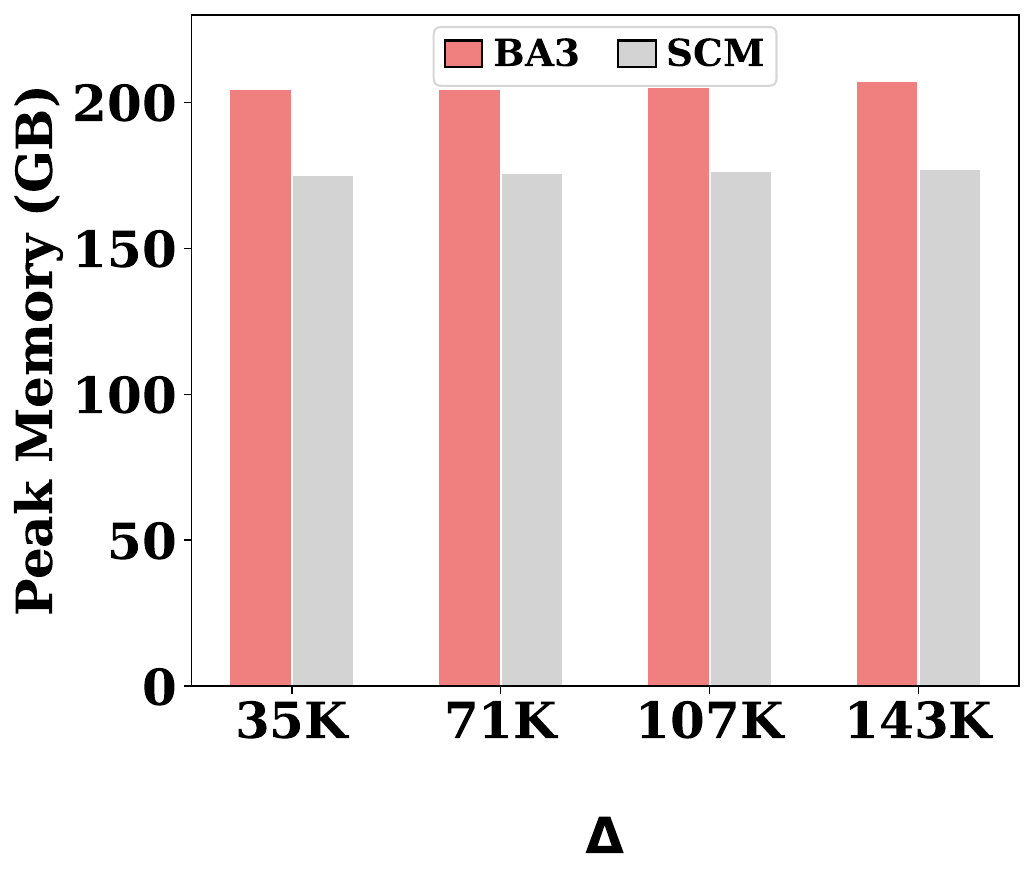}
    \caption{\VIR}
    \label{fig:vir_mem_delta}
\end{subfigure}
\caption{(a-c) Runtime and (d-f) memory vs. $\Delta$.}\label{fig:mem-large}
\end{figure*}

\paragraph{Efficiency on Large Datasets.}~We present results for \SCM and \BAIII, as \BAI and \BAII did not terminate within $48$  hours. The results below are analogous to those for the small datasets. 

\paragraph{Impact of $N$} Figs.~\ref{fig:webkb_time_n} to~\ref{fig:vir_time_n} show the runtime for varying~$N$ and fixed $\Delta$. Both \SCM  and \BAIII scaled linearly with $N$, in line with their time complexities,
but \SCM was  \emph{at least $18$ and up to $40$ times} faster, as \BAIII has an extra $\log \Delta$ term in its time complexity which is $17$ to $20$ depending on the dataset. \SCM is practical; it took less than $20$ minutes when applied to the entire \VIR dataset whose total length is $4.2$ billion letters (and suffix tree has over $7.3$ billion nodes). Figs.~\ref{fig:webkb_mem_n} to~\ref{fig:vir_mem_n} show the peak memory consumption for the experiments of Figs.~\ref{fig:webkb_time_n} to~\ref{fig:vir_time_n}
. \SCM needed \emph{$15\%$ less memory compared to} \BAIII on average, which is in line with the space complexities of the algorithms. This shows again the benefit of the count merging in \SCM compared to the tree merging in \BAIII.

\subparagraph{Impact of $\Delta$.} Figs.~\ref{fig:webkb_time_delta} to~\ref{fig:vir_time_delta} show the runtime for varying~$\Delta$ and fixed $N$; we fixed $N$ by removing letters from each string evenly, so that the remaining strings have total length $N=2,000,000$ for \WebKB,  $N=62,500,000$ for \Genes, and $N=10^9$ for \VIR. Again, \SCM was \emph{faster than \BAIII by at least $15$ and $26$ times on average}. For example, when $\Delta=800,000$ in \Genes (see Fig.~\ref{fig:gene_time_delta}), \SCM took only about $20$ seconds while \BAIII about $8$ minutes. Figs.~\ref{fig:webkb_mem_delta} to~\ref{fig:vir_mem_delta} show the peak memory consumption for the experiments of 
Figs.~\ref{fig:webkb_time_delta} to~\ref{fig:vir_time_delta}.  Again, \SCM was more space-efficient than \BAIII; it needed \emph{at least $9\%$ and up to $17\%$ less memory}. 

\paragraph{Case Study: Uniform Pattern Mining.}~We consider collections whose strings represent different user subpopulations and solve the \CPM problem 
with large $\epsilon$. As will be discussed later, each letter in each string represents a movie genre, book genre, or product price tier, and \CPM discovers \emph{all} substrings with bounded difference in terms of their frequencies in the strings of the collection. There is \emph{no other  restriction} to the type of patterns that can be discovered, which makes \CPM challenging. For example, if one could restrict the length of the mined patterns and/or their items (e.g., to be semantically similar movie genres), then the problem space becomes smaller and other approaches such as those based on inverted indexes may be more suitable. However, actionable patterns (e.g., the pattern comprised of the $4$ non-semantically similar letters/items \texttt{Comedy Drama Romance Comedy} that we mined) may be missed. In our case study, we discover  
patterns revealing behavioral preferences that prevail in these subpopulations. 
For each pattern, we refer to literature supporting that it is in fact prevailing in these subpopulations.

\subparagraph{Datasets.}~We processed three datasets:  \Movielens~\cite{harper2015movielens}, \BookCrossing~\cite{ziegler2005improving}, and \Alibaba~\cite{pei2019value}. The processed datasets are comprised of two strings each, and they can be found at our code and data link. 
Each string corresponds to a different user subpopulation, and in each string  each distinct letter corresponds to a distinct entity (movie genre, book genre, price tier). In \Movielens, one string corresponds to $1,709$ men and the other to $1,709$ women. Each string has length $4,000,033$ and contains genres of rated movies, ordered chronologically. The total number of distinct genres (alphabet size) is $18$. In \BookCrossing,
one string corresponds to $9,164$ teenagers ($10$-$18$ years old) and the other to $4,252$ elderly ($65$-$100$ years old). Each string has length $14,145$ and contains book genres, ordered by users' ratings for book genres from high to low. The genres were found by mapping book ISBNs to genres using ChatGPT-3.5-turbo~\cite{openai2023chatgpt}, which is remarkably good in this task~\cite{raj}.
The total number of distinct book genres (alphabet size) is $10$. In \Alibaba,
one string corresponds to $928,622$ users with high purchase power ($\geq 15$) and the other to $894,770$ users with low purchase power ($<15$). Each string has length $34,651,424$ and contains price tiers of products, ordered in the way users browsed them. The total number of distinct price tiers (alphabet size) is $4$. Specifically, the letters \texttt{L}, \texttt{M}, \texttt{H}, and \texttt{P}, correspond to the Low, Middle, High, and Premium price tier, respectively.

\begin{table*}[t]
\centering
\resizebox{\textwidth}{!}{
\begin{tabular}{llcccl}
\toprule
\textbf{Dataset} & \textbf{Pattern} & \textbf{Frequency in String A} & \textbf{Frequency in String B} & \boldmath{$\epsilon$} & \textbf{Reference} \\ \midrule  
\Movielens & $\seq{Action~Drama~War~Action}$ & 1,551 (Men) & 588 (Women) & 1,000 & \cite{wuhr2017tears, krcmar2005uses} \\ \midrule
\Movielens & $\seq{Comedy~Drama~Romance~Comedy}$ & 899 (Men) & 1,860 (Women) & 1,000 & \cite{wuhr2017tears, infortuna2021inner} \\ \midrule
\Movielens & $\seq{Adventure~Fantasy~Sci\text{-}Fi}$ & 2,008 (Men) & 1,079 (Women) & 1,000 & \cite{wuhr2017tears} \\ \midrule
\BookCrossing & $\seq{Fiction~Mystery~Mystery}$ & 157 (Teenagers) & 256 (Elderly) & 100 & \cite{infortuna2021inner, awriterofhistory_2014, currie2025reading} \\ \midrule 
\BookCrossing & $\seq{Fiction~Fiction~Adventure}$ & 380 (Teenagers) & 102 (Elderly) & 1,000 & \cite{currie2025reading, dubourg2023exploratory} \\ \midrule 
\BookCrossing & $\seq{Fiction~Adventure~Adventure}$ & 316 (Teenagers) & 40 (Elderly) & 1,000 & \cite{infortuna2021inner} \\ \midrule 
\Alibaba & \texttt{H H H} & 1,096,218 (High Purchase Power) & 306,722 (Low Purchase Power) & $10^6$ & \cite{han2010signaling, ngwe2019impact} \\ \midrule 
\Alibaba & \texttt{L L L} & 166,714 (High Purchase Power) & 2,432,915 (Low Purchase Power) & $10^7$ & \cite{ngwe2019impact,  lu2020status} \\ \midrule 
\Alibaba & \texttt{P H P} & 1,340,059 (High Purchase Power) & 155,092 (Low Purchase Power) & $10^7$ & \cite{mundel2017exploratory, berger2010subtle}\\ \bottomrule
\end{tabular}
}
\vspace{+1mm}
\caption{Frequencies of mined patterns across different datasets and user subpopulations, with varying $\epsilon$ thresholds.}\label{tab:within}
\end{table*}

\subparagraph{Uniform Patterns.}~From each dataset, we mined uniform patterns and show the $3$ patterns with the largest difference among their frequencies in the two strings of the dataset. Table~\ref{tab:within} shows the patterns, their frequency in each string of their dataset, the $\epsilon$ value used, and references supporting that indeed each pattern is prevalent in the subpopulation in which it has the largest frequency in our dataset. 
These patterns effectively reveal behavioral differences between different subpopulations. In \Movielens, 
they show which movie genres are preferred mostly by men (first and third pattern) or by women (second pattern). These movie genres are indeed preferred by the respective subpopulations~\cite{wuhr2017tears,krcmar2005uses,infortuna2021inner}. In \BookCrossing, the patterns show which book genres are preferred mostly by elderly (first pattern) or by teenagers (second and third pattern) and indeed this agrees with the literature \cite{awriterofhistory_2014,currie2025reading,infortuna2021inner,dubourg2023exploratory}. Last, in
\Alibaba, the patterns show that customers with high (respectively, low) purchase power browse products in the high or premium price tier (respectively, in the low price tier), and this is again well-supported by the literature~\cite{han2010signaling,ngwe2019impact,lu2020status,mundel2017exploratory,berger2010subtle}.

\paragraph{Case Study: Consistent Query String.}~We solve the \CSQ problem to efficiently distinguish between DNA strings belonging to different biological entities. 

\subparagraph{Dataset.} We used the SARS-CoV-2~\cite{ncbi_sars_cov_2_2024} (\SARS) database as~$\mathcal{S}$. \SARS contains $2,000$ strings, each representing a different genetic variation of SARS-CoV-2. The total length of strings in \SARS is $59,515,733$ and the alphabet size of all these strings is $4$. We randomly selected $50$ strings from \SARS as queries $Q$ and removed them from the database.  
Additionally, we retrieved $50$ Influenza A genomes~\cite{ncbi_influenza_a_2025} (\INFL) and 
sampled $50$ substrings from the human genome~\cite{ncbi_human_genome_2022} (\HUM). Each retrieved string from \INFL and each sampled substring of \HUM was used as a query on \SARS.   
Since the genome size of \INFL is smaller than that of the \SARS virus ($\approx 13$kb and $\approx 30$kb, respectively), we used the entire Influenza A genomes as queries. On the other hand, the genome of \HUM is much larger ($\approx 3$Gb), so we sampled \HUM substrings  that were of very similar length to the \SARS genome.
In particular, the average length of the \SARS, \INFL, and \HUM queries is 29,777, 13,606, and 29,758, respectively, and all queries have roughly the same number of distinct $q$-grams for all tested $q$ values. 
We used $\epsilon \in \{0, 1, 2\}$ and measured the percentage of $\epsilon$-consistent $q$-grams for different length of the query $q\in [4,7]$. 

\begin{figure}[tbp]
\centering
\begin{subfigure}[b]{0.32\columnwidth}
    \centering
    \includegraphics[width=\textwidth]{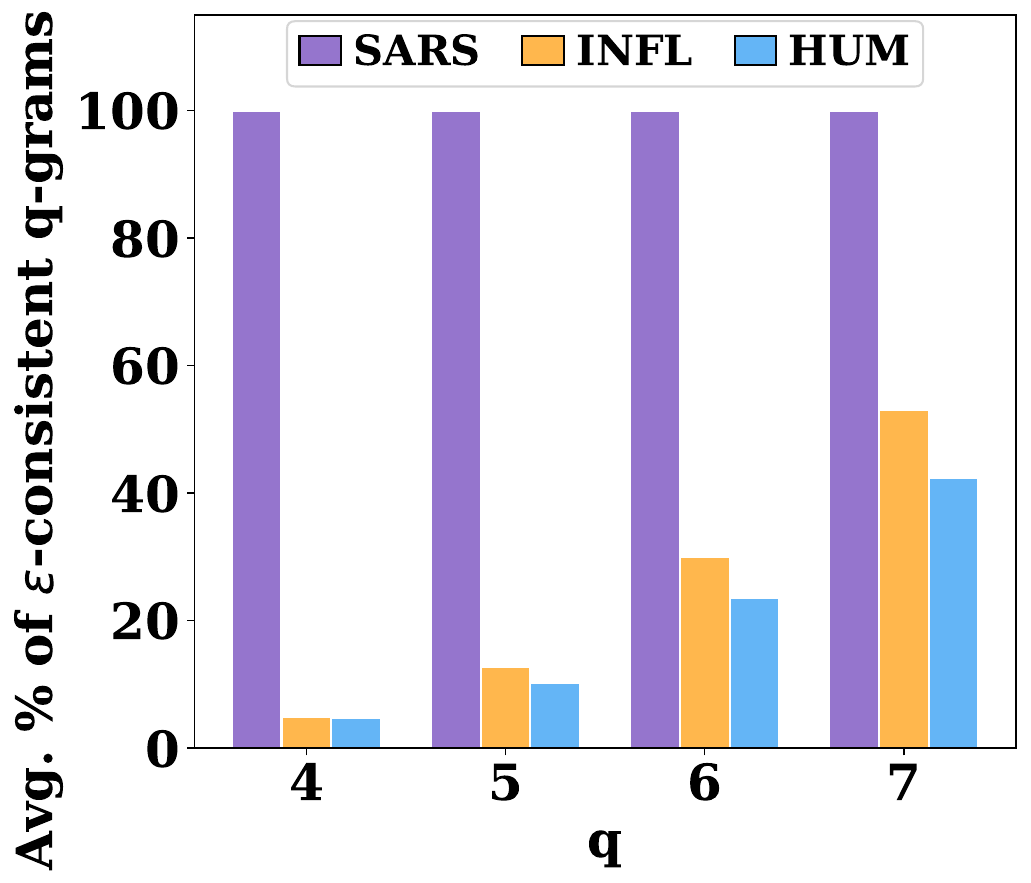}
    \caption{$\epsilon = 0$}
    \label{fig:CSQ_epsilon_0}
\end{subfigure}
\begin{subfigure}[b]{0.32\columnwidth}
    \centering
    \includegraphics[width=\textwidth]{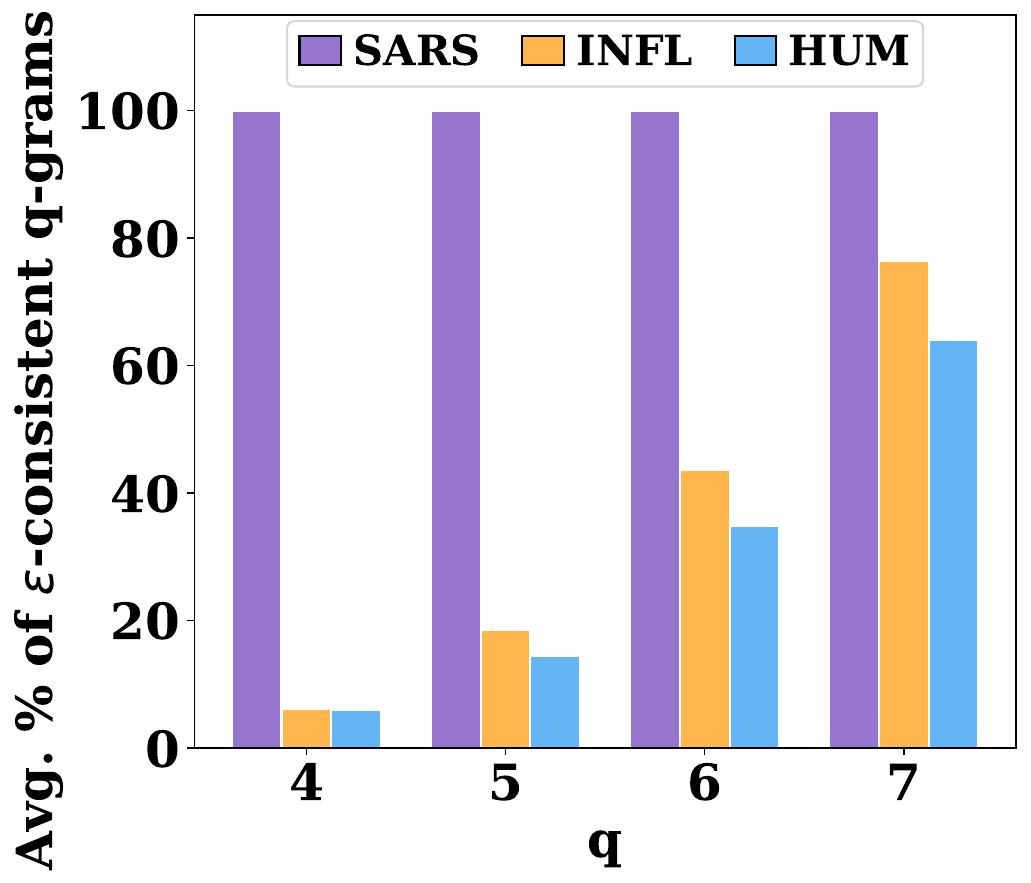}
    \caption{$\epsilon = 1$}
    \label{fig:CSQ_epsilon_1}
\end{subfigure}
\begin{subfigure}[b]{0.32\columnwidth}
    \centering
    \includegraphics[width=\textwidth]{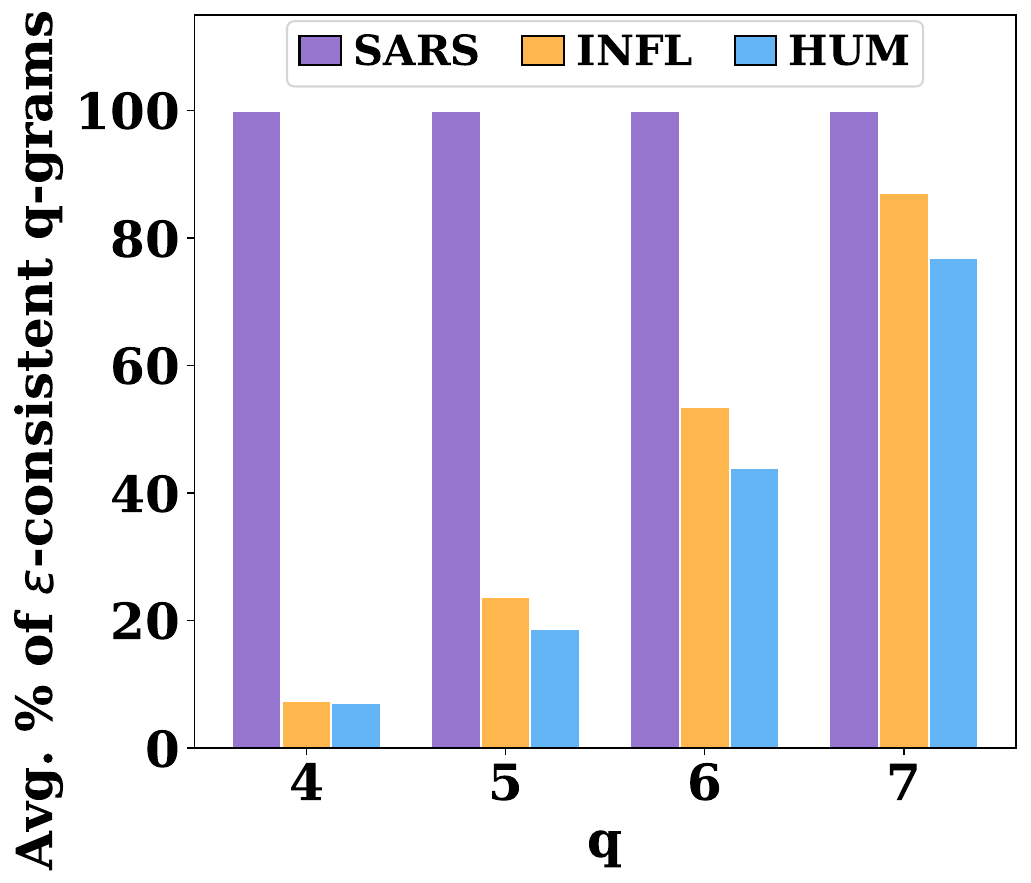}
    \caption{$\epsilon = 2$}
    \label{fig:CSQ_epsilon_2}
\end{subfigure}
\caption{Average percentage of $\epsilon$-consistent $q$-grams for different values of $\epsilon$ using a database of SARS-CoV-2 genomes.}
\label{fig:CSQ_epsilon}
\end{figure}

The results in Fig.~\ref{fig:CSQ_epsilon} show the percentage of $\epsilon$-consistent $q$-grams averaged over the \SARS, \HUM, and \INFL queries. All $q$-grams in the \SARS queries  
are $\epsilon$-consistent for $\epsilon\in \{1, 2\}$ and between $99.5\%$ and $99.8\%$ of them are $\epsilon$-consistent for $\epsilon=0$. 
In contrast, only $4.85\%$ of
$q$-grams on average in the INFL queries are $\epsilon$-consistent for $\epsilon=0$, $6.1\%$ for $\epsilon=1$, and $7.41\%$ for $\epsilon=2$. This shows that indeed the \CSQ problem helps to efficiently distinguish DNA sequences belonging to different biological entities (SARS queries are similar to the strings in the \SARS database unlike \INFL queries). The reason the percentage of $\epsilon$-consistent $q$-grams increases with $\epsilon$ is because the frequency interval in \CSQ gets larger. The percentage of $\epsilon$-consistent $q$-grams also increases with $q$
because the $q$-grams for high $q$ values have low frequency (e.g., $1$ or $2$),  which makes it easier  for them to be $\epsilon$-consistent.
As expected, the number of $\epsilon$-consistent $q$-grams in the HUM queries is even lower  compared to that in the \INFL queries, as the former come from a human while the latter come from a   virus (and the  \SARS database contains the genome of viruses). For example, for $\epsilon=2$ and $q=5$, $18.7\%$ of $q$-grams in the \HUM queries on average are $\epsilon$-consistent while the corresponding percentage for \INFL queries is $23.73\%$. 

We compared our approach to a state-of-the-art method for viral sequence classification~\cite{vir3}, which uses vectors of $q$-gram frequencies as features to train random forests. The method, referred to as \textsf{RF}, was implemented in \textsf{Python} using \textsf{scikit-learn} and configured and trained as in~\cite{vir3}. As \textsf{RF} requires also negative samples (i.e., also \INFL and \HUM strings  in its training set), we used two training sets: (1) $T_1$, comprised of $3,950$ strings; the $1,950$ \SARS strings we used in the experiment of Fig.~\ref{fig:CSQ_epsilon}, \num{1000} randomly selected \INFL strings of average length \num{13208} from~\cite{ncbi_influenza_a_2025}, and \num{1000} randomly selected \HUM strings of average length \num{29758} from the human genome~\cite{ncbi_human_genome_2022}; and (2) $T_2$ constructed from $T_1$ by removing \num{950}, \num{500}, and \num{500} randomly selected \SARS, \INFL, and \HUM  strings, respectively. \SARS strings had a label  \texttt{True} and the others a label \texttt{False}.  Table~\ref{tab:classif1a} shows the time for training \textsf{RF} on $T_1$ or $T_2$ and the time for our approach, for varying $q$. As expected, \textsf{RF} takes more time on $T_1$ and when $q$ is larger, while our approach is insensitive to $q$ and significantly faster. In fact, when we increased the training set size and $q$, our approach was \emph{up to \num{30} times faster}. This can be seen in Table~\ref{tab:classif1b}; the training set $T_3$ for \textsf{RF} was comprised of \num{20000} randomly selected strings from \SARS with average length \num{29773}, which were also used by our approach, and also \num{10000} randomly selected strings from the human genome~\cite{ncbi_human_genome_2022} with average length \num{29758}. Furthermore, \textsf{RF} used \emph{up to \num{3.9} times more memory}. In all experiments, we used the queries of the experiment of Fig.~\ref{fig:CSQ_epsilon} for both \textsf{RF} and our approach, $\epsilon=0$, and classified every query as \SARS when $>99\%$ of its $q$-grams are $\epsilon$-consistent and as not-\SARS otherwise. Both \textsf{RF} and our approach had no misclassification errors. 

\begin{table}[t]
\centering
\begin{subtable}[t]{0.25\linewidth}
\resizebox{1.1\linewidth}{!}{
\begin{tabular}{lccc}
\toprule
\textbf{$q$} & \textsf{RF}  & \textsf{RF} & {Our approach} \\
~ & $T_1$ & $T_2$ & \SARS \\ 
\midrule
$4$ & 87.15  & 46.22  & 16.57 \\
$5$ & 102.06 & 52.97  & 16.57 \\
$6$ & 119.41 & 60.82  & 16.57 \\
$7$ & 154.26 & 77.67  & 16.57 \\
\bottomrule
\end{tabular}}
\caption{}\label{tab:classif1a}
\end{subtable}\hspace{+8mm}
\begin{subtable}[t]{0.25\linewidth}
\resizebox{0.92\linewidth}{!}{
\begin{tabular}{lcc}
\toprule
\textbf{$q$} & \textsf{RF} & {Our approach} \\ 
~ & $T_3$ & $20,000$ \SARS\\\midrule
$6$ & 501.24  & 67.74 \\
$7$ & 657.33 & 67.74 \\
$8$ & 1022.48 & 67.74 \\
$9$ & 2128.33 & 67.74 \\
\bottomrule
\end{tabular}}
\caption{}\label{tab:classif1b}
\end{subtable}
\caption{(a) Runtime (seconds) for \textsf{RF}  trained on $T_1$ or $T_2$ and for \SCM using the \SARS database. (b) Runtime (seconds) for \textsf{RF} trained on $T_3$ and for \SCM using $20,000$ \SARS strings.\label{tab:classif1}}
\label{tab:runtime}
\end{table}

\begin{figure}[htbp]
\hspace{-2mm}
\begin{subfigure}[b]{0.31\columnwidth}
    \includegraphics[width=1\textwidth]{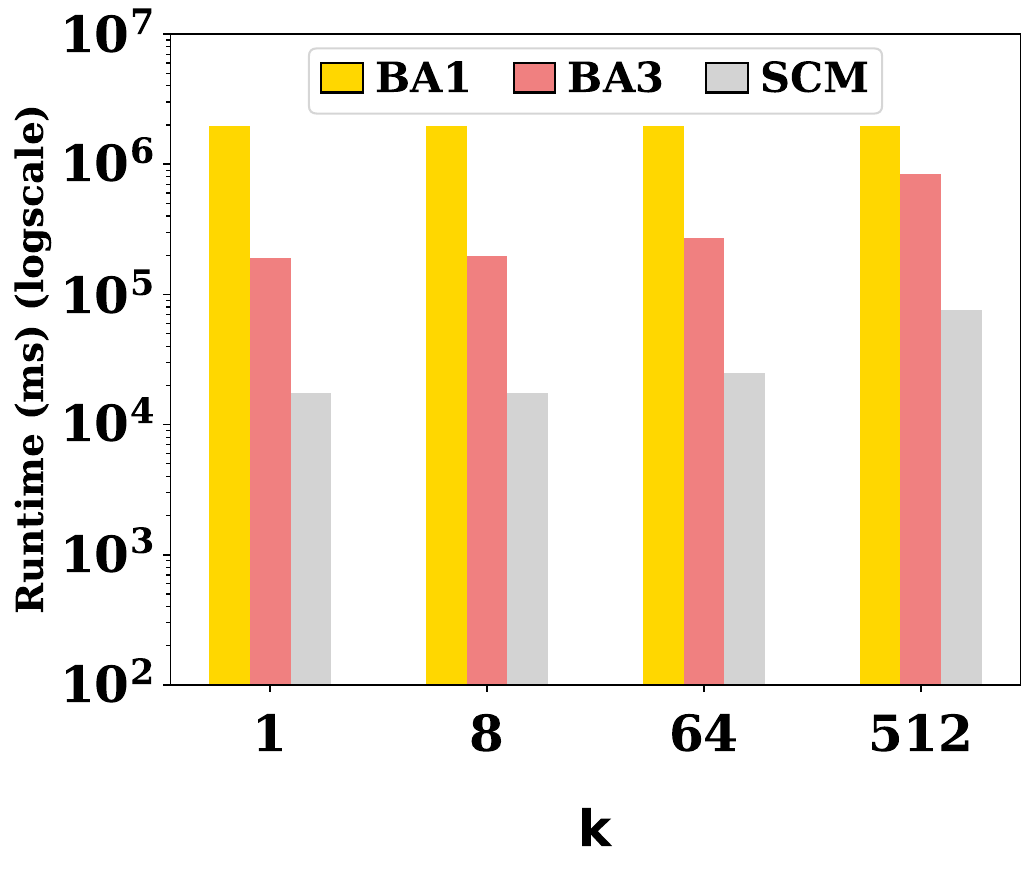}
    \caption{\SARS}
    \label{fig:sars_time_1}
\end{subfigure}\hspace{+1mm}
\begin{subfigure}[b]{0.31\columnwidth}
    \includegraphics[width=1.04\textwidth]{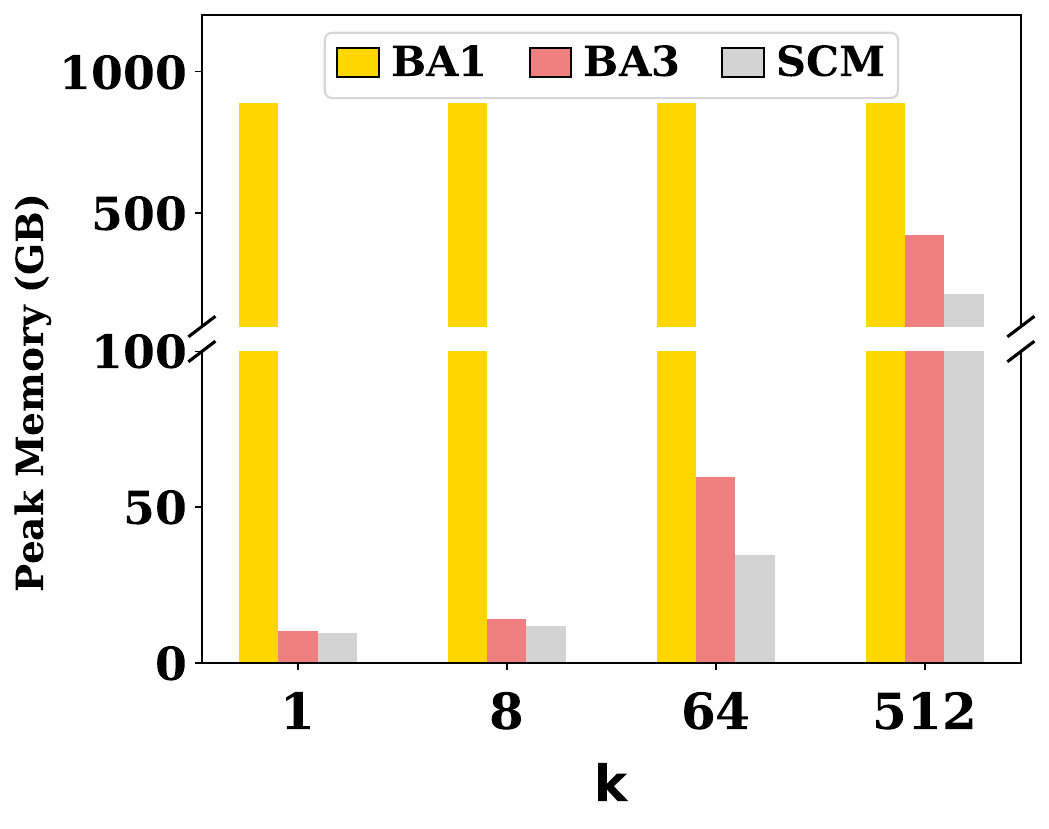}
    \caption{\SARS}
    \label{fig:sars_space_1}
\end{subfigure}\hspace{+1mm}
\begin{subfigure}[b]{0.31\columnwidth}
    \includegraphics[width=1\textwidth]{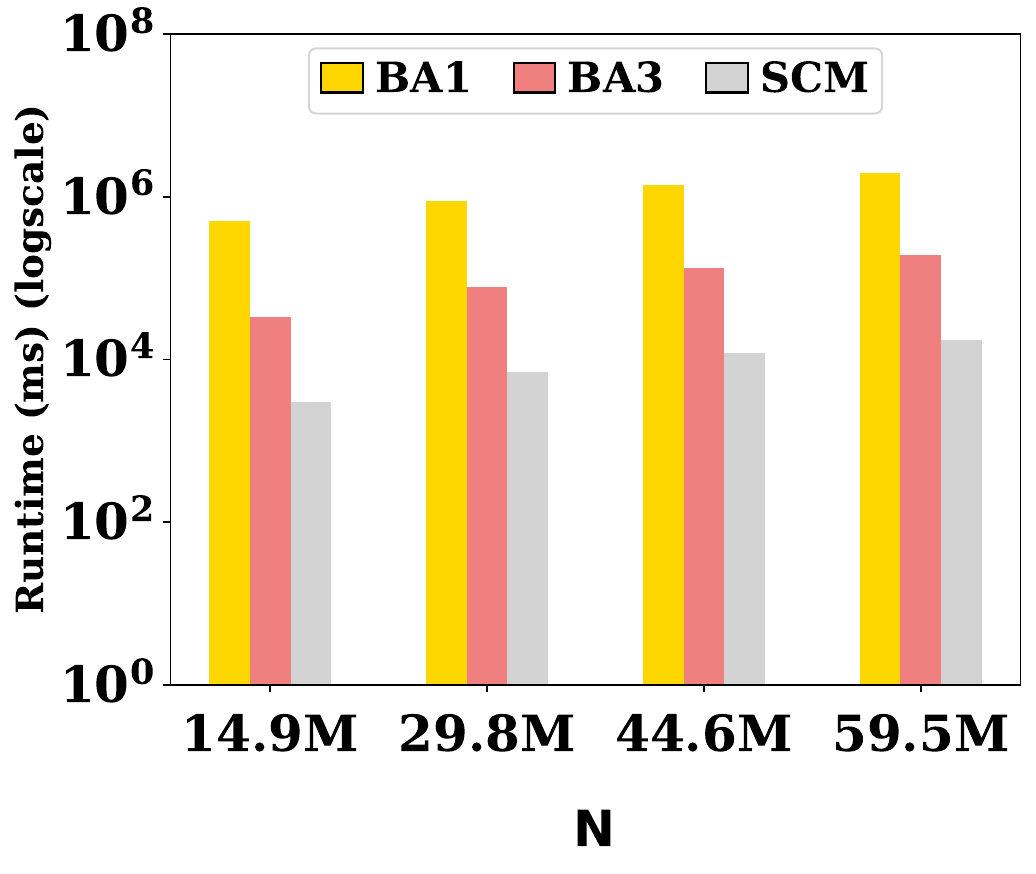}
    \caption{\SARS}
    \label{fig:sars_time_3}
\end{subfigure}\\
\begin{subfigure}[b]{0.31\columnwidth}
    \includegraphics[width=1.06\textwidth]{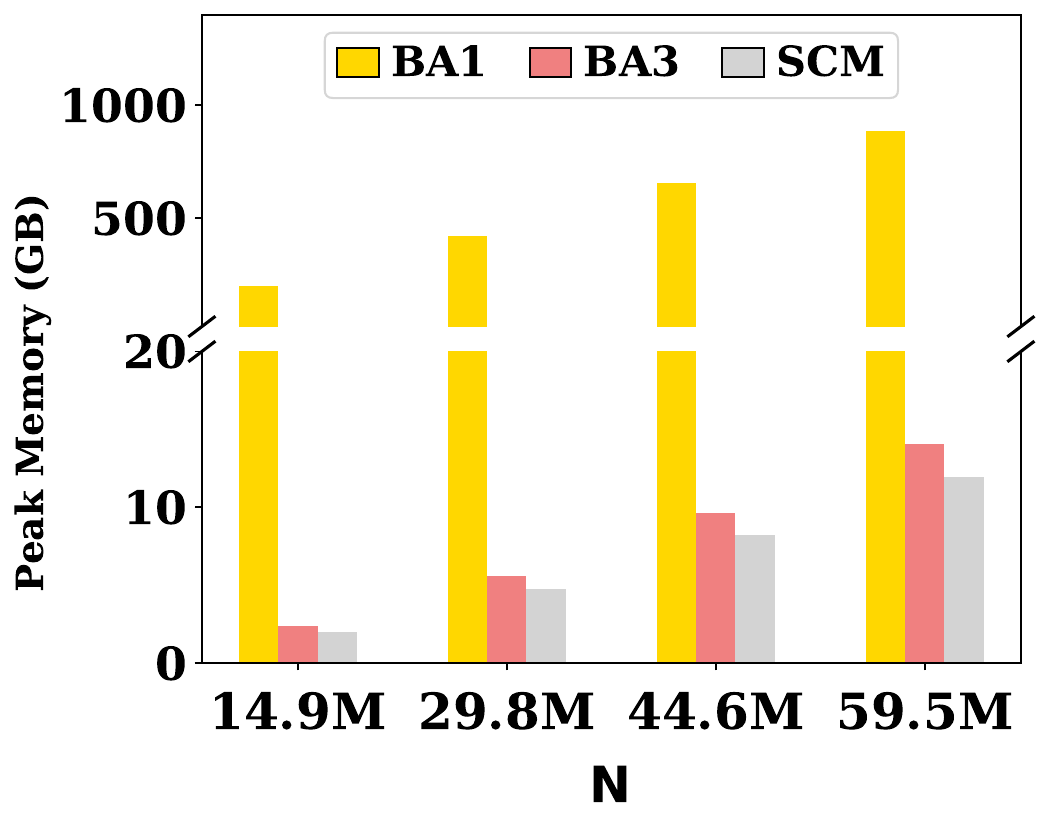}
    \caption{\SARS}
    \label{fig:sars_space_3}
\end{subfigure}\hspace{+1mm}
\begin{subfigure}[b]{0.31\columnwidth}
    \includegraphics[width=0.98\textwidth]{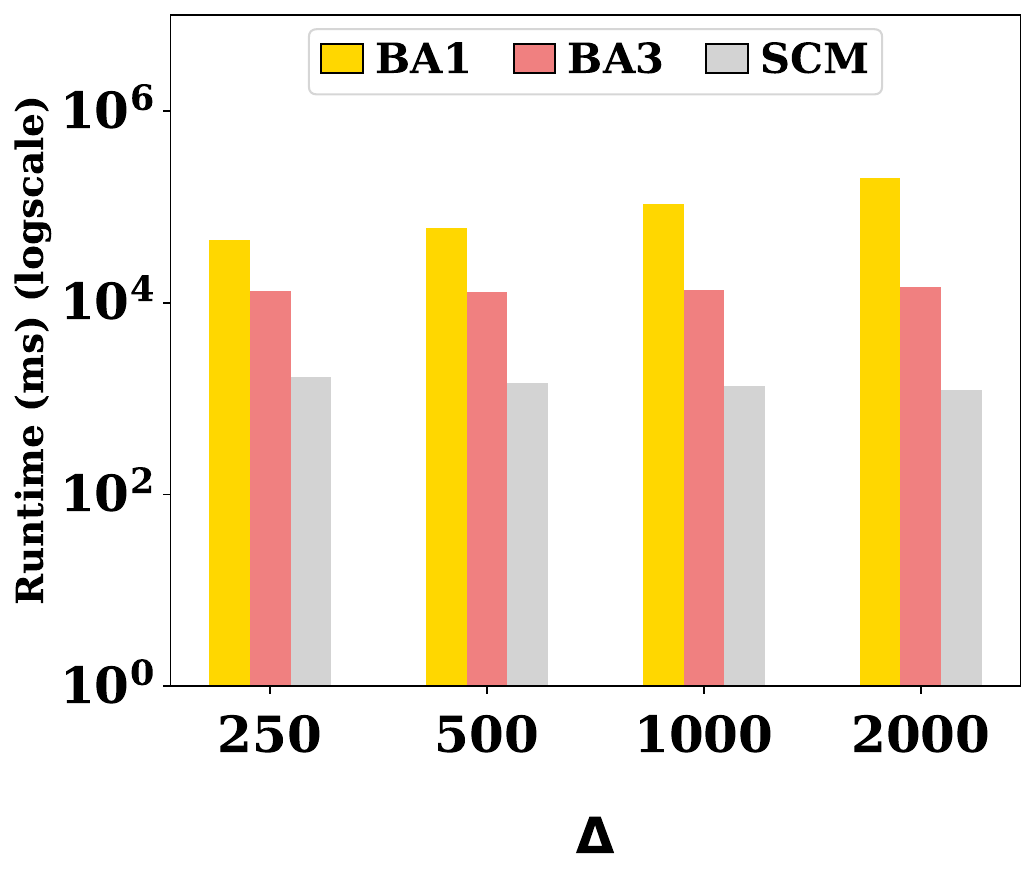}
    \caption{\SARS}
    \label{fig:sars_time_2}
\end{subfigure}\hspace{+1mm}
\begin{subfigure}[b]{0.31\columnwidth}
    \includegraphics[width=1.06\textwidth]{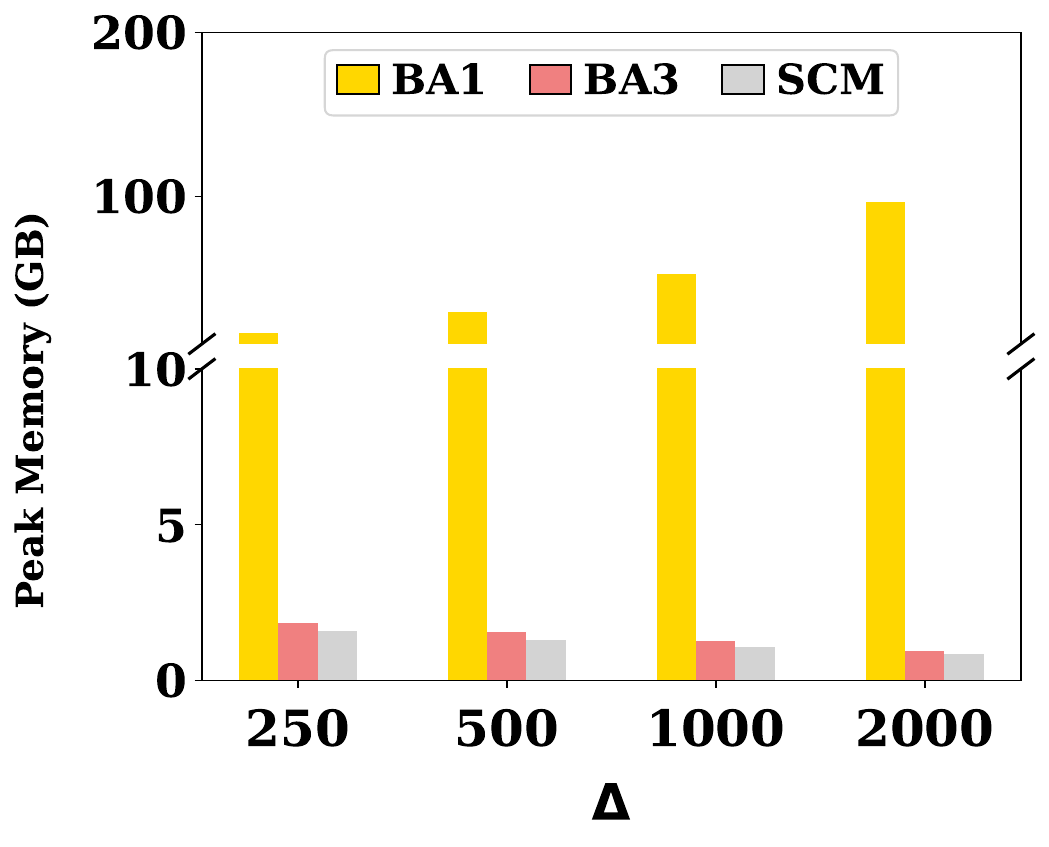}
    \caption{\SARS}
    \label{fig:sars_space_2}
\end{subfigure}\hspace{+1mm}
\caption{Runtime and memory for \SARS vs. (a-b) $k$, (c-d) $N$ for $k=8$, and (e-f) $\Delta$, for $k=8$ and $N=7,439,466$.}
\label{fig:runtime-mem-sars}
\end{figure}

\paragraph{\kDRQ.} 
We applied the modified \SCM algorithm (see \cref{cor:k-most-frequent}) on the SARS database. Recall from Section~\ref{sec:related} that there are no practical methods for \kDRQ (or $k$-\SM). Also, it is unclear if and how these problems could be addressed based on \BAII. Thus, we compared our algorithm to: (1) a modified version of \BAI, which derives the sorted list $(c_1, f_1), \ldots, (c_k, f_k)$ per node $v$ based on array $A_v$ (see \cref{cor:k-most-frequent} and \cref{thm:bl1}), and it takes $\cO(N\Delta)$ time and space, and (2) a modified version of \BAIII following \cref{cor:k-most-frequent} which takes $\cO(N\log \Delta +kN)$ time and $\cO(kN)$ space. Fig.~\ref{fig:sars_time_1} shows that, for all $k$ values, our algorithm is faster by \emph{at least one order of magnitude} than the modified \BAIII and even faster than the modified \BAI. It also uses \emph{at least $6.4\%$ and $75.7\%$ less memory} than the modified \BAIII and \BAI, respectively (see Fig.~\ref{fig:sars_space_1}). For varying $N$ (respectively, $\Delta$), our algorithm is $11$ (respectively, $10$) times faster on average than the fastest baseline (see Figs.~\ref{fig:sars_time_3} and~\ref{fig:sars_time_2}), and it uses $15.3\%$ (respectively, $13.8\%$) less memory on average (see Figs.~\ref{fig:sars_space_3} and~\ref{fig:sars_space_2}). All results are in line with the complexities of the algorithms and show that our algorithm is the most time- and space-efficient.

\paragraph{Phylogenetic Tree Annotation.} We show that our \SCM algorithm helps detecting  candidate genotype-phenotype links.  We used the phylogenetic tree of~\cite[Fig. S1]{10193} which is constructed on \emph{Treponema pallidium} (a syphilis-causing bacterium) data from the study of~\cite{Arora2017OriginSyphilis}. This tree has $75$ leaves, each representing a bacteria strain annotated by: (1)  antibiotic resistance ($3$ colors: \emph{sensitive}, \emph{resistant}, or \emph{n/a}), and (2) SNP state ($10$ colors: \texttt{G}, \texttt{A}, \texttt{T}, \texttt{C}, \texttt{N}, each  \emph{supporting} or \emph{non-supporting}) at a genomic position; there are in total $4$ positions. We applied our algorithm $5$ times; once using the antibiotic resistance coloring, and $4$ using the SNP state colorings. This yielded $5$ (mode, frequency) pairs per \emph{clade} (i.e., internal node that represents similar bacteria strains indicating transmission between patients). The pairs were used to annotate the tree; a part of the tree annotated only by  antibiotic resistance and SNP state for one position is in Fig.~\ref{fig:phylocs}. 
By inspecting the (entire) annotated tree, a domain expert (the fourth author of this paper) identified SNPs whose allele patterns correlate with clades of mostly sensitive or mostly resistant strains. This revealed associations worth further statistical testing (e.g., to find out which mutations corresponding to non-supporting alleles cause antibiotic resistance)~\cite{10193,Arora2017OriginSyphilis}. Our algorithm took less than a second to produce the annotation information, while manual computation of this information with the aid of a state-of-the-art tool~\cite{itol6}  took about $2$ hours. 

\begin{figure}[htbp]
\centering
    \includegraphics[width=0.7\columnwidth]{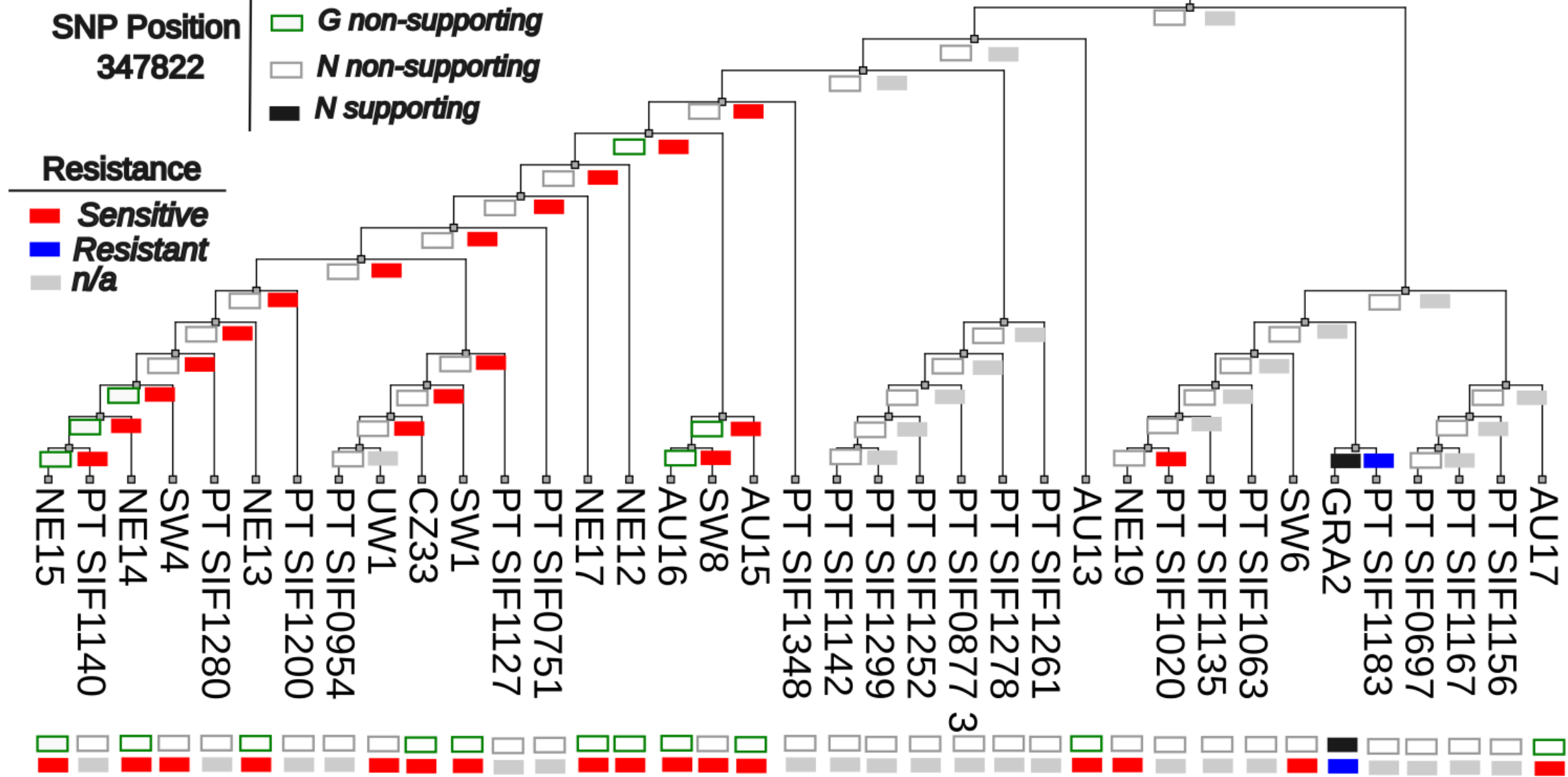}
    \caption{Part of the tree of~\cite[Fig. S1]{10193} annotated with antibiotic resistance status and SNP state for position \num{347822}.}\label{fig:phylocs}
\end{figure}

\section{Conclusion}\label{sec:conclusion}

We introduced the \SM problem and proposed \SCM, a time-optimal $\cO(N)$-time algorithm to solve it. This algorithm forms the basis of time-optimal solutions for document retrieval, pattern mining, and sequence-to-database search applications. We also studied natural generalizations of \SM that work on node-colored trees, or ask for the $k$ most 
frequent colors in the leaves of the subtree of a given node. Furthermore, we proved that the analogous problem to \SM where the input is a sink-colored DAG is 
highly unlikely to be solved as fast as \SM. 
Our experiments showed that \SCM is much faster and space-efficient than two natural baselines and an $\cO(N\log \Delta)$-time variant of it, while it can be used to discover meaningful uniform patterns and to efficiently distinguish between DNA sequences belonging to different biological entities. As future work, we aim to study generalizations of the \SM problem and their applications. Another interesting direction for future work is to study
dynamic versions of \SM, where the underlying tree can be updated (its elements can be changed, inserted, or deleted) and queries should still be answered efficiently.

\printbibliography[title={References}]


\end{document}